\documentclass[12pt]{article}
\usepackage{amsmath}
\usepackage{graphicx,psfrag,epsf}
\usepackage{enumerate}
\usepackage{natbib}
\usepackage{authblk}

\usepackage{amssymb}
\usepackage[colorlinks,citecolor=blue,urlcolor=blue]{hyperref}
\usepackage{mathrsfs}
\usepackage{amsfonts,anysize,amsmath,indentfirst,geometry,xcolor}
\usepackage{bm}
\usepackage{amsthm}
\usepackage{multirow}
\usepackage{chngpage}
\usepackage{array}
\usepackage{rotating}
\usepackage{setspace}
\usepackage{graphicx}
\usepackage{algorithm}
\usepackage{algorithmic}
\usepackage{epsfig}
\usepackage{amsbsy}
\usepackage{bbm}

\usepackage{booktabs}
\usepackage{hyperref}
\allowdisplaybreaks

\newtheorem{thm}{\bf Theorem}

\newtheorem{lem}{\underline{\bf Lemma}}
\newtheorem{corollary}{\bf Corollary}

\newtheorem{remark}{\underline{\bf Remark}}

\def\boxit#1{\vbox{\hrule\hbox{\vrule\kern6pt\vbox{\kern6pt#1\kern6pt}\kern6pt\vrule}\hrule}}
\def\bqe{\begin{eqnarray}}
	\def\eqe{\end{eqnarray}}

\newcommand{\blind}{0}

\begin{document}
\doublespacing

\def\spacingset#1{\renewcommand{\baselinestretch}%
	{#1}\small\normalsize} \spacingset{1}


\if0\blind
{
	\title{\bf Spatially Varying Coefficient Mallows Model Averaging}
	\author[1]{Yong Zhuang${^*}$}
    \author[1]{Jing Lv\footnote{Zhuang and Lv are cofirst authors and contribute equally.}} 
    \author[1]{Tingting Li\thanks{Corresponding author, email: tinalee@swu.edu.cn.}}
	\affil[1]{\small School of Mathematics and Statistics, Southwest University, Chongqing, China}
	
	\maketitle
} \fi


\begin{abstract}
	Model averaging, as an appealing ensemble technique, strategically integrates all valuable information from candidate models to construct fast and accurate prediction. Despite of having been widely practiced in many fields such as cross-sectional data, censored data and longitudinal data, its application to spatial data characterized by inherent spatial heterogeneity remains surprisingly limited. To mitigate risk of model misspecification and enhance the flexibility of prediction, we propose a combined estimator constructed by computing the weighted average of estimators derived from a set of spatially varying coefficient candidate models. Herein, the model weights are determined via a Mallows-type criterion, which dynamically calibrates the relative importance of individual candidate models in the ensemble. Theoretically, we establish desirable asymptotic properties under
    two practical scenarios. First, in the case where all candidate models are misspecified, the proposed model averaging estimator attains asymptotic optimality in the sense that it minimizes the squared error loss function asymptotically. Second, when the candidate model set encompasses at least one quasi-correct model, the weights assigned by the Mallows-type criterion asymptotically concentrate on the quasi-correct models, and the resulting model averaging estimator converges in probability to the true conditional mean.
	Both simulation studies and a real-world empirical example demonstrate that the proposed method generally outperforms alternative comparative approaches in terms of predictive accuracy and robustness.

\end{abstract}

{\it Keywords:}
Mallows criterion; Model averaging; Spatial heterogeneity; Spatially varying coefficient model.
\\

\spacingset{1.45}
\section{Introduction}\label{sec:intro}

Spatial data are ubiquitous in fields such as environmental, earth, and biological sciences. When building statistical models from observed data, the primary tasks are to explore the relationship between the response and covariates within the region of interest and to make predictions based on this relationship.
According to Tobler's well-known ``First Law of Geography", everything is related to everything else, but near things are more related than distant things\citep{Tobler}. This implies that relationships among variables in geographical data often exhibit stronger similarity at nearby locations, thereby demonstrating apparent spatial dependence. 
There are two primary approaches to modeling spatial dependence. The first category consists of global spatial autocorrelation models, with the spatial autoregressive (SAR) model\citep{Anselin1988} as the typical representative. These models assume that the spatial correlation structure is homogeneous across the entire space, meaning that the spatial interaction effect is consistent regardless of geographical location.
The second category corresponds to local spatial models, among which the geographically weighted regression model\citep{BFC98}—also referred to as the spatial varying-coefficient model (SVCM)—serves as the hallmark example. Distinct from global models, SVCM relaxes the assumption of spatial stationarity by allowing regression coefficients to vary spatially, thus capturing the non-stationary spatial relationships between variables across different locations.
Due to the flexibility and interpretability of the SVCM, it has received much attention in recent years.  For example, \cite{MWW18} studied the estimation and inference in the SVCM for data distributed over complex domains. \cite{KW21} extended the results of \cite{MWW18} to exponential dispersion families of distributions (i.e., binomial, Poisson, and negative binomial). To address the challenge arising from large-scale spatial datasets, \cite{YWW25} proposed a novel distributed heterogeneity learning algorithm, which has a simple, scalable, and communication-efficient implementation, nearly achieving linear speedup. Under the framework of quantile regression, \cite{KWW25} considered the problem of estimation and inference for the SVCM. These existing methodologies significantly enhance the model's ability to explain spatial heterogeneity.

An important objective in spatial data analysis is to leverage the rich information embedded in observed samples so as to optimize the accuracy of model-based predictions. Although the aforementioned literature have made significant contributions to statistical inference for the SVCM, they may exhibit poor prediction performance owing to their reliance on a single model. It is widely recognized that using a single model may lead to severe model misspecification and overlook valuable information embedded in other models, resulting in unsatisfactory prediction performance.
Model averaging, as a well-known ensemble technique, takes potential candidate models
into account and allocates a suitable weight to each for producing a combined estimator.
Therefore, it significantly mitigates the risk of model misspecification and typically yields more stable predictions than a single model.
Over the past few decades, we have witnessed a booming development of model averaging in many fields, see \cite{H2007,WZZ2010,HR2012,LZZZ19,ZWZZ19,ZZLC20,LLWL22,seng2022,LL2024,GZ2024,JLLC2025}. However, all the aforementioned works on model averaging focus on the independent and identically distributed data, single time series data, multi-categorical data or longitudinal data, and hence cannot be directly applied to make prediction for spatial data. So far, recent model averaging approaches to cope with spatial data include \cite{ZY2018,LZG2019,YDT2022,MFZW2025, JLL2025}, but these works are mostly built on SAR-type models, rather than the SVCM.
The SVCM has been developed to address spatial heterogeneity, a crucial consideration across numerous disciplines, including environmental research, epidemiology, agriculture, urban planning, climate science, and spatial econometrics \citep{WFMWQ2014,CWLZH18}.
Consequently, it is theoretically and practically significant to explore a new model averaging scheme by using the SVCM. So far, discussion on model averaging for the SVCM is rather limited.
In this work, we exploit a spatially varying coefficient model averaging procedure to achieve the goal of flexible prediction for the conditional mean of the response given covariates.

Compared with existing literature on model averaging and spatial modeling, this article offers three major contributions:
\begin{itemize} 
	
	\item In contrast to existing single-SVCM-based frameworks \citep{MWW18,KW21,KWW25}, the proposed model averaging method significantly alleviates the risk of model misspecification while achieving more flexible and accurate predictive outcomes. Empirical application for analysis of the soil total nitrogen from the Liudaogou watershed in Loess Plateau, Shaanxi Province, China in Section 5 show clearly that our model averaging proposal has lower prediction risk than competing methods.
    
    \item Compared to the established model averaging approaches for spatial data such as \cite{ZY2018,LZG2019,YDT2022,MFZW2025, JLL2025}, which were typically centered on diverse adjacency matrices and thus were essentially oriented toward addressing model structural uncertainty in global spatial regression frameworks, the proposed spatial varying coefficient model averaging method focuses on local spatial regression paradigms and is dedicated to resolving the uncertainties inherent in spatial heterogeneity analysis. Although optimal model averaging for varying coefficient models has been explored in several studies \citep{ZZLC20, LLWL22, GZ2024}, these works have focused on other types of data (e.g., cross-sectional data and multi-categorical data) rather than spatial data. To the best of our knowledge, there are no literature to investigate the issue of model averaging for the SVCM. Our work serves as a promising estimation framework for spatial data and fills this gap timely.
	
	\item We provide systematic theoretical guarantees for our proposal.
	When all candidate models are misspecified, we prove that the proposed model averaging estimator possesses asymptotic optimality, in the sense that its squared error loss asymptotically coincides with that of the theoretically best but infeasible model averaging estimator. The theoretical property guarantees that the proposed procedure exhibits strong predictive performance. When there is at least one quasi-correct model in the candidate model
	set, the sum of the weights determined by the Mallows-type criterion on the quasi-correct models converges to one in probability, which is named as the over-consistency property of model weights. Despite this property has been investigated by existing findings such as \citep{ZY2018,ZZLC20}, we further extend
	this property to spatial data.
	Further, the consistency of the proposed model averaging estimator is also established, that is, it converges to the true conditional mean with probability tending to one as the sample size increases to infinity.
	
	\end{itemize}

The remainder of this paper is organized as follows. In Section 2, we build a series of 
sub-SVCMs and introduce the Mallows-type criterion to determine the optimal model weights.
In Section 3, we give the asymptotic results. In Section 4, we carry out simulation studies to evaluate the practical performance of the proposed method and compare its performance with that of existing alternative methods. In Section 5,
a soil data set is analyzed to further illustrate the utility of our proposal. Discussions and concluding remarks are given in Section 6.
The poofs of the theorems and corollaries are provided in the Appendix.

\section{Methodology}
\subsection{Model Framework}

Consider spatial data observed at locations \( \mathbf{s}_1, \ldots, \mathbf{s}_n \in D \), denoted as \( \left\{ \left( \mathbf{X}(\mathbf{s}_i), y(\mathbf{s}_i) \right)\right.,\\ \left.i \in \{1, \ldots, n\} \right\} \), where \( D \subset \mathbb{R}^2 \) is a fixed region of interest, the covariate vector \( \mathbf{X}(\mathbf{s}_i) = \left( x_1(\mathbf{s}_i), \ldots, x_p(\mathbf{s}_i) \right)^{\mathrm{T}} \), and \( x_1(\mathbf{s}_i) = 1 \). The underlying spatial process is assumed to follow the model:
\begin{equation}\label{eq1}
	y(\mathbf{s}_i) = \sum_{k=1}^{p} x_k(\mathbf{s}_i) \beta_k(\mathbf{s}_i) + \epsilon(\mathbf{s}_i), \quad i=1,\ldots,n,
\end{equation}
where the regression coefficient \( \beta_k(\mathbf{s}_i) \) is the value of a smooth function \( \beta_k(\mathbf{s}) \) at location \( \mathbf{s}_i \), \( \bm{\beta}(\mathbf{s}) = (\beta_1(\mathbf{s}), \ldots, \beta_p(\mathbf{s}))^{\mathrm{T}} \) is the vector of unknown coefficient functions, and \( \{\epsilon(\mathbf{s}_i)\}_{i=1}^{n} \) are independent random errors satisfying \( E(\epsilon(\mathbf{s}_i) \mid \mathbf{X}(\mathbf{s}_i), \mathbf{s}_i) = 0 \) and \( E(\epsilon^2(\mathbf{s}_i) \mid \mathbf{X}(\mathbf{s}_i), \mathbf{s}_i) = \sigma^2 \). The spatial dependence in model \eqref{eq1} is typically captured by the spatially varying coefficients. The dimension \( p \) of the covariates can be divergent. For simplicity, the dependence on \( \mathbf{s} \) is omitted in the subsequent notation, letting \( y(\mathbf{s}_i) = y_i \), \( \mathbf{X}(\mathbf{s}_i) = \mathbf{X}_i = (x_{i1}, \ldots, x_{ip})^{\mathrm{T}} \), and \( \epsilon(\mathbf{s}_i) = \epsilon_i \).

Model \eqref{eq1} can be equivalently expressed in matrix form as
\begin{equation}\label{eq2}
	\mathbf{Y} = \bm{\mu} + \bm{\epsilon},
\end{equation}
where \(\mathbf{Y} = (y_1, \ldots, y_n)^{\mathrm{T}}\) is the response vector,  
\(\mathbf{X} = (\mathbf{X}_1, \ldots, \mathbf{X}_n)^{\mathrm{T}}\) is the covariate matrix,  
and \(\bm{\mu} = (\mu_1,\ldots,\mu_n)^{\mathrm{T}} = \bigl( \mathbf{X}_1^{\mathrm{T}} \bm{\beta}(\mathbf{s}_1), \ldots, 
\mathbf{X}_n^{\mathrm{T}} \bm{\beta}(\mathbf{s}_n) \bigr)^{\mathrm{T}}\) depends on 
the covariates \(\mathbf{X}\) and the spatial locations \(\mathbf{s}_i\).  
Denote \(\bm{\Psi} = (\mathbf{s}_1, \ldots, \mathbf{s}_n)^{\mathrm{T}}\).  
The error vector \(\bm{\epsilon} = (\epsilon_1, \ldots, \epsilon_n)^{\mathrm{T}}\) satisfies
\(E(\bm{\epsilon}\mid\mathbf{X}, \bm{\Psi}) = \mathbf{0}\) and 
\(E(\bm{\epsilon} \bm{\epsilon}^{\mathrm{T}}\mid\mathbf{X}, \bm{\Psi}) = \bm{\Omega} = \sigma^2 \mathbf{I}_n\), 
with \(\mathbf{I}_n\) being the \(n\)-dimensional identity matrix.

Due to the uncertainty in covariates, suppose there are $M$ candidate models, where $M$ is allowed to diverge to infinity as $n$ tends to infinity. Each model includes a distinct set of covariates, with the dimension of covariates denoted by $p_m$. The $m$th candidate model is given by
\begin{equation}\label{eq3}
	\mathbf{Y} = \bm{\mu}_{(m)} + \bm{\epsilon}_{(m)}, \quad m = 1,\dots,M,
\end{equation}
where $\bm{\mu}_{(m)} = \left(\mathbf{X}_{(m)1}^{\mathrm{T}}\bm{\beta}_{(m)}(\mathbf{s}_{1}), \ldots, \mathbf{X}_{(m)n}^{\mathrm{T}}\bm{\beta}_{(m)}(\mathbf{s}_{n})\right)^{\mathrm{T}}$, $\mathbf{X}_{(m)} = (\mathbf{X}_{(m)1}, \dots, \mathbf{X}_{(m)n})^{\mathrm{T}}$ is an $n \times p_m$ matrix whose columns are among those of $\mathbf{X}$, $\bm{\beta}_{(m)}(\mathbf{s}) = (\beta_{(m)1}(\mathbf{s}), \ldots, \beta_{(m)p_m}(\mathbf{s}))^{\mathrm{T}}$ is a $p_m \times 1$ vector of unknown coefficient functions, and $\bm{\epsilon}_{(m)} = (\epsilon_{(m)1}, \ldots, \epsilon_{(m)n})^{\mathrm{T}}$ is the random error vector for the  $m$th candidate model.

To estimate $\bm{\beta}_{(m)}(\mathbf{s})$, where $\mathbf{s} = (s_1, s_2)^{\mathrm{T}} \in D$, we follow the idea of \cite{BFC98} that samples with closer spatial locations have more similar regression coefficients. Using the local constant estimation method, $\bm{\beta}_{(m)}(\mathbf{s})$ is the solution to the following weighted least squares criterion:
\begin{equation}\label{eq4}
	\min_{\bm{\beta}_{(m)}} \sum_{i=1}^{n} \left(y_i - \mathbf{X}_{(m)i}^{\mathrm{T}}\bm{\beta}_{(m)}\right)^2 K_{h_m}(\|\mathbf{s}_i - \mathbf{s}\|_q),
\end{equation}
where $\bm{\beta}_{(m)} = (\beta_{(m)1}, \ldots, \beta_{(m)p_m})^{\mathrm{T}}$ is a constant vector, $K_{h_m}(\cdot) = K(\cdot/h_m)/h_m^2$, $K(\cdot)$ is a kernel function, $h_m$ is the corresponding bandwidth, and $\|\cdot\|_q$ with $q \geq 1$ denotes the $L_q$ norm used for measuring spatial distance. Denote $\|\cdot\|_2$ as $\|\cdot\|$. The estimator of $\bm{\beta}_{(m)}(\mathbf{s})$ is
\begin{equation}\label{eq5}
	\widehat{\bm{\beta}}_{(m)}(\mathbf{s}) = \left(\mathbf{X}_{(m)}^{\mathrm{T}} \mathbf{W}_{(m)\mathbf{s}} \mathbf{X}_{(m)}\right)^{-1} \mathbf{X}_{(m)}^{\mathrm{T}} \mathbf{W}_{(m)\mathbf{s}} \mathbf{Y},
\end{equation}
where $\mathbf{W}_{(m)\mathbf{s}} = \mathrm{diag}\{K_{h_m}(\|\mathbf{s}_1 - \mathbf{s}\|_q), \ldots, K_{h_m}(\|\mathbf{s}_n-\mathbf{s}\|_q)\}$, and $\mathbf{X}_{(m)} = \mathbf{X} \bm{\Pi}_{m}^{\mathrm{T}}$. Here, $\bm{\Pi}_{m}$ is a $p_m \times p$ selection matrix that maps the explanatory vector $\mathbf{X}_i = (x_{i1}, x_{i2},\ldots, x_{ip} )^{\mathrm{T}}$ to the subvector $\mathbf{X}_{(m)i}=\bm{\Pi}_{m}\mathbf{X}_i$. For nested models, $\bm{\Pi}_{m}$ is defined as $(\mathbf{I}_{p_m}, \bm{0}_{p_m \times (p-p_m)})$. 

Based on the above estimation method, we obtain the estimator of $\bm{\beta}_{(m)}(\mathbf{s}_i)$ under the $m$th model. The estimator of the conditional mean for the $i$th sample is
$$
\widehat{\mu}_{(m)i} = \mathbf{X}_{(m)i}^{\mathrm{T}} \left(\mathbf{X}_{(m)}^{\mathrm{T}} \mathbf{W}_{(m)\mathbf{s}_i} \mathbf{X}_{(m)}\right)^{-1} \mathbf{X}_{(m)}^{\mathrm{T}} \mathbf{W}_{(m)\mathbf{s}_i} \mathbf{Y}.
$$
Let $\widehat{\bm{\mu}}_{(m)} = (\widehat{\mu}_{(m)1}, \ldots, \widehat{\mu}_{(m)n})^{\mathrm{T}}$, then we have $\widehat{\bm{\mu}}_{(m)} = \mathbf{P}_{(m)} \mathbf{Y}$, where
\begin{equation}\label{eq6}
	\mathbf{P}_{(m)} =
	\left[
	\begin{array}{c}
		\mathbf{X}_{(m)1}^{\mathrm{T}} \left(\mathbf{X}_{(m)}^{\mathrm{T}} \mathbf{W}_{(m)\mathbf{s}_1} \mathbf{X}_{(m)}\right)^{-1} \mathbf{X}_{(m)}^{\mathrm{T}} \mathbf{W}_{(m)\mathbf{s}_1} \\
		\mathbf{X}_{(m)2}^{\mathrm{T}} \left(\mathbf{X}_{(m)}^{\mathrm{T}} \mathbf{W}_{(m)\mathbf{s}_2} \mathbf{X}_{(m)}\right)^{-1} \mathbf{X}_{(m)}^{\mathrm{T}} \mathbf{W}_{(m)\mathbf{s}_2} \\
		\cdots \\
		\mathbf{X}_{(m)n}^{\mathrm{T}} \left(\mathbf{X}_{(m)}^{\mathrm{T}} \mathbf{W}_{(m)\mathbf{s}_n} \mathbf{X}_{(m)}\right)^{-1} \mathbf{X}_{(m)}^{\mathrm{T}} \mathbf{W}_{(m)\mathbf{s}_n}
	\end{array}
	\right].
\end{equation}

How to select the optimal bandwidth $h_m$ for each candidate model is a critical issue that requires careful consideration. When strong theoretical prior information regarding the value of $h_m$ is unavailable, some form of automated data-driven selection of $h_m$ may be more appropriate. In both numerical simulations and empirical analysis, we select $h_m$ by minimizing a cross-validation (CV) criterion. The CV criterion used for selecting $h_m$ is defined as
$$\mathrm{CV}(h_m)=\sum_{i=1}^{n}(y_i-\widetilde{\mu}_i(h_m))^2,$$
where $\widetilde{\mu}_i(h_m)$ denotes the predicted value at $\mathbf{s}_i$ obtained from the remaining $n-1$ samples after excluding the $i$th sample. The optimal bandwidth selected for the $m$th candidate model is $\widehat{h}_m=\mathop{\arg\min}_{h_m}\mathrm{CV}(h_m)$.

\subsection{Model Averaging and Weight Choice Criterion}

Let \( \mathbf{w} = (w_1, w_2, \dots, w_{M})^{\mathrm{T}} \) be a weight vector satisfying \( \mathbf{w} \in \mathcal{H}_{n} = \{\mathbf{w} \in [0,1]^{M} : \sum_{m=1}^{M} w_m = 1\} \). The model averaging estimator of \( \bm{\mu} \) is defined as
\begin{equation}\label{eq7}
	\widehat{\bm{\mu}}(\mathbf{w}) = \sum_{m=1}^{M} w_m \widehat{\bm{\mu}}_{(m)} = \sum_{m=1}^{M} w_m \mathbf{P}_{(m)} \mathbf{Y} = \mathbf{P}(\mathbf{w}) \mathbf{Y},
\end{equation}
where \( \mathbf{P}(\mathbf{w}) = \sum_{m=1}^{M} w_m \mathbf{P}_{(m)} \). Define the squared error loss function as \( L_n(\mathbf{w}) = \|\widehat{\bm{\mu}}(\mathbf{w}) - \bm{\mu}\|^2 \), and the corresponding risk function as
\begin{align}\label{eq8}
	R_n(\mathbf{w}) &= E(L_n(\mathbf{w}) \mid \mathbf{X}, \bm{\Psi}) \notag \\
	&= E(\|(\mathbf{P}(\mathbf{w}) - \mathbf{I}_n)\bm{\mu} + \mathbf{P}(\mathbf{w})\bm{\epsilon}\|^2 \mid \mathbf{X}, \bm{\Psi}) \notag \\
	&= \|(\mathbf{P}(\mathbf{w}) - \mathbf{I}_n)\bm{\mu}\|^2 + \operatorname{tr}(\mathbf{P}^{\mathrm{T}}(\mathbf{w}) \mathbf{P}(\mathbf{w}) \bm{\Omega}).
\end{align}
Following \cite{H2007}, we use the following Mallows-type model averaging criterion to select \( \mathbf{w} \):
\begin{equation}\label{eq9}
	C_n(\mathbf{w}) = \|\mathbf{Y} - \widehat{\bm{\mu}}(\mathbf{w})\|^2 + 2\operatorname{tr}(\mathbf{P}(\mathbf{w})\bm{\Omega}).
\end{equation}
It can be shown that
\begin{align}\label{eq10}
	R_n(\mathbf{w}) &= E \left\{ \|\widehat{\bm{\mu}}(\mathbf{w}) - \bm{\mu}\|^2 \mid \mathbf{X}, \bm{\Psi} \right\} \notag \\
	&= E \left\{ \|\mathbf{Y} - \widehat{\bm{\mu}}(\mathbf{w})\|^2 + 2\bm{\epsilon}^{\mathrm{T}}(\widehat{\bm{\mu}}(\mathbf{w}) - \bm{\mu}) - \bm{\epsilon}^{\mathrm{T}}\bm{\epsilon} \mid \mathbf{X}, \bm{\Psi} \right\} \notag \\
	&= E \left\{ \|\mathbf{Y} - \widehat{\bm{\mu}}(\mathbf{w})\|^2 \mid \mathbf{X}, \bm{\Psi} \right\} + 2E \left\{ \operatorname{tr}(\mathbf{P}(\mathbf{w})\bm{\Omega}) \mid \mathbf{X}, \bm{\Psi} \right\} - n\sigma^2 \notag \\
	&= E(C_n(\mathbf{w}) \mid \mathbf{X}, \bm{\Psi}) - n\sigma^2,
\end{align}
which implies that \( C_n(\mathbf{w}) \) is an unbiased estimator of the squared error loss risk function \( R_n(\mathbf{w}) \) plus a constant.

The optimal weight vector is obtained by minimizing \( C_n(\mathbf{w}) \) over the weight set \( \mathcal{H}_{n} \):
\begin{equation}\label{eq11}
	\widehat{\mathbf{w}} = \mathop{\arg\min}_{\mathbf{w} \in \mathcal{H}_n} C_n(\mathbf{w}).
\end{equation}
\eqref{eq11} represents a quadratic programming problem, which can be solved using the "Rsolnp" package in R. The resulting estimator \( \widehat{\bm{\mu}}(\widehat{\mathbf{w}}) \) is referred to as the spatially varying coefficient Mallows model averaging (SVMMA) estimator.

Since the unknown parameter \( \sigma^2 \) remains in \( C_n(\mathbf{w}) \), following \cite{H2007}, we substitute the estimator from the largest candidate model \( M^* = \mathop{\arg\min}_{1 \leq m \leq M} p_m \) (i.e., the candidate model contains the largest number of covariates) \( \widehat{\bm{\Omega}} = \widehat{\sigma}^2_{M^*} \mathbf{I}_n \) into \eqref{eq9}, where \( \widehat{\sigma}^2_{M^*} = (n-\operatorname{tr}(\mathbf{P}_{(M^*)}))^{-1} (\mathbf{Y} - \mathbf{P}_{(M^*)} \mathbf{Y})^{\mathrm{T}} (\mathbf{Y} - \mathbf{P}_{(M^*)} \mathbf{Y}) \), and $n-\operatorname{tr}(\mathbf{P}_{(M^*)})$ denotes the adjusted degrees of freedom. Then, \eqref{eq9} can be transformed into
the feasible criterion:
\begin{equation}\label{eq12}
	\widehat{C}_n(\mathbf{w}) = \|\mathbf{Y} - \widehat{\bm{\mu}}(\mathbf{w})\|^2 + 2\operatorname{tr}(\mathbf{P}(\mathbf{w})\widehat{\bm{\Omega}}).
\end{equation}
Minimizing \eqref{eq12} gives:
\begin{equation}\label{eq13}
	\widetilde{\mathbf{w}} = \mathop{\arg\min}_{\mathbf{w} \in \mathcal{H}_n} \widehat{C}_n(\mathbf{w}).
\end{equation}
When \( \bm{\Omega} \) is unknown, substituting \( \widetilde{\mathbf{w}} \) into \( \widehat{\bm{\mu}}(\mathbf{w}) \) yields the feasible SVMMA estimator \( \widehat{\bm{\mu}}(\widetilde{\mathbf{w}}) \).

\section{Asymptotic Properties}

\subsection{Asymptotic Optimality of the SVMMA Estimator}

In this section, we establish the asymptotic optimality of the proposed SVMMA estimator when all candidate models are misspecified, which implies that its squared loss is asymptotically equivalent to that of the infeasible optimal averaging estimator. We first give some notations.
Let $\xi_n = \inf_{\mathbf{w} \in \mathcal{H}_n} R_n(\mathbf{w})$, $\underline{h} = \min_{1 \leq m \leq M} h_{m}$, $\bar{h} = \max_{1 \leq m \leq M} h_m$, $\widetilde{p} = \max_{1 \leq m \leq M} p_m$. Let $\bar{\lambda}(\cdot)$ and $\underline{\lambda}(\cdot)$ denote the largest and smallest singular values of a given matrix, respectively. Let $\mathbf{w}_{m}^{0}$ be an $M \times 1$ weight vector where the $m$th element is 1 and all other elements are 0. Unless stated otherwise, all limits are taken as $n \rightarrow \infty$. We impose the following technical conditions:

\noindent \textbf{Condition 1.} For some fixed integer $G$ (with $1 \leq G < \infty$) and a positive constant $c$, for any $i=1,2,\ldots,n$, $E(\epsilon^{4G}_i \mid \mathbf{X}_i, \mathbf{s}_{i}) < c$, a.s..

\noindent \textbf{Condition 2.} $M \xi_{n}^{-2G} \widetilde{p}^{G} \sum_{m=1}^{M} \{R_{n}(\mathbf{w}_{m}^{0})\}^{G} = o_p(1)$.

\noindent \textbf{Condition 3.} The spatial locations $\mathbf{s}$ have a bounded support set $D$. The density function $f(\cdot)$ of $\mathbf{s}_i$ is bounded away from zero and infinity on its support, and is at least twice continuously differentiable. For all $k=1,2,\dots,p$, the function $\beta_k(\cdot)$ is twice differentiable on the support.

\noindent \textbf{Condition 4.} $\max_{1 \leq m \leq M} \max_{1 \leq i \leq n} \mathbf{X}_{(m)i}^{\mathrm{T}} \mathbf{X}_{(m)i} = O_p(\widetilde{p})$. For any $1 \leq i \leq n$, $\mathbf{C}_\mathbf{X} = E(\mathbf{X}_i \mathbf{X}_i^{\mathrm{T}})$ is non-singular. For any $i=1,2,\dots,n$ and $l,k=1,2,\dots,p$, there exist positive constants $\kappa$, $\underline{c}_\mathbf{X}$, and $\bar{c}_\mathbf{X}$ such that $E\{(x_{il}x_{ik})^2\} \leq \kappa$ and $0 < \underline{c}_\mathbf{X} \leq \min_{1 \leq m \leq M} \underline{\lambda}(\bm{\Pi}_{m} \mathbf{C}_\mathbf{X} \bm{\Pi}_{m}^{\mathrm{T}}) \leq \max_{1 \leq m \leq M} \bar{\lambda}(\bm{\Pi}_{m} \mathbf{C}_\mathbf{X} \bm{\Pi}_{m}^{\mathrm{T}}) \leq \bar{c}_\mathbf{X} < \infty$.

\noindent \textbf{Condition 5.} The kernel function $K(\cdot)$ is a bounded symmetric function with a symmetric compact support set $\operatorname{supp}(K)$. For a fixed $q$, define $K^{\prime}_q(s_1,s_2) = K(\|\mathbf{s}\|_q)$, which satisfies $\iint_{(v_1,v_2) \in \operatorname{supp}(K^{\prime}_q)} K^{\prime}_q\left( v_{1},v_{2} \right) dv_1 dv_2 = 1$.

\noindent \textbf{Condition 6.} $\bar{h} \rightarrow 0$ and $n^{-1} \bar{h}^{-2} \rightarrow 0$. The bandwidth $\underline{h}$ satisfies the same conditions as $\bar{h}$.

\noindent \textbf{Condition 7.} $\bm{\mu}^{\mathrm{T}} \bm{\mu} / n = O(1)$,a.s..

\noindent \textbf{Condition 8.} $\xi_{n}^{-1}\underline{h}^{-2} \widetilde{p} = o_p(1)$ and $\bar{h}^4\widetilde{p}=o(1)$.

Condition 1 imposes restrictions on the conditional moments of the error terms, which is similar to Condition (C.1) in \cite{ZWZZ19}. Condition 2 is analogous to Condition (8) in \cite{WZZ2010} and Condition (C.2) in \cite{ZWZZ19}. This condition, common in model averaging research, requires that all candidate models be misspecified to hold. Condition 3 consists of common assumptions regarding the spatial locations and the regression coefficient functions, which are similar to Conditions (C.3) in \cite{ZWZZ19}. Condition 4 relates to the moment conditions of the covariates $\mathbf{X}$, paralleling Conditions (C.4) in the same reference. Condition 5 represents standard assumptions on the kernel function, which is common in the literature; see, for example, \cite{BFC98} and \cite{ZWZZ19}. The compact support condition is primarily imposed for proof simplicity and could potentially be relaxed with more complex arguments. Specifically, the Gaussian kernel is permissible. Condition 6 is a common assumption of bandwidth. Condition 7 concerns the sum of the elements of $\bm{\mu}$, an assumption analogous to those in \cite{WZZ2010} and \cite{ZWZZ19}. Condition 8 requires $\xi_n$ to grow faster than $\underline{h}^{-2} \widetilde{p}$ and allows $\widetilde{p}$ to diverge as $n \to \infty$, subject to a constraint on its divergence rate.

\begin{thm}\label{th.1} 
Suppose that Conditions 1-8 hold. Then as $n \rightarrow \infty$, we have
\begin{equation}
	\frac{L_n(\widehat{\mathbf{w}})}{\inf_{\mathbf{w} \in \mathcal{H}_n} L_n(\mathbf{w})} \overset{p}{\rightarrow} 1.
\end{equation}
\end{thm}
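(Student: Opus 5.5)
We follow the standard route of \cite{H2007}, \cite{WZZ2010} and \cite{ZWZZ19}. Since $C_n(\mathbf{w}) = L_n(\mathbf{w}) + \|\bm{\epsilon}\|^2 + 2\bm{\epsilon}^{\mathrm{T}}(\mathbf{I}_n-\mathbf{P}(\mathbf{w}))\bm{\mu} - 2\{\bm{\epsilon}^{\mathrm{T}}\mathbf{P}(\mathbf{w})\bm{\epsilon} - \operatorname{tr}(\mathbf{P}(\mathbf{w})\bm{\Omega})\}$ and $\|\bm{\epsilon}\|^2$ does not depend on $\mathbf{w}$, it suffices to show
\begin{align*}
&\sup_{\mathbf{w}\in\mathcal{H}_n} \frac{|\bm{\epsilon}^{\mathrm{T}}(\mathbf{I}_n-\mathbf{P}(\mathbf{w}))\bm{\mu}|}{R_n(\mathbf{w})} = o_p(1), \qquad
\sup_{\mathbf{w}\in\mathcal{H}_n} \frac{|\bm{\epsilon}^{\mathrm{T}}\mathbf{P}(\mathbf{w})\bm{\epsilon} - \sigma^2\operatorname{tr}(\mathbf{P}(\mathbf{w}))|}{R_n(\mathbf{w})} = o_p(1),\\
&\sup_{\mathbf{w}\in\mathcal{H}_n} \left|\frac{L_n(\mathbf{w})}{R_n(\mathbf{w})} - 1\right| = o_p(1).
\end{align*}
Given these three facts, the usual argument applies: $\widehat{\mathbf{w}}$ minimizes $L_n(\mathbf{w})\{1+o_p(1)\}$ uniformly, so $L_n(\widehat{\mathbf{w}})/\inf_{\mathbf{w}} L_n(\mathbf{w}) \le 1+o_p(1)$. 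The ratio is trivially at least one.

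The key technical input is a uniform bound on the operator norms of the smoother matrices, namely
\[
\max_{1\le m\le M}\bar{\lambda}(\mathbf{P}_{(m)}) = O_p(\underline{h}^{-2}\widetilde{p}\, n^{-1}\cdot n)=O_p(\underline{h}^{-2}\widetilde{p}),
\]
or more sharply, $\max_m \|\mathbf{P}_{(m)}\|_1$ and $\|\mathbf{P}_{(m)}\|_\infty$ bounded by $O_p(\widetilde{p}\,\underline h^{-2}/n)\times n$ up to constants. We would obtain this from $n^{-1}\mathbf{X}_{(m)}^{\mathrm{T}}\mathbf{W}_{(m)\mathbf{s}}\mathbf{X}_{(m)} = f(\mathbf{s})\bm{\Pi}_m\mathbf{C}_\mathbf{X}\bm{\Pi}_m^{\mathrm{T}} + o_p(1)$ uniformly in $\mathbf{s}$ and $m$, a standard kernel LLN using Conditions 3--6. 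By Condition 4 its smallest eigenvalue is then bounded below by $\underline{c}_\mathbf{X}\inf f/2$ with probability tending to one. Each row of $\mathbf{P}_{(m)}$ then has entries bounded by $C\widetilde{p}\,K_{h_m}(\cdot)/n$, and the number of nonzero entries is about $nh_m^2$. Consequently $\operatorname{tr}(\mathbf{P}_{(m)})=O_p(\widetilde p\, h_m^{-2})$ and $\bar\lambda(\mathbf{P}_{(m)})=O_p(\widetilde p)$ via the row/column-sum bound. Making this uniform over $m$ when $M$ and $\widetilde{p}$ diverge is where we expect the main obstacle. We would try a Bernstein-type inequality with a union bound over $m$ and a grid of $\mathbf{s}$, using $E(x_{il}x_{ik})^2\le\kappa$, and accept whatever rate Condition 8 permits.

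For the first two displays we use the convexity trick of \cite{WZZ2010}: both numerators are linear in $\mathbf{w}$. Writing $\mathbf{w}=\sum_m w_m\mathbf{w}^0_m$ and using $R_n(\mathbf{w})\ge\xi_n$, each supremum is at most $\xi_n^{-1}\max_m|\cdot|$ evaluated at $\mathbf{w}^0_m$. By Chebyshev with the $2G$th moment, together with Whittle's inequality (Condition 1), we get
\[
P\Big(\max_m|\bm{\epsilon}^{\mathrm{T}}(\mathbf{I}_n-\mathbf{P}_{(m)})\bm{\mu}|>\delta\xi_n\Big)\le \delta^{-2G}\xi_n^{-2G}\sum_m C\|(\mathbf{I}_n-\mathbf{P}_{(m)})\bm{\mu}\|^{2G}\le C\xi_n^{-2G}\sum_m R_n(\mathbf{w}^0_m)^G,
\]
which is $o_p(1)$ by Condition 2. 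For the quadratic form, $E|\bm{\epsilon}^{\mathrm{T}}\mathbf{P}_{(m)}\bm{\epsilon}-\sigma^2\operatorname{tr}\mathbf{P}_{(m)}|^{2G}\le C\{\operatorname{tr}(\mathbf{P}_{(m)}^{\mathrm{T}}\mathbf{P}_{(m)})\}^G$. The right-hand side is bounded by $C\{\bar\lambda^2(\mathbf{P}_{(m)})\}^{G}\ldots$; more usefully, $\operatorname{tr}(\mathbf{P}_{(m)}^{\mathrm{T}}\mathbf{P}_{(m)})\le \sigma^{-2}R_n(\mathbf{w}^0_m)$, so the same Condition 2 bound applies. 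The factor $M\widetilde{p}^G$ in Condition 2 absorbs the $\bar\lambda$ factors arising from the operator-norm bound above.

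For the third display we write
\[
L_n(\mathbf{w})-R_n(\mathbf{w}) = \|\mathbf{P}(\mathbf{w})\bm{\epsilon}\|^2-\sigma^2\operatorname{tr}(\mathbf{P}^{\mathrm{T}}(\mathbf{w})\mathbf{P}(\mathbf{w})) - 2\bm{\mu}^{\mathrm{T}}(\mathbf{I}_n-\mathbf{P}(\mathbf{w}))^{\mathrm{T}}\mathbf{P}(\mathbf{w})\bm{\epsilon}.
\]
Both terms are quadratic in $\mathbf{w}$, so we bound them by $\max_{m,l}$ over pairs, that is, $\bm{\epsilon}^{\mathrm{T}}\mathbf{P}_{(m)}^{\mathrm{T}}\mathbf{P}_{(l)}\bm{\epsilon}$ centered and $\bm{\mu}^{\mathrm{T}}(\mathbf{I}_n-\mathbf{P}_{(m)})^{\mathrm{T}}\mathbf{P}_{(l)}\bm{\epsilon}$. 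We control these again by Whittle's inequality and the operator bound $\bar\lambda(\mathbf{P}_{(l)})=O_p(\widetilde p^{1/2})$ (the source of $\widetilde p^G$ in Condition 2), yielding the $M\sum_m$ structure of Condition 2. Where a bias term needs control, for instance bounding $\operatorname{tr}(\mathbf{P}_{(m)})$ relative to $\xi_n$, we invoke Condition 8, $\xi_n^{-1}\underline h^{-2}\widetilde p=o_p(1)$, together with Condition 7 and $\bar h^4\widetilde p=o(1)$ for the smoothing bias of the local constant fit.
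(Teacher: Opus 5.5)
Your architecture matches the paper's. You use the same decomposition of $C_n(\mathbf{w})$ and the same three sufficient conditions. For the first two you use the same linearity/Bonferroni/Whittle argument with $R_n(\mathbf{w})\ge\xi_n$ and Condition 2. For $L_n/R_n$ you use pairwise $(t,m)$ Whittle bounds with a spectral-norm factor.

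The gap is in the one technical input you single out, $\max_m\bar{\lambda}(\mathbf{P}_{(m)})$. You give it three different orders:
\begin{itemize}
\item $O_p(\underline{h}^{-2}\widetilde{p})$. That is the order of $\operatorname{tr}(\mathbf{P}_{(m)})$, not of the spectral norm; row sums carry no $h^{-2}$ because $n^{-1}\sum_j K_{h_m}(\cdot)=O_p(1)$.
\item $O_p(\widetilde{p})$, from your row/column-sum argument.
\item $O_p(\widetilde{p}^{1/2})$, which you only assert.
\end{itemize}
Only the last is compatible with Condition 2. In the cross terms, $\|\mathbf{P}_{(t)}^{\mathrm{T}}(\mathbf{P}_{(m)}-\mathbf{I}_n)\bm{\mu}\|^{2G}\le\bar{\lambda}^{2G}(\mathbf{P}_{(t)})\{R_n(\mathbf{w}_m^0)\}^G$. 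Likewise, the centered $\bm{\epsilon}^{\mathrm{T}}\mathbf{P}_{(t)}^{\mathrm{T}}\mathbf{P}_{(m)}\bm{\epsilon}$ has conditional $2G$th moment at most $C\bar{\lambda}^{2G}(\mathbf{P}_{(m)})\{R_n(\mathbf{w}_t^0)\}^G$. So the union bound is of order $M\max_m\bar{\lambda}^{2G}(\mathbf{P}_{(m)})\,\xi_n^{-2G}\sum_m\{R_n(\mathbf{w}_m^0)\}^G$. With $\bar{\lambda}=O_p(\widetilde{p})$ this becomes $M\widetilde{p}^{2G}\xi_n^{-2G}\sum_m\{R_n(\mathbf{w}_m^0)\}^G$, which Condition 2 does not control.

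Your entrywise route cannot do better. Suppressing $m$, let $\mathbf{A}_i=n^{-1}\mathbf{X}_{(m)}^{\mathrm{T}}\mathbf{W}_{(m)\mathbf{s}_i}\mathbf{X}_{(m)}$. Bounding $|\mathbf{X}_{(m)i}^{\mathrm{T}}\mathbf{A}_i^{-1}\mathbf{X}_{(m)j}|\le C\widetilde{p}$ pays for $\|\mathbf{X}_{(m)j}\|$, even though that factor is already normalized away by $\mathbf{A}_i^{-1}$.

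The missing idea is to keep that normalization. Let $\mathbf{G}_i=n^{-1/2}\mathbf{A}_i^{-1/2}\mathbf{X}_{(m)}^{\mathrm{T}}\mathbf{W}_{(m)\mathbf{s}_i}^{1/2}$, so that $\mathbf{G}_i\mathbf{G}_i^{\mathrm{T}}=\mathbf{I}_{p_m}$. The $i$th entry of $\mathbf{P}_{(m)}\mathbf{b}$ equals $(\mathbf{A}_i^{-1/2}\mathbf{X}_{(m)i})^{\mathrm{T}}\mathbf{G}_i\,n^{-1/2}\mathbf{W}_{(m)\mathbf{s}_i}^{1/2}\mathbf{b}$. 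Cauchy--Schwarz then gives
\[
\|\mathbf{P}_{(m)}\mathbf{b}\|^2\le\max_{1\le i\le n}\bar{\lambda}(\mathbf{A}_i^{-1})\|\mathbf{X}_{(m)i}\|^2\cdot\max_{1\le j\le n}\frac{1}{n}\sum_{i=1}^{n}K_{h_m}(\|\mathbf{s}_j-\mathbf{s}_i\|_q)\cdot\|\mathbf{b}\|^2 .
\]
By your kernel LLN and Condition 4, this is $O_p(\widetilde{p})\|\mathbf{b}\|^2$ uniformly in $m$, i.e.\ $\max_m\bar{\lambda}(\mathbf{P}_{(m)})=O_p(\widetilde{p}^{1/2})$.

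This is the paper's Lemma 2. There it is obtained through the factorization $\mathbf{P}_{(m)}=\mathbf{B}\mathbf{C}_{(m)}\mathbf{E}_{(m)}\mathbf{F}_{(m)}\mathbf{H}_{(m)}$ with $\mathbf{F}_{(m)}\mathbf{F}_{(m)}^{\mathrm{T}}=\mathbf{E}_{(m)}^{-1}$, so that only $\max_i\|\mathbf{X}_{(m)i}\|^2$ carries the dimension. With this bound, the rest of your plan goes through.

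One smaller point: $\xi_n$, $\bar{\lambda}(\mathbf{P}_{(m)})$ and Condition 2 are random or in-probability quantities. So the Chebyshev--Whittle bounds should be taken conditionally on $(\mathbf{X},\bm{\Psi})$. They are then de-conditioned using the fact that conditional probabilities are bounded by one, as the paper does right after its first Chebyshev bound.
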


\begin{thm}\label{th.2} 
Suppose that Conditions 1-8 hold. Then as $n \rightarrow \infty$, we have
\begin{equation}
	\frac{L_n(\widetilde{\mathbf{w}})}{\inf_{\mathbf{w} \in \mathcal{H}_n} L_n(\mathbf{w})} \overset{p}{\rightarrow} 1.
\end{equation}
\end{thm}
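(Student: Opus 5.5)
The plan is to reduce Theorem~\ref{th.2} to Theorem~\ref{th.1} by controlling the single extra term created when $\bm{\Omega}$ is replaced by $\widehat{\bm{\Omega}}$. Since $\widehat{\bm{\Omega}}=\widehat{\sigma}^2_{M^*}\mathbf{I}_n$, we have
\[
\widehat{C}_n(\mathbf{w})=C_n(\mathbf{w})+2(\widehat{\sigma}^2_{M^*}-\sigma^2)\operatorname{tr}(\mathbf{P}(\mathbf{w})).
\]
Following the argument behind Theorem~\ref{th.1} (the \cite{WZZ2010}/\cite{ZWZZ19} route), it suffices to show the following. Besides the terms already handled there, namely $\sup_{\mathbf{w}}|\bm{\epsilon}^{\mathrm{T}}\mathbf{P}(\mathbf{w})\bm{\epsilon}-\sigma^2\operatorname{tr}\mathbf{P}(\mathbf{w})|/R_n(\mathbf{w})$, $\sup_{\mathbf{w}}|\bm{\epsilon}^{\mathrm{T}}(\mathbf{I}-\mathbf{P}(\mathbf{w}))\bm{\mu}|/R_n(\mathbf{w})$ and $\sup_{\mathbf{w}}|L_n(\mathbf{w})/R_n(\mathbf{w})-1|$, all $o_p(1)$, we also need
\[
\sup_{\mathbf{w}\in\mathcal{H}_n}\frac{|\widehat{\sigma}^2_{M^*}-\sigma^2|\,|\operatorname{tr}(\mathbf{P}(\mathbf{w}))|}{R_n(\mathbf{w})}=o_p(1).
\]
Once this holds, the standard inequality chain applies: $\widehat{C}_n(\mathbf{w})-\|\bm{\epsilon}\|^2=L_n(\mathbf{w})+$ (terms that are $o_p(R_n(\mathbf{w}))$ uniformly), and $\widetilde{\mathbf{w}}$ minimizes $\widehat{C}_n$. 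Together these give $L_n(\widetilde{\mathbf{w}})/\inf_{\mathbf{w}}L_n(\mathbf{w})\to1$ exactly as in Theorem~\ref{th.1}.

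To bound the extra term, I split it into two factors. First, bound $\operatorname{tr}(\mathbf{P}(\mathbf{w}))\le\max_m|\operatorname{tr}(\mathbf{P}_{(m)})|$. Each diagonal entry of $\mathbf{P}_{(m)}$ is $K_{h_m}(0)\,\mathbf{X}_{(m)i}^{\mathrm{T}}(\mathbf{X}_{(m)}^{\mathrm{T}}\mathbf{W}_{(m)\mathbf{s}_i}\mathbf{X}_{(m)})^{-1}\mathbf{X}_{(m)i}$. By Conditions 3--6, $n^{-1}\mathbf{X}_{(m)}^{\mathrm{T}}\mathbf{W}_{(m)\mathbf{s}}\mathbf{X}_{(m)}$ converges uniformly in $\mathbf{s}$ to $f(\mathbf{s})\bm{\Pi}_m\mathbf{C}_\mathbf{X}\bm{\Pi}_m^{\mathrm{T}}$ (up to smoothing bias), and by Condition 4 its smallest eigenvalue is bounded below by $\underline{c}_\mathbf{X}\inf f/2$ with probability tending to one. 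Combined with $K_{h_m}(0)=O(h_m^{-2})$ and $\mathbf{X}_{(m)i}^{\mathrm{T}}\mathbf{X}_{(m)i}=O_p(\widetilde{p})$, this gives $\max_m|\operatorname{tr}(\mathbf{P}_{(m)})|=O_p(\underline{h}^{-2}\widetilde{p})$. (This bound is likely already established as a lemma in the proof of Theorem~\ref{th.1}.) Since $R_n(\mathbf{w})\ge\xi_n$, Condition 8 yields $\sup_{\mathbf{w}}|\operatorname{tr}\mathbf{P}(\mathbf{w})|/R_n(\mathbf{w})=O_p(\xi_n^{-1}\underline{h}^{-2}\widetilde{p})=o_p(1)$.

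Second, it then suffices to show $\widehat{\sigma}^2_{M^*}=O_p(1)$; consistency is not needed. Write $\mathbf{Y}-\mathbf{P}_{(M^*)}\mathbf{Y}=(\mathbf{I}-\mathbf{P}_{(M^*)})\bm{\mu}+(\mathbf{I}-\mathbf{P}_{(M^*)})\bm{\epsilon}$ and bound
\[
\|(\mathbf{I}-\mathbf{P}_{(M^*)})(\bm{\mu}+\bm{\epsilon})\|^2\le2(1+\bar{\lambda}(\mathbf{P}_{(M^*)}))^2(\|\bm{\mu}\|^2+\|\bm{\epsilon}\|^2).
\]
By Condition 7, $\|\bm{\mu}\|^2=O(n)$, and by Condition 1, $\|\bm{\epsilon}\|^2=O_p(n)$. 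The denominator satisfies $n-\operatorname{tr}(\mathbf{P}_{(M^*)})\ge n-O_p(\underline{h}^{-2}\widetilde{p})=n(1-o_p(1))$, using Condition 6 together with the trace bound. Hence $\widehat{\sigma}^2_{M^*}=O_p\bigl(\bar{\lambda}(\mathbf{P}_{(M^*)})^2\bigr)$.

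The main obstacle is showing $\bar{\lambda}(\mathbf{P}_{(m)})=O_p(1)$, since $\mathbf{P}_{(m)}$ is not symmetric. My first attempt would be to bound the operator norm by $\sqrt{\|\mathbf{P}_{(m)}\|_1\|\mathbf{P}_{(m)}\|_\infty}$. Row $i$ has entries $\mathbf{X}_{(m)i}^{\mathrm{T}}\mathbf{A}_i^{-1}\mathbf{X}_{(m)j}K_{h_m}(\cdot)/n$ with $\mathbf{A}_i=n^{-1}\mathbf{X}_{(m)}^{\mathrm{T}}\mathbf{W}_{(m)\mathbf{s}_i}\mathbf{X}_{(m)}$, and about $nh_m^2$ of these are nonzero. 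Row sums are then $O_p(\widetilde{p})$ by Cauchy--Schwarz, and column sums are handled similarly using the bounded density. This would give $\bar{\lambda}(\mathbf{P}_{(m)})=O_p(\widetilde{p})$, which may need absorption via Condition 8, or a sharper argument replacing $|\mathbf{X}_{(m)j}|$ by the averaged $E|\mathbf{X}_{(m)j}|$ to obtain $O_p(1)$. If $\bar{\lambda}(\mathbf{P}_{(m)})=O_p(1)$ is already a lemma used in Theorem~\ref{th.1} (it is needed there for $\sup_{\mathbf{w}}|L_n/R_n-1|$), I simply invoke it. Combining the two factors gives the required $o_p(1)$, which completes the proof.
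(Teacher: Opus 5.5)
Your reduction is the same as the paper's. You write $\widehat C_n(\mathbf w)=C_n(\mathbf w)+2(\widehat\sigma^2_{M^*}-\sigma^2)\operatorname{tr}(\mathbf P(\mathbf w))$ and then bound $\sup_{\mathbf w}\operatorname{tr}(\mathbf P(\mathbf w))/R_n(\mathbf w)\le\xi_n^{-1}\max_m|\operatorname{tr}(\mathbf P_{(m)})|=O_p(\xi_n^{-1}\underline h^{-2}\widetilde p)=o_p(1)$ by Lemma 2 and Condition 8. Everything therefore hinges on $\widehat\sigma^2_{M^*}=O_p(1)$, and that step has a gap.

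Your inequality only gives $\widehat\sigma^2_{M^*}=O_p\bigl(\bar\lambda^2(\mathbf P_{(M^*)})\bigr)$, and there is no $O_p(1)$ operator-norm bound to invoke:
\begin{itemize}
\item Lemma 2 gives only $\max_m\bar\lambda(\mathbf P_{(m)})=O_p(\widetilde p^{1/2})$, because the factorisation $\mathbf P_{(m)}=\mathbf B\mathbf C_{(m)}\mathbf E_{(m)}\mathbf F_{(m)}\mathbf H_{(m)}$ loses $\max_i\mathbf X_{(m)i}^{\mathrm T}\mathbf X_{(m)i}$.
\item Contrary to your parenthetical, Theorem 1 does not need more. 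Condition 2 carries the factor $\widetilde p^{G}$ precisely to absorb $\bar\lambda^{2G}(\mathbf P_{(t)})$ in the cross terms.
\item Your own row/column-sum estimate, $O_p(\widetilde p)$, is worse still.
\end{itemize}
With $\bar\lambda=O_p(\widetilde p^{1/2})$, the extra term is only $O_p(\xi_n^{-1}\underline h^{-2}\widetilde p^{2})$. Condition 8 controls $\xi_n^{-1}\underline h^{-2}\widetilde p$, and Condition 2 yields only $\widetilde p=o_p(\xi_n)$, so neither absorbs the extra factor $\widetilde p$. For example, $\xi_n\asymp n$, $\underline h=\bar h=n^{-0.15}$, $\widetilde p=n^{1/2}$ is compatible with Conditions 6 and 8 but gives $\xi_n^{-1}\underline h^{-2}\widetilde p^{2}\asymp n^{0.3}$. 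So "absorption via Condition 8" fails, and as written you prove Theorem 2 only for bounded $\widetilde p$.

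Similarly, $n-\operatorname{tr}(\mathbf P_{(M^*)})=n\{1-o_p(1)\}$ needs $\underline h^{-2}\widetilde p=o_p(n)$, which Condition 6 does not supply when $\widetilde p\to\infty$. The paper's own proof just asserts $\widehat\sigma^2_{M^*}=O_p(1)$ ``by Condition 7'', so there is no lemma you can import here.

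What is missing is a bound $\|\mathbf P_{(M^*)}\mathbf Y\|^2=O_p(n)$ that bypasses the operator norm. One route is a leave-one-out decomposition.
\begin{itemize}
\item Let $\mathbf A_i=n^{-1}\mathbf X_{(m)}^{\mathrm T}\mathbf W_{(m)\mathbf s_i}\mathbf X_{(m)}$, and let $\mathbf A_i^{(-i)}$ be the same matrix without observation $i$. Sherman--Morrison gives $(\mathbf P_{(m)}\mathbf Y)_i=(1+d_i)^{-1}\{\mathbf X_{(m)i}^{\mathrm T}\mathbf b_i^{(-i)}+d_iy_i\}$. Here $d_i=n^{-1}K_{h_m}(0)\mathbf X_{(m)i}^{\mathrm T}(\mathbf A_i^{(-i)})^{-1}\mathbf X_{(m)i}\ge0$, and $\mathbf b_i^{(-i)}$ is the local coefficient at $\mathbf s_i$ fitted without observation $i$. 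Hence $|(\mathbf P_{(m)}\mathbf Y)_i|\le|\mathbf X_{(m)i}^{\mathrm T}\mathbf b_i^{(-i)}|+|y_i|$.
\item Since $\mathbf b_i^{(-i)}$ depends only on $\mathbf s_i$ and the other observations, the conditional second moment of $\mathbf X_{(m)i}^{\mathrm T}\mathbf b_i^{(-i)}$ is at most $\bar c_{\mathbf X}\|\mathbf b_i^{(-i)}\|^2$. This uses the independence of $\mathbf X_i$ and $\mathbf s_i$ already implicit in Lemma 1, and replaces the lossy bound $\|\mathbf X_{(m)i}\|^2\|\mathbf b_i^{(-i)}\|^2$.
\item The weighted least-squares inequality gives $\|\mathbf b_i^{(-i)}\|^2\le\underline{\lambda}^{-1}(\mathbf A_i^{(-i)})\,n^{-1}\sum_{j\ne i}K_{h_m}(\|\mathbf s_j-\mathbf s_i\|_q)y_j^2$. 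Summed over $i$, this is at most a constant times $\|\mathbf Y\|^2\max_j n^{-1}\sum_i K_{h_m}(\|\mathbf s_j-\mathbf s_i\|_q)=O_p(n)$, by Conditions 1 and 7 and the kernel-density bound in the proof of Lemma 2.
\item The same identity gives $n-\operatorname{tr}(\mathbf P_{(m)})=\sum_i(1+d_i)^{-1}\ge n/(1+\max_i d_i)$. This handles the denominator under the rate $\widetilde p=O(n\underline h^{2})$ that Lemma 1 already tacitly needs.
\end{itemize}
Alternatively, strengthen Condition 8 to $\xi_n^{-1}\underline h^{-2}\widetilde p^{2}=o_p(1)$.
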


\begin{remark}\label{rk2}
It's worth noting that prediction is mainly focused when all candidate models are misspecified.
Theorem 1 shows that when all candidate models are misspecified, the SVMMA estimator is asymptotically optimal
in the sense of making the squared error loss as small as possible among all feasible weight vectors $\mathbf{w}$ lying in the set $\mathcal{H}_{n}$.
This property offers a theoretical advantage of our method over single model-based methods (e.g., a single optimal model chosen by AIC, AICc or BIC), as well as other model weight selection methods (e.g., the model weights are selected via the smoothed AIC or BIC). Simulation results in Section 4 have verified this property. Theorem 2 shows that Theorem 1 still holds when $\bm{\Omega}$ is replaced by $\widehat{\bm{\Omega}}$. 
\end{remark}

\subsection{Consistency of the Model Weights and SVMMA Estimator}

In this section, we consider the scenario where the candidate set includes at least one quasi-correct model. It is well known that not all covariates in model \eqref{eq1} have contributions in estimating the conditional mean. Let $\mathcal{A}_{true}\triangleq \{k:\beta_k(\mathbf{s}_i)\neq 0,1\leq k \leq p\}$ be the true active index set in model \eqref{eq1}. Suppose that $\mathcal{A}_m $ is the index set of covariates to be included in the $m$th candidate model for $m=1,\cdots,M$.
The $m$th candidate model is called as the \emph{quasi-correct} model if $\mathcal{A}_{true} \subseteq\mathcal{A}_m$. Particularly, when $\mathcal{A}_{true}=\mathcal{A}_m$, the $m$th candidate model is called as the \emph{correct} model. Thus, it is not hard to understand that the set of quasi-correct models consists of the correct and over-fitted models, also implying that the quasi-correct model may not be unique. 

To ensure the consistency of the averaging weights, we assume that the $j$th $(j = 1, 2, \ldots , p_m)$ component of $\widehat{\bm{\beta}}_{(m)}(\mathbf{s})(m = 1, 2, \ldots , M)$ has a limiting value $\beta_{(m)j}^{*}(\mathbf{s})$ for any $\mathbf{s} = (s_1, s_2)^{\mathrm{T}} \in D$. Moreover, $\bm{\beta}^{*}_{(m)}(\mathbf{s})=(\beta_{(m)1}^{*}(\mathbf{s}),\beta_{(m)2}^{*}(\mathbf{s}),\ldots,\beta_{(m)p_m}^{*}(\mathbf{s}))^{\mathrm{T}}$ satisfies
$$E\{K_{h_m}(\|\mathbf{u}-\mathbf{s}\|_q)\mathbf{X}_{(m)}(\mathbf{s})(y(\mathbf{s})-\mathbf{X}^{\mathrm{T}}_{(m)}(\mathbf{s})\bm{\beta}^{*}_{(m)}(\mathbf{s}))\}=\bm{0},$$
where $\mathbf{u} = (u_1, u_2)^{\mathrm{T}}$ is a random vector with probability density function $f(\mathbf{u})$. If the $m$th model is quasi-correct model, then for $j\in \mathcal{A}_m\cap\mathcal{A}_{true}$, we have $\beta^{*}_{(m)j}(\mathbf{s})=\beta_{j}(\mathbf{s})$, while for $j\in \mathcal{A}_m\setminus\mathcal{A}_{true}$ we have $\beta^{*}_{(m)j}(\mathbf{s})=0$. For any $m = 1, \ldots, M$, define $\bm{\mu}^{*}_{(m)}=(\mu^{*}_{(m)1},\mu^{*}_{(m)2},\ldots,\mu^{*}_{(m)n})^{\mathrm{T}}$, where $\mu^{*}_{(m)i}=\mathbf{X}_{(m)i}^{\mathrm{T}} \bm{\beta}^{*}_{(m)}(\mathbf{s}_i)$. Without loss of generality,
we assume that the first $M_0$ candidate models are quasi-correct. Let ${\tau}_{M_0}=\sum_{m=1}^{M_0}{w}_m$ be the sum of weights assigned to quasi-correct models. Recall that $\widehat{\mathbf{w}}=(\hat w_1,\ldots,\hat w_M)^{\mathrm{T}}$ and $\widetilde{\mathbf{w}}=(\tilde w_1,\ldots,\tilde w_M)^{\mathrm{T}}$ are defined in \eqref{eq11} and \eqref{eq13}, respectively. Denote $\widehat{\tau}_{M_0}=\sum_{m=1}^{M_0}\widehat{w}_m$ and $\widetilde{\tau}_{M_0}=\sum_{m=1}^{M_0}\widetilde{w}_m$.
Define $\mathcal{H}_F = \{\mathbf{w} \in \mathcal{H}_n: w_m = 0, m = 1, \dots, M_0\}$ as the set of weight vectors that assign zero weight to all quasi-correct models. Let $\xi_F=\inf_{\mathbf{w}\in\mathcal{H}_F}L_n^{*}(\mathbf{w})$, where 
$L_n^{*}(\mathbf{w})=\|\bm{\mu}^{*}(\mathbf{w})-\bm{\mu}\|^2$, $\bm{\mu}^{*}(\mathbf{w})=\sum_{m=1}^{M}w_m\bm{\mu}^{*}_{(m)}$.
To establish the limiting properties of the weights, the following condition is required.

\noindent \textbf{Condition 9.} For every $k = 1, 2, \dots, p_m$ and $m = 1, 2, \dots, M$, the function  $\beta_{(m)k}^{*}(\cdot)$ is twice differentiable on its support. Moreover, the first- and second-order partial derivatives of the function family $\{\beta_{(m)k}^{*}(\cdot) : k = 1, \dots, p_m; m = 1, \dots, M\}$ are uniformly bounded, and this uniform bound is independent of $p_m$ and $M$.

\noindent \textbf{Condition 10.} For any $m = 1, \ldots, M$, there exists a positive constant $\delta_1$ such that $E(\mu_{(m)1}^{*4}|\mathbf{s}_1=s)\leq\delta_1$.

\noindent \textbf{Condition 11.} $\xi_F^{-1}n^{1/2} = o_p(1)$, $\xi_F^{-1}M\widetilde{p}^4= o_p(1)$, $\xi_F^{-1}\underline{h}^{-2} M\widetilde{p}^2 = o_p(1)$, and $\xi_F^{-1}nM\widetilde{p}^2 \bar{h}^4 = o_p(1)$.

Condition 9 stipulates that every coefficient function $\beta_{(m)k}^{*}(\cdot)$ is twice differentiable and that its first and second partial derivatives are uniformly bounded across all $k$ and $m$, with bounds independent of $p_m$ and $M$. This smoothness and derivative boundedness ensure stable asymptotic behavior as the model dimensions grow. 
Condition 10 requires that the conditional fourth moment of $\mu_{(m)1}^{*}$ be uniformly bounded. This avoids the moment explosion problem caused by spatial heterogeneity or model misspecification when the number of models $M$ diverges.
When $\widetilde{p}$ is finite, Condition 10 clearly holds under Conditions 4 and 9. Condition 11 implies that the order of $\xi_F$ is higher than $n^{1/2}$, $ M\widetilde{p}^4$, $\underline{h}^{-2} M\widetilde{p}^2$, and $nM\widetilde{p}^2 \bar{h}^4$. This is similar to Condition 3.8 in \cite{SZZ2024}, but it should be noted that the latter assumes finite-dimensional covariates and is based on a linear regression framework. 

\begin{thm}\label{th.3} 
Suppose that Conditions 3-6 and 9-11 hold. Then as $n \rightarrow \infty$, we have
\begin{equation}
	\widehat{\tau}_{M_0}\overset{p}{\rightarrow} 1.
\end{equation}
\end{thm}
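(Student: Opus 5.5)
Our proof of Theorem \ref{th.3} will follow the standard weight-consistency argument (as in \cite{ZY2018,ZZLC20,SZZ2024}), namely a contradiction based on the minimizing property of $\widehat{\mathbf{w}}$. Decompose any $\mathbf{w}\in\mathcal{H}_n$ as $\mathbf{w}=\tau_{M_0}\mathbf{w}_Q+(1-\tau_{M_0})\mathbf{w}_F$, where $\mathbf{w}_Q$ is supported on the first $M_0$ (quasi-correct) models and $\mathbf{w}_F\in\mathcal{H}_F$. Since $C_n(\mathbf{w})=\|\bm{\epsilon}\|^2+\|\bm{\mu}-\widehat{\bm{\mu}}(\mathbf{w})\|^2+2\bm{\epsilon}^{\mathrm{T}}(\bm{\mu}-\widehat{\bm{\mu}}(\mathbf{w}))+2\sigma^2\operatorname{tr}(\mathbf{P}(\mathbf{w}))$, the comparison $C_n(\widehat{\mathbf{w}})\le C_n(\mathbf{w}_1^0)$ (model 1 being quasi-correct) gives
\[
\|\widehat{\bm{\mu}}(\widehat{\mathbf{w}})-\bm{\mu}\|^2\le \|\widehat{\bm{\mu}}_{(1)}-\bm{\mu}\|^2+2\bm{\epsilon}^{\mathrm{T}}\{\widehat{\bm{\mu}}(\widehat{\mathbf{w}})-\widehat{\bm{\mu}}_{(1)}\}+2\sigma^2\{\operatorname{tr}\mathbf{P}_{(1)}-\operatorname{tr}\mathbf{P}(\widehat{\mathbf{w}})\}.
\]
The strategy is to show that the left side is bounded below by roughly $(1-\widehat\tau_{M_0})^2\xi_F$, while every term on the right is $o_p(\xi_F)$ uniformly in $\mathbf{w}$. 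Dividing by $\xi_F$ then forces $(1-\widehat\tau_{M_0})^2=o_p(1)$.

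\textbf{Step 1 (uniform approximation by limits).} Using Conditions 3--6 and 9--10, we will show that $\max_m\|\widehat{\bm{\mu}}_{(m)}-\bm{\mu}^*_{(m)}\|^2=O_p(\underline{h}^{-2}\widetilde{p}^2+n\widetilde{p}^2\bar h^4+\widetilde{p}^4)$. The argument writes $\widehat{\bm{\beta}}_{(m)}(\mathbf{s}_i)-\bm{\beta}^*_{(m)}(\mathbf{s}_i)$ as $\{n^{-1}\mathbf{X}_{(m)}^{\mathrm{T}}\mathbf{W}_{(m)\mathbf{s}_i}\mathbf{X}_{(m)}\}^{-1}$ applied to a bias term and a noise term. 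The bias term is $O(\bar h^2)$ by a Taylor expansion, with derivatives uniformly bounded by Condition 9. The noise term has variance $O(n^{-1}h^{-2})$. Uniform invertibility of the weighted Gram matrix comes from Condition 4 via the standard kernel LLN. Summing over $m$ with a union bound produces the factor $M$, and Condition 11 makes all of these terms $o_p(\xi_F)$. For quasi-correct models $\bm{\mu}^*_{(m)}=\bm{\mu}$, so $\sup_{\mathbf{w}}\|\widehat{\bm{\mu}}(\mathbf{w}_Q)-\bm{\mu}\|^2=o_p(\xi_F)$. In addition, $\operatorname{tr}\mathbf{P}_{(m)}=O_p(\widetilde{p}\,\underline{h}^{-2})$, which is also $o_p(\xi_F)$.

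\textbf{Step 2 (lower bound and cross terms).} By the triangle inequality, $\|\widehat{\bm{\mu}}(\mathbf{w})-\bm{\mu}\|\ge(1-\tau_{M_0})\|\bm{\mu}^*(\mathbf{w}_F)-\bm{\mu}\|-o_p(\xi_F^{1/2})\ge(1-\tau_{M_0})\xi_F^{1/2}-o_p(\xi_F^{1/2})$. For the noise cross term we write $\bm{\epsilon}^{\mathrm{T}}(\widehat{\bm{\mu}}(\mathbf{w})-\widehat{\bm{\mu}}_{(1)})=\sum_m w_m\bm{\epsilon}^{\mathrm{T}}(\bm{\mu}^*_{(m)}-\bm{\mu})+\sum_m w_m\bm{\epsilon}^{\mathrm{T}}(\widehat{\bm{\mu}}_{(m)}-\bm{\mu}^*_{(m)})-\bm{\epsilon}^{\mathrm{T}}(\widehat{\bm{\mu}}_{(1)}-\bm{\mu})$. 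The first sum is bounded by $\max_m|\bm{\epsilon}^{\mathrm{T}}(\bm{\mu}^*_{(m)}-\bm{\mu})|$; each such term is $O_p(n^{1/2})$ by Condition 10 and Chebyshev, and a union bound over $M$ gives $O_p((Mn)^{1/2})$. To compare this with $\xi_F$ we combine $\xi_F^{-1}n^{1/2}=o_p(1)$ with $\xi_F^{-1}M=o_p(1)$; if that combination fails, we will instead use $\sum_m w_m\bm{\epsilon}^{\mathrm{T}}\mathbf{a}_m\le\|\bm{\epsilon}\|\,\|\sum_m w_m \mathbf{a}_m\|$ restricted to the span. The remaining two terms are handled by Cauchy--Schwarz, $\|\bm{\epsilon}\|\cdot o_p(\xi_F^{1/2})$, or more sharply by the trace argument with $\bm{\epsilon}^{\mathrm{T}}\mathbf{P}_{(m)}\bm{\epsilon}$.

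\textbf{Main obstacle.} The hardest step is Step 1, the uniform-in-$m$ control of $\widehat{\bm{\mu}}_{(m)}-\bm{\mu}^*_{(m)}$ when $p_m$ diverges. It needs a uniform lower bound on $\underline\lambda(n^{-1}\mathbf{X}_{(m)}^{\mathrm{T}}\mathbf{W}_{(m)\mathbf{s}}\mathbf{X}_{(m)})$ over $\mathbf{s}$ and $m$, obtained via elementwise Chebyshev bounds using $E(x_{il}x_{ik})^2\le\kappa$ and the Frobenius norm (costing a factor $\widetilde{p}^2$). A closely related difficulty is that the cross term $\bm{\epsilon}^{\mathrm{T}}\bm{\mu}^*$ must be shown negligible relative to $\xi_F$ uniformly over the $M$ models. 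Once these two bounds are secured, dividing the displayed inequality by $\xi_F$ gives $(1-\widehat\tau_{M_0})^2\le o_p(1)$, hence $\widehat\tau_{M_0}\overset{p}{\rightarrow}1$.
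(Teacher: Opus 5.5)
Your overall architecture matches the paper's proof. You compare $C_n(\widehat{\mathbf{w}})$ with $C_n$ at a weight supported on the quasi-correct models, and you use $\bm{\mu}^*(\mathbf{w})-\bm{\mu}=(1-\tau_{M_0})\{\bm{\mu}^*(\mathbf{w}_F)-\bm{\mu}\}$ to get the lower bound $(1-\widehat{\tau}_{M_0})^2\xi_F$. You then dispose of the estimation-error and trace terms through the Lemma~3 bound $\sum_m\|\widehat{\bm{\mu}}_{(m)}-\bm{\mu}^*_{(m)}\|^2=O_p(M\widetilde{p}^4+\underline{h}^{-2}M\widetilde{p}^2+nM\widetilde{p}^2\bar{h}^4)$, the bound $\max_m\operatorname{tr}\mathbf{P}_{(m)}=O_p(\underline{h}^{-2}\widetilde{p})$, and Condition 11. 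The step that does not close as you wrote it is the noise cross term. Your first-choice bound $O_p((Mn)^{1/2})$ is not $o_p(\xi_F)$ in general, because $\xi_F^{-1}(Mn)^{1/2}=\{(M/\xi_F)(n/\xi_F)\}^{1/2}$ and $n/\xi_F$ may diverge. For example, take $\widetilde{p}$ fixed, $M\asymp n^{1/5}$, $h_m\asymp n^{-1/6}$ and $\xi_F\asymp n^{0.58}$. Then Conditions 6 and 11 hold, but $(Mn)^{1/2}/\xi_F\to\infty$. Likewise, $\|\bm{\epsilon}\|\cdot o_p(\xi_F^{1/2})$ is only guaranteed to be $o_p(\xi_F)$ when $n=O(\xi_F)$; in the same example your Cauchy--Schwarz bound is of order $n^{0.77}$. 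So the claim that every right-hand term is $o_p(\xi_F)$ uniformly in $\mathbf{w}$, followed by division by $\xi_F$, is not justified.

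Your two fallbacks are the right repairs, but they must be carried out as follows.

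\textbf{First sum.} It equals $\bm{\epsilon}^{\mathrm{T}}(\bm{\mu}^*(\mathbf{w})-\bm{\mu})$, and $\bm{\mu}^*(\mathbf{w})-\bm{\mu}$ lies in the span of the $\sigma(\mathbf{X},\bm{\Psi})$-measurable vectors $\bm{\mu}^*_{(m)}-\bm{\mu}$, $m>M_0$. With $\mathbf{Q}$ the projection onto that span, $|\bm{\epsilon}^{\mathrm{T}}(\bm{\mu}^*(\mathbf{w})-\bm{\mu})|\le\|\mathbf{Q}\bm{\epsilon}\|\,\|\bm{\mu}^*(\mathbf{w})-\bm{\mu}\|$, where $\|\mathbf{Q}\bm{\epsilon}\|=O_p(M^{1/2})$. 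This is not uniformly $o_p(\xi_F)$, so it must be absorbed into the left side. Write $x=\|\bm{\mu}^*(\widehat{\mathbf{w}})-\bm{\mu}\|=(1-\widehat{\tau}_{M_0})L_n^{*1/2}(\widehat{\mathbf{w}}_F)$. Your basic inequality becomes $x^2\le 2x\{\|\mathbf{Q}\bm{\epsilon}\|+o_p(\xi_F^{1/2})\}+o_p(\xi_F)$. Hence $x^2=O_p(M)+o_p(\xi_F)=o_p(\xi_F)$, and $(1-\widehat{\tau}_{M_0})^2\xi_F\le x^2$ finishes the argument.

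\textbf{Second sum.} Split $\widehat{\bm{\mu}}_{(m)}-\bm{\mu}^*_{(m)}=(\mathbf{P}_{(m)}\bm{\mu}-\bm{\mu}^*_{(m)})+\mathbf{P}_{(m)}\bm{\epsilon}$. The linear part has $\sigma(\mathbf{X},\bm{\Psi})$-measurable coefficients, so its maximum over $m$ is $O_p(\{\sum_m\|\mathbf{P}_{(m)}\bm{\mu}-\bm{\mu}^*_{(m)}\|^2\}^{1/2})=o_p(\xi_F^{1/2})$. The quadratic part $\bm{\epsilon}^{\mathrm{T}}\mathbf{P}_{(m)}\bm{\epsilon}$ has conditional mean $\sigma^2\operatorname{tr}\mathbf{P}_{(m)}\neq 0$, and $\sum_m|\bm{\epsilon}^{\mathrm{T}}\mathbf{P}_{(m)}\bm{\epsilon}|=O_p(M\underline{h}^{-2}\widetilde{p})=o_p(\xi_F)$ using second moments only.

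The paper itself glosses over exactly this point. It asserts both cross terms are $O_p(n^{1/2})$ at the data-dependent $\widehat{\mathbf{w}}$ via a fixed-vector variance argument. It even states $E\{(\widehat{\bm{\mu}}(\mathbf{w}_0)-\bm{\mu}^*(\mathbf{w}_0))^{\mathrm{T}}\bm{\epsilon}\}=0$, overlooking the $\mathbf{P}\bm{\epsilon}$ component. Your uniform treatment, once completed as above, is therefore more rigorous than the published argument.
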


\begin{corollary}\label{corollary1} 
Under Conditions 3-6 and 9-11, we have
\begin{equation}
	(\widehat{\mu}_i(\widehat{\mathbf{w}})-\mu_i)^2=o_p\left(1\right),
\end{equation}
where $\widehat{\mu}_i(\widehat{\mathbf{w}})=\sum_{m=1}^{M}\widehat{w}_m\widehat{\mu}_{(m)i}$.
\end{corollary}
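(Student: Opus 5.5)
The plan is to decompose the pointwise error according to whether a candidate is quasi-correct, and then combine Theorem~\ref{th.3} with uniform consistency of the individual local-constant estimators. Since $\sum_m \widehat w_m=1$, write
\begin{equation*}
\widehat{\mu}_i(\widehat{\mathbf{w}})-\mu_i=\sum_{m=1}^{M_0}\widehat w_m(\widehat{\mu}_{(m)i}-\mu^{*}_{(m)i})+\sum_{m=1}^{M_0}\widehat w_m(\mu^{*}_{(m)i}-\mu_i)+\sum_{m=M_0+1}^{M}\widehat w_m(\widehat{\mu}_{(m)i}-\mu_i).
\end{equation*}
For a quasi-correct model, $\beta^{*}_{(m)j}=\beta_j$ on $\mathcal{A}_{true}$ and $\beta^{*}_{(m)j}=0$ otherwise, so $\mu^{*}_{(m)i}=\mu_i$ for $m\le M_0$. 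The middle term therefore vanishes identically. By $(a+b)^2\le 2a^2+2b^2$ and Jensen's inequality applied to the convex weights, it then suffices to show
\begin{equation*}
\max_{m\le M}\,(\widehat{\mu}_{(m)i}-\mu^{*}_{(m)i})^2=o_p(1)
\qquad\text{and}\qquad
(1-\widehat\tau_{M_0})\max_{m>M_0}(\widehat{\mu}_{(m)i}-\mu_i)^2=o_p(1).
\end{equation*}

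The first step is to bound each candidate's estimation error at $\mathbf{s}_i$. Write $\widehat{\bm\beta}_{(m)}(\mathbf{s}_i)-\bm\beta^{*}_{(m)}(\mathbf{s}_i)=\mathbf{A}_{m}^{-1}\mathbf{b}_{m}$ with $\mathbf{A}_m=n^{-1}\mathbf{X}_{(m)}^{\mathrm T}\mathbf{W}_{(m)\mathbf{s}_i}\mathbf{X}_{(m)}$. Using Conditions 3--5, one shows that $\mathbf{A}_m$ concentrates around $f(\mathbf{s}_i)\bm\Pi_m\mathbf{C}_\mathbf{X}\bm\Pi_m^{\mathrm T}$, uniformly in $m$, with error of order $\widetilde p(n\underline h^{2})^{-1/2}+\bar h^2$. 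Its smallest eigenvalue is then bounded below by $\underline c_\mathbf{X}\inf f/2$ with probability tending to one. The score $\mathbf{b}_m$ splits into two pieces. The noise-plus-misspecification part has mean zero by the defining equation for $\bm\beta^{*}_{(m)}$ and variance $O(\widetilde p/(n\underline h^2))$, using the moment bounds in Conditions 4 and 10. The smoothing bias is $O(\widetilde p^{1/2}\bar h^2)$ by the second-order Taylor expansion permitted by Condition 9. Multiplying by $\|\mathbf{X}_{(m)i}\|^2=O_p(\widetilde p)$ from Condition 4 gives
\begin{equation*}
(\widehat{\mu}_{(m)i}-\mu^{*}_{(m)i})^2=O_p\big(\widetilde p^2(n\underline h^2)^{-1}+\widetilde p^2\bar h^4\big).
\end{equation*}
The max over $m$ is handled by a union bound with Markov's inequality at fourth moments, which costs a factor $M$. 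This yields $O_p(M\widetilde p^2 n^{-1}\underline h^{-2}+M\widetilde p^2\bar h^4)$. Condition 11 makes this rate $o_p(1)$: it implies $\xi_F\ge n^{1/2}$, and because $L^{*}_n\le 2\sum_i\mu^{*2}+2\|\bm\mu\|^2=O_p(n)$ we also have $\xi_F=O_p(n)$. Hence $M\widetilde p^2\underline h^{-2}=o_p(n)$ and $nM\widetilde p^2\bar h^4=o_p(n)$, and dividing by $n$ gives what is needed.

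The second step controls the misspecified models. Their errors satisfy $(\widehat\mu_{(m)i}-\mu_i)^2\le 2(\widehat\mu_{(m)i}-\mu^{*}_{(m)i})^2+2(\mu^{*}_{(m)i}-\mu_i)^2$. The first term is $o_p(1)$ uniformly by the first step. The second is $O_p(1)$ for a fixed $i$, by Conditions 3, 7 and 10. Together these give $\max_{m>M_0}(\widehat\mu_{(m)i}-\mu_i)^2=O_p(1)$, possibly up to a $\log M$ or $M^{1/4}$ factor from the maximum. Theorem~\ref{th.3} gives $1-\widehat\tau_{M_0}=o_p(1)$. However, $o_p(1)\cdot O_p(1)$ alone does not absorb a growing factor, so I would instead bound the tail term through the Cauchy--Schwarz form $(1-\widehat\tau_{M_0})\sum_{m>M_0}\widehat w_m(\cdot)^2$.

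I expect this last point to be the main obstacle. It may require extracting from the proof of Theorem~\ref{th.3} a rate, for example $1-\widehat\tau_{M_0}=O_p(n^{1/2}/\xi_F)$ or faster, and pairing it with the fourth-moment bound in Condition 10, which makes $\max_m\mu^{*2}_{(m)i}=O_p(M^{1/2})$. If the rate is insufficient, the fallback is to interpret the corollary for fixed $M$, or to bound the weighted sum directly by $(1-\widehat\tau_{M_0})^{1/2}$ times a weighted fourth-moment average.
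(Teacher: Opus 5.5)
Your route is the paper's own. It splits $\widehat\mu_i(\widehat{\mathbf w})-\mu_i$ into quasi-correct and misspecified parts. It bounds the estimation errors $\|\mathbf X_{(m)i}\|^2\|\widehat{\bm\beta}_{(m)}(\mathbf s_i)-\bm\beta^*_{(m)}(\mathbf s_i)\|^2$ via Lemma 3; the paper sums over $m$ where you use a union bound, and both cost a factor $M$. It then turns the rate into $o_p(1)$ through Condition 11 and $\xi_F=O_p(n)$, which the paper uses implicitly when it writes $O_p(\xi_F/n)$. Your rate drops Lemma 3's $\widetilde{p}^4/n$ contribution, which comes from the random-design fluctuation of the smoothing-bias term, but $\xi_F^{-1}M\widetilde{p}^4=o_p(1)$ covers it, so nothing is lost. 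The step you leave open is the one the paper settles in one line. It writes the misspecification term as $(1-\widehat\tau_{M_0})^2\,|\sum_{m>M_0}\frac{\widehat w_m}{1-\widehat\tau_{M_0}}(\mu^*_{(m)i}-\mu_i)|^2$ and declares it $o_p(1)\cdot O_p(1)$ ``by Theorem 3 and Condition 10''.

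Your objection to that line is sound, so as it stands your proposal has a gap there. The paper's justification is complete only for bounded $M$, or under a uniform bound such as $\max_m|\mu^*_{(m)i}|=O_p(1)$. The renormalized weights are data-driven, so the convex combination is controlled only by $\max_{m>M_0}|\mu^*_{(m)i}-\mu_i|$. A per-model fourth-moment bound gives just $O_p(M^{1/2})$ for its square.

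Neither of your remedies proves the corollary as stated. Fixing $M$ changes the statement. Extracting a rate from Theorem 3 gives only $(1-\widehat\tau_{M_0})^2=O_p(\{M\widetilde{p}^4+\underline h^{-2}M\widetilde{p}^2+nM\widetilde{p}^2\bar h^4+n^{1/2}\}/\xi_F)$, and Condition 11 leaves no spare factor $M^{1/2}$ there.

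You can close the step under the stated conditions by avoiding the maximum over $m$ entirely:
\begin{itemize}
\item Before dividing by $L_n^*(\widehat{\mathbf w}_F)\ge\xi_F$, the argument for Theorem 3 actually yields $L_n^*(\widehat{\mathbf w})=o_p(\xi_F)=o_p(n)$.
\item Its bound on $\|\widehat{\bm\mu}(\widehat{\mathbf w})-\bm\mu^*(\widehat{\mathbf w})\|^2$ is also $o_p(n)$.
\item Hence $n^{-1}\sum_{j=1}^n D_j=o_p(1)$, where $D_j=(\widehat\mu_j(\widehat{\mathbf w})-\mu_j)^2$.
\item If the triples $(\mathbf X_j,\mathbf s_j,\epsilon_j)$ are i.i.d., as the appendix computations assume, then $C_n(\mathbf w)$ is invariant under relabelling the sample, so $D_1,\ldots,D_n$ are exchangeable.
\item Therefore $P(D_i>\varepsilon)=E\{n^{-1}\sum_j I(D_j>\varepsilon)\}\le E\min\{1,(n\varepsilon)^{-1}\sum_j D_j\}\to 0$ by bounded convergence.
\end{itemize}
This gives the fixed-$i$ statement with no uniform control over models.
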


\begin{remark}\label{rk3}
Theorem 3 shows that if at least one quasi-correct model exists in the candidate set, SVMMA tends to assign all weights to the quasi-correct models, which also demonstrates the effectiveness of the proposed method. 
Corollary 1 demonstrates that the proposed model averaging estimator converges to the true conditional mean. This result provides theoretical consistency for the SVMMA estimator when the candidate set contains at least one quasi-correct model.
\end{remark}

\begin{thm}\label{th.4} 
Suppose that Conditions 3-6 and 9-11 hold. Then as $n \rightarrow \infty$, we have
\begin{equation}
	\widetilde{\tau}_{M_0}\overset{p}{\rightarrow} 1.
\end{equation}
\end{thm}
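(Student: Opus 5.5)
The plan is to reduce Theorem~\ref{th.4} to the argument behind Theorem~\ref{th.3}. We treat $\widehat{C}_n(\mathbf{w})$ as a perturbation of $C_n(\mathbf{w})$ that is uniformly negligible relative to $\xi_F$. Writing $\widehat{\mathbf{e}}(\mathbf{w})=\mathbf{Y}-\widehat{\bm{\mu}}(\mathbf{w})$, we have
\begin{equation*}
\widehat{C}_n(\mathbf{w})=C_n(\mathbf{w})+2(\widehat{\sigma}^2_{M^*}-\sigma^2)\operatorname{tr}(\mathbf{P}(\mathbf{w})).
\end{equation*}
The proof of Theorem~\ref{th.3} presumably proceeds by contradiction. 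Suppose $\widehat{\tau}_{M_0}\le 1-\delta$ with nonvanishing probability. Split any $\mathbf{w}$ into its quasi-correct part and its misspecified part. Then $C_n(\mathbf{w})-\|\bm{\epsilon}\|^2$ is bounded below by a term of order $(1-\tau_{M_0})^2\xi_F$, up to cross terms that are $o_p(\xi_F)$ uniformly in $\mathbf{w}$. Meanwhile $C_n(\mathbf{w}_1^0)-\|\bm{\epsilon}\|^2=o_p(\xi_F)$ for a quasi-correct model. This contradicts minimality. I will rerun exactly this chain with $\widehat{C}_n$ in place of $C_n$. It therefore suffices to show
\begin{equation*}
\sup_{\mathbf{w}\in\mathcal{H}_n}\bigl|\widehat{\sigma}^2_{M^*}-\sigma^2\bigr|\operatorname{tr}(\mathbf{P}(\mathbf{w}))=o_p(\xi_F).
\end{equation*}

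First, bound the trace. Each $\mathbf{P}_{(m)}$ has diagonal entries
\begin{equation*}
\mathbf{X}_{(m)i}^{\mathrm{T}}(\mathbf{X}_{(m)}^{\mathrm{T}}\mathbf{W}_{(m)\mathbf{s}_i}\mathbf{X}_{(m)})^{-1}\mathbf{X}_{(m)i}K_{h_m}(0).
\end{equation*}
By Conditions 3--6, $n^{-1}\mathbf{X}_{(m)}^{\mathrm{T}}\mathbf{W}_{(m)\mathbf{s}}\mathbf{X}_{(m)}$ converges uniformly in $\mathbf{s}$ to $f(\mathbf{s})\bm{\Pi}_m\mathbf{C}_{\mathbf{X}}\bm{\Pi}_m^{\mathrm{T}}$, whose eigenvalues are bounded below by Condition 4. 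The first part of Condition 4 then gives
\begin{equation*}
\sup_{\mathbf{w}}\operatorname{tr}(\mathbf{P}(\mathbf{w}))\le\max_m\operatorname{tr}(\mathbf{P}_{(m)})=O_p(\underline{h}^{-2}\widetilde{p}).
\end{equation*}
These are the same matrix bounds that should appear in the proof of Theorem~\ref{th.3}, and I will reuse them.

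Second, show $\widehat{\sigma}^2_{M^*}=O_p(1)$. Note that $M^*$ is the largest model, which contains all covariates and is therefore quasi-correct. Its residual sum of squares equals
\begin{equation*}
\|(\mathbf{I}_n-\mathbf{P}_{(M^*)})\bm{\mu}\|^2+2\bm{\mu}^{\mathrm{T}}(\mathbf{I}_n-\mathbf{P}_{(M^*)})^{\mathrm{T}}(\mathbf{I}_n-\mathbf{P}_{(M^*)})\bm{\epsilon}+\|(\mathbf{I}_n-\mathbf{P}_{(M^*)})\bm{\epsilon}\|^2.
\end{equation*}
The bias term is $O_p(n\bar{h}^4\widetilde{p})$ from the local-constant Taylor expansion under Conditions 3 and 9. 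The noise term is $n\sigma^2+O_p(\underline{h}^{-2}\widetilde{p})+O_p(n^{1/2})$. The denominator satisfies $n-\operatorname{tr}(\mathbf{P}_{(M^*)})=n(1+o_p(1))$ by Condition 6. Hence $\widehat{\sigma}^2_{M^*}-\sigma^2=o_p(1)$, and certainly $O_p(1)$. Then
\begin{equation*}
|\widehat{\sigma}^2_{M^*}-\sigma^2|\sup_{\mathbf{w}}\operatorname{tr}(\mathbf{P}(\mathbf{w}))=O_p(\underline{h}^{-2}\widetilde{p}),
\end{equation*}
and this is $o_p(\xi_F)$ by the third part of Condition 11, since $M\widetilde{p}^2\ge\widetilde{p}$.

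Third, plug this into the contradiction argument. Compare $\widehat{C}_n(\widetilde{\mathbf{w}})\le\widehat{C}_n(\mathbf{w}_1^0)$. Every term in Theorem~\ref{th.3}'s decomposition is unchanged except for an additive $o_p(\xi_F)$ perturbation, uniform in $\mathbf{w}$. The conclusion $(1-\widetilde{\tau}_{M_0})^2\xi_F\le o_p(\xi_F)$ follows, giving $\widetilde{\tau}_{M_0}\overset{p}{\rightarrow}1$.

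The main obstacle is the second step. It requires controlling $\widehat{\sigma}^2_{M^*}$ using only Conditions 3--6 and 9--11, without Condition 1's moment assumption. For the quadratic form in $\bm{\epsilon}$, I would use Chebyshev with only second moments: the conditional variance of $\bm{\epsilon}^{\mathrm{T}}\mathbf{A}\bm{\epsilon}$ needs fourth moments. If these are unavailable, I would instead bound $\|\mathbf{P}_{(M^*)}\bm{\epsilon}\|^2$ in expectation, which gives $O_p(\underline{h}^{-2}\widetilde{p})$. I would also bound $\|\bm{\epsilon}\|^2/n$ by the law of large numbers, which gives $O_p(1)$. Together these give only boundedness, but boundedness suffices here.
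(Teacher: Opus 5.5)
Your proposal follows the paper's proof. You write $\widehat{C}_n(\mathbf{w})=C_n(\mathbf{w})+2(\widehat{\sigma}^2_{M^*}-\sigma^2)\operatorname{tr}(\mathbf{P}(\mathbf{w}))$ and establish $\widehat{\sigma}^2_{M^*}=O_p(1)$. You then bound the trace uniformly by $O_p(\underline{h}^{-2}\widetilde{p})$ via Lemma 2, absorb it into $o_p(\xi_F)$ with Condition 11, and rerun the comparison from Theorem 3 with $\widehat{\bm{\Omega}}$ in place of $\bm{\Omega}$.

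One correction is needed. Nothing in the hypotheses makes $M^*$ contain all covariates, since the candidate set need not be nested or include the full model. So $M^*$ need not be quasi-correct, and your claim $\widehat{\sigma}^2_{M^*}\to\sigma^2$ is unsupported. Boundedness, which is all you actually use, holds anyway. Split $\mathbf{Y}-\mathbf{P}_{(M^*)}\mathbf{Y}=(\mathbf{Y}-\bm{\mu}^{*}_{(M^*)})-(\widehat{\bm{\mu}}_{(M^*)}-\bm{\mu}^{*}_{(M^*)})$. The first piece is $O_p(n)$ by Condition 10. The second is $o_p(\xi_F)=o_p(n)$ by Lemma 3 and Condition 11.

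Those same conditions, rather than Condition 6 alone, are also what give $n-\operatorname{tr}(\mathbf{P}_{(M^*)})=n\{1+o_p(1)\}$ when $\widetilde{p}$ diverges. This is what the paper's one-line appeal to Conditions 10--11 amounts to. Since Lemma 3 uses only second moments of $\bm{\epsilon}$, this route also removes your concern about Condition 1.
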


\begin{corollary}\label{corollary2} 
Under Conditions 3-6 and 9-11, we have
\begin{equation}
	(\widehat{\mu}_i(\widetilde{\mathbf{w}})-\mu_i)^2=o_p\left(1\right),
\end{equation}
where $\widehat{\mu}_i(\widetilde{\mathbf{w}})=\sum_{m=1}^{M}\widetilde{w}_m\widehat{\mu}_{(m)i}$.
\end{corollary}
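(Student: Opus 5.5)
Corollary 2 concerns the feasible weights $\widetilde{\mathbf{w}}$, and I would prove it the same way Corollary 1 is proved, using Theorem 4 in place of Theorem 3. Write
\begin{equation*}
\widehat{\mu}_i(\widetilde{\mathbf{w}})-\mu_i=\sum_{m=1}^{M_0}\widetilde{w}_m(\widehat{\mu}_{(m)i}-\mu_i)+\sum_{m=M_0+1}^{M}\widetilde{w}_m(\widehat{\mu}_{(m)i}-\mu_i).
\end{equation*}
By Cauchy--Schwarz (Jensen on the simplex), the square of each piece is at most the weighted average of the squared errors times the relevant weight mass. This gives
\begin{equation*}
(\widehat{\mu}_i(\widetilde{\mathbf{w}})-\mu_i)^2\le 2\,\widetilde{\tau}_{M_0}\max_{1\le m\le M_0}(\widehat{\mu}_{(m)i}-\mu_i)^2+2(1-\widetilde{\tau}_{M_0})\max_{M_0<m\le M}(\widehat{\mu}_{(m)i}-\mu_i)^2 .
\end{equation*}
It then suffices to show two things: the first maximum is $o_p(1)$, and the second maximum is $O_p(1)$, or at least grows slowly enough that multiplying by $1-\widetilde{\tau}_{M_0}=o_p(1)$ still gives $o_p(1)$.

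\textbf{Quasi-correct models.} For $m\le M_0$, split $\widehat{\mu}_{(m)i}-\mu_i=(\widehat{\mu}_{(m)i}-\mu^*_{(m)i})+(\mu^*_{(m)i}-\mu_i)$. Quasi-correctness gives $\boldsymbol{\beta}^*_{(m)}$ equal to the true coefficients on $\mathcal{A}_{true}$ and zero elsewhere, so $\mu^*_{(m)i}=\mu_i$. For the remaining term, $\widehat{\mu}_{(m)i}-\mu^*_{(m)i}=\mathbf{X}_{(m)i}^{\mathrm{T}}\{\widehat{\boldsymbol{\beta}}_{(m)}(\mathbf{s}_i)-\boldsymbol{\beta}^*_{(m)}(\mathbf{s}_i)\}$. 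I would bound this by $\|\mathbf{X}_{(m)i}\|\,\|\widehat{\boldsymbol{\beta}}_{(m)}(\mathbf{s}_i)-\boldsymbol{\beta}^*_{(m)}(\mathbf{s}_i)\|$, where $\|\mathbf{X}_{(m)i}\|=O_p(\widetilde{p}^{1/2})$ by Condition 4. I would control the coefficient error with the uniform local-constant expansion presumably established in the proof of Theorem 3, which uses Conditions 3--6 and 9:
\begin{equation*}
\max_m\sup_{\mathbf{s}}\|\widehat{\boldsymbol{\beta}}_{(m)}(\mathbf{s})-\boldsymbol{\beta}^*_{(m)}(\mathbf{s})\|=O_p\bigl(\widetilde{p}^{1/2}\{\bar h^2+(n\underline{h}^2)^{-1/2}\}\bigr)
\end{equation*}
up to logarithmic factors. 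This comes from the kernel-weighted Gram matrix converging to $f(\mathbf{s})\boldsymbol{\Pi}_m\mathbf{C}_\mathbf{X}\boldsymbol{\Pi}_m^{\mathrm{T}}$, whose eigenvalues are bounded below, together with the bias and variance bounds. Condition 11 (for instance $\xi_F^{-1}nM\widetilde p^2\bar h^4=o_p(1)$ combined with $\xi_F\le O_p(n)$ by Condition 7) forces $M\widetilde p^2\bar h^4\to0$ and $\widetilde p^2\underline h^{-2}M/n\to 0$. Therefore the product is $o_p(1)$ uniformly over $m\le M_0$.

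\textbf{Misspecified models.} For $m>M_0$, the same expansion gives $\widehat{\mu}_{(m)i}=\mu^*_{(m)i}+o_p(1)$. Since $|\mu_i|$ is $O_p(1)$ and Condition 10 gives $\mu^*_{(m)i}=O_p(1)$, the second maximum is bounded, at least in the form $\sum_m \widetilde w_m(\mu^*_{(m)i}-\mu_i)^2=O_p(1)$ once the maximum over $m$ is replaced by the weighted sum. Combining this with Theorem 4 proves the claim.

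\textbf{Main obstacle.} The delicate step is uniformity over a divergent number $M$ of candidate models at a single point $\mathbf{s}_i$. A pointwise $O_p(1)$ bound per model does not give a bound on the maximum. I would therefore keep the weighted form rather than taking the maximum, and use Condition 10 via Markov: $E\sum_m\widetilde w_m\mu^{*2}_{(m)i}\le\max_m E\mu^{*2}_{(m)i}\le\delta_1^{1/2}$, which holds deterministically over weights once the maximum is moved outside the expectation. For the estimation error, I would use a union bound with fourth moments, which is where the factor $M$ in Condition 11 is consumed.
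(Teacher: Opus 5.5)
Your route is the paper's. Corollary 2 is proved there by rerunning the proof of Corollary 1 with Theorem 4 in place of Theorem 3. That proof splits into quasi-correct and misspecified models and controls the estimation error by $\sum_m\|\mathbf X_{(m)i}\|^2\|\widehat{\bm\beta}_{(m)}(\mathbf s_i)-\bm\beta^*_{(m)}(\mathbf s_i)\|^2$ via Lemma 3 and Condition 11. That part of your plan is sound, with three minor adjustments:
\begin{itemize}
\item Lemma 3's rate is $p_m^3/n+p_m/(nh_m^2)+p_mh_m^4$. The $p_m^3/n$ term you omit is absorbed by $\xi_F^{-1}M\widetilde p^4=o_p(1)$.
\item You only need Lemma 3 at the single point $\mathbf s_i$, summed over $m$. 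No supremum over $\mathbf s$ is required, so no logarithmic factors appear.
\item $\xi_F=O_p(n)$ already follows from Condition 10, because $\bm\mu=\bm\mu^*_{(m)}$ for $m\le M_0$. You do not need Condition 7, which is not assumed here.
\end{itemize}

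The step that fails is your fix for the obstacle you correctly identified. The inequality $E\sum_m\widetilde w_m\mu^{*2}_{(m)i}\le\max_mE\mu^{*2}_{(m)i}$ does not hold in general. $\widetilde{\mathbf w}$ is a function of the whole sample, including $(\mathbf X_i,\mathbf s_i,y_i)$, so it may be correlated with the $\mu^*_{(m)i}$. If it loaded on the index maximizing $|\mu^*_{(m)i}-\mu_i|$, the weighted average would equal the maximum. Pathwise you only have $\sum_{m>M_0}\widetilde w_m(\mu^*_{(m)i}-\mu_i)^2\le(1-\widetilde\tau_{M_0})\max_{m>M_0}(\mu^*_{(m)i}-\mu_i)^2$, and the maximum cannot be moved outside the expectation. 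Condition 10 plus a union bound gives only $\max_m|\mu^*_{(m)i}-\mu_i|=O_p(M^{1/4})$. So the weighted form does not remove the uniformity problem. You are left needing $(1-\widetilde\tau_{M_0})^2M^{1/2}=o_p(1)$, which Theorem 4 does not give, since it is a consistency statement with no rate.

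The paper does not resolve this either: in the proof of Corollary 1 it simply asserts that $\sum_{m>M_0}\frac{\widehat w_m}{1-\widehat\tau_{M_0}}(\mu^*_{(m)i}-\mu_i)$ is $O_p(1)$ ``by Condition 10''. A correct argument needs one of two things.
\begin{itemize}
\item A uniform envelope $\max_{1\le m\le M}|\mu^*_{(m)i}|=O_p(1)$. This holds, e.g., if $M$ is bounded, or if $\|\mathbf X_i\|$ is bounded and $\sup_{\mathbf s}\|\bm\beta^*_{(m)}(\mathbf s)\|$ is bounded uniformly in $m$.
\item A rate $1-\widetilde\tau_{M_0}=o_p(M^{-1/4})$. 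The basic inequality $\widehat C_n(\widetilde{\mathbf w})\le\widehat C_n(\mathbf w_0)$ used for Theorems 3--4 bounds $(1-\widetilde\tau_{M_0})^2\xi_F$ by the remainder terms there. So Condition 11 with an extra factor $M^{1/2}$ in each term would suffice.
\end{itemize}
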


Theorem 4 and Corollary 2 show that Theorem 3 and Corollary 1 still hold when $\bm{\Omega}$ is replaced by $\widehat{\bm{\Omega}}$.

\section{Simulation Study}

In this section, we consider three simulation designs to validate the theories presented in Section 3. Designs 1 and 2 examine scenarios where no quasi-correct model is included in the candidate set, aiming to compare the finite-sample prediction performance of the SVMMA estimator with those of existing approaches.
Design 3 investigates situations where at least one quasi-correct model is present in the candidate set, in order to verify whether the sum of weights on the quasi-correct models gradually approaches to one as the sample size increases.

\subsection{Data Generation Process}

\textbf{Design 1.} The samples $\{y_i, \mathbf{X}_i, \mathbf{s}_i\}$, $i=1,2,\ldots,n$, are generated from the model
$$y_i = \mu_i+c\varepsilon_i=\sum_{j=1}^{M+200}\theta_jx_{ij}+c\varepsilon_i,i = 1,2,\ldots,n,$$
where the covariate $x_{i1} = 1$, and $x_{ij}$ for $j=2,\ldots,M+200$ are independently and identically distributed as $N(0,1)$. The coefficients are set as $\theta_1 = 10^{-\alpha-0.5}$, $\theta_j = j^{-\alpha-0.5}$ for $j \geq 2$, with $\alpha$ being the decay parameter controlling the rate of coefficient decay, taken as 0.5 or 1. The constant $c$ is used to control the proportion of covariate information in the total variance, and is chosen such that the coefficient of determination $R^2 = \mathrm{var}(\mu_i)/\mathrm{var}(y_i)$ varies between 0.1 and 0.9. The random error term $\varepsilon_i$ is considered under both homoscedastic and heteroscedastic settings: (Case i) $\varepsilon_i \overset{\text{i.i.d}}{\sim} N(0,1)$, (Case ii) $\varepsilon_i \overset{\text{i.i.d}}{\sim} t(5)$, (Case iii) $\varepsilon_i = \sqrt{a_i} u_i$, where $a_i = 0.2 + 0.5x_{i2}^2$, $u_i \overset{\text{i.i.d}}{\sim} N(0,1)$. In Design 1, the samples are from the traditional linear regression model.

\textbf{Design 2.} The sample data $\{y_i, \mathbf{X}_i, \mathbf{s}_i\}$, $i=1,2,\ldots,n$, are generated from the model
$$y_i = \mu_i+c\varepsilon_i=\sum_{j=1}^{M+200}\theta_jF(\mathbf{s}_i)x_{ij}+c\varepsilon_i,i = 1,2,\ldots,n,$$
where $\mathbf{s}_i = (s_{i1}, s_{i2})^{\mathrm{T}}$ are taken from grid points $(\frac{j}{\sqrt{n}}, \frac{k}{\sqrt{n}})^{\mathrm{T}}$, $j,k = 1,2,\ldots,\sqrt{n}$, over the square region $[0,1] \times [0,1]$. We set $F(\mathbf{s}_i) = 1 - (1-2s_{i1})^2 - (1-2s_{i2})^2$. The configurations of the remaining variables and parameters are consistent with Design 1. Clearly, the samples are from the SVCM in this experiment.

We set the sample sizes to $n = 100$, $225$, and $400$, with the number of simulation replications N = 200. The number of candidate models M is determined following the criterion in \cite{H2007}, taken as the largest integer not exceeding $3n^{\frac{1}{3}}$. For simplicity, we consider only nested candidate models, where the $m$th candidate model includes the first 1 to $m$ covariates from $x_{i1}, x_{i2}, \ldots, x_{i,M+200}$.

\textbf{Design 3.} The sample data $\{y_i, \mathbf{X}_i, \mathbf{s}_i\}$, $i=1,2,\ldots,n$, are generated from the model
$$y_i = \mu_i+c\varepsilon_i=\sum_{j=1}^{6}\theta_jF(\mathbf{s}_i)x_{ij}+c\varepsilon_i,i = 1,2,\ldots,n,$$
were $\mathbf{X}_{i} = (x_{i1}, \ldots, x_{i6})^{\mathrm{T}}$ follows a multivariate normal distribution with mean zero and covariance matrix $\bm{\Sigma}_x$. The diagonal elements of $\bm{\Sigma}_x$ are 1 and all off-diagonal elements are 0.5. We set $\bm{\theta}=(1,1.2,-1,0.9,0,0)^{\mathrm{T}}$, while keeping the settings of other variables and parameters consistent with Design 2.

In this design, all explanatory variables are used to construct candidate models, with each candidate model containing at least one explanatory variable. This results in $2^6 - 1 = 63$ candidate models in total. Among them, there are $2^2=4$ quasi-correct models, each consisting of all four explanatory variables with non-zero coefficient functions, together with any subset of the explanatory variables whose coefficient functions are zero. We set the sample sizes to $n = 100$, $169$, $225$, $324$, and $400$, the number of simulation replications to $N = 200$, and vary the $R^2$ between 0.5 and 0.9.

\subsection{Evaluation and Comparison}

In the simulation study, we employ the Gaussian kernel function to estimate each candidate model. The spatial distance metric is chosen with  q = 2 , and the bandwidth in the candidate models is selected via the CV criterion, see Section 2.1. 

We compare the finite sample performance of the SVMMA estimator with the AIC- and
BIC-based model selection methods and existing model averaging estimators. The AIC, BIC, and AICc \citep{HDM06} criteria for the $m$th candidate model are defined respectively as $\mathrm{AIC}_m = \log(\widehat{\sigma}_m^2) + 2n^{-1} \operatorname{tr}(\mathbf{P}_{(m)})$,
$\mathrm{BIC}_m = \log(\widehat{\sigma}_m^2) + n^{-1} \operatorname{tr}(\mathbf{P}_{(m)}) \log(n)$
and $\mathrm{AICc}_m = \log(\widehat{\sigma}_m^2) + \frac{n + \operatorname{tr}(\mathbf{P}_{(m)})}{n - \operatorname{tr}(\mathbf{P}_{(m)}) - 2}$,
where $\widehat{\sigma}_m^2 = n^{-1} \|\mathbf{Y} - \widehat{\bm{\mu}}_{(m)}\|^2$ and $\mathbf{P}_{(m)}$ is given in \eqref{eq6} of Section 2.1. Each of the three criteria selects the optimal SVCM among all candidate models that minimizes the corresponding criterion score. Then, one can use the optimal SVCM (chosen by AIC or BIC or AICc) to estimate the conditional mean.  We refer them as AIC, BIC and AICc. Meanwhile, the performance of the proposed SVMMA estimator heavily depends on the weight selection criterion, and hence we consider the popular smoothed AIC (SAIC) and smoothed BIC (SBIC) \citep{BBA97,GZ2024} to determine the model weights. Specifically, the SAIC (SBIC) assigns the model weight to the $m$th candidate model by $\widehat w_m^\mathrm{SAIC} = \frac{\exp(-\mathrm{AIC}_m / 2)}{\sum_{l=1}^{M} \exp(-\mathrm{AIC}_l / 2)}$ ($\widehat w_m^\mathrm{SBIC} = \frac{\exp(-\mathrm{BIC}_m / 2)}{\sum_{l=1}^{M} \exp(-\mathrm{BIC}_l / 2)}$) for $m=1,\cdots,M$. We refer them as SAIC and SBIC. 
 Since the data-generating process in Design 1 adheres to a linear specification, we further benchmark our SVMMA estimator against two linear framework counterparts, namely the MMA \citep{H2007} and JMA \citep{HR2012} estimators to assess the prediction robustness under candidate model misspecifications. For the linear model scenario, we adopt the identical AIC and BIC configurations following \cite{H2007}. 
For the MMA and JMA, the candidate models consist of linear models, while for all other methods, the candidate models are the SVCMs.

Following \cite{H2007}, the performance of the estimators is evaluated by the relative risk (Risk),
$$\mathrm{Risk} = \frac{ \frac{1}{N} \sum_{j=1}^{N} \| \widehat{\bm{\mu}}^{(j)} - \bm{\mu}^{(j)} \|^2 }{ \frac{1}{N} \sum_{j=1}^{N} \| \widehat{\bm{\mu}}^{(j)}(\bar{\mathbf{w}}^{(j)}) - \bm{\mu}^{(j)} \|^2 },$$
where $\widehat{\bm{\mu}}^{(j)}$ and $\bm{\mu}^{(j)}$ denote the respective generic estimator and true conditional mean at the $j$th repetition, $\bar{\mathbf{w}}^{(j)} = \arg\min_{\mathbf{w} \in \mathcal{H}_n} \| \widehat{\bm{\mu}}^{(j)}(\mathbf{w}) - \bm{\mu}^{(j)} \|^2$ represents the infeasible optimal weight vector and $\widehat{\bm{\mu}}^{(j)}(\bar{\mathbf{w}}^{(j)}) = \sum_{m=1}^{M} \bar w_m^{(j)} \widehat{\bm{\mu}}^{(j)}_{(m)}$ are the infeasible optimal model averaging estimator with $\widehat{\bm{\mu}}^{(j)}_{(m)}$ being the estimator of $\bm{\mu}^{(j)}$ under the $m$th candidate model at the $j$th repetition. In simulation studies, we consider two type of infeasible optimal model averaging estimators, including the infeasible optimal linear model averaging (OLMA) and infeasible optimal spatially varying coefficient model averaging (OSVCMA). Specifically, when the candidate model is a linear model (or a SVCM), we refer to $\widehat{\bm{\mu}}^{(j)}(\bar{\mathbf{w}}^{(j)})$ as the infeasible OLMA (or OSVCMA). Thus, the relative risk is measured against either the infeasible OLMA risk or infeasible OSVCMA risk. 

\subsection{Results}

Figures~\ref{fig:1}--\ref{fig:3} display the simulation results for Design 1. 
As the true data generation process is from the linear model, the relative risk in Design 1 is measured against the infeasible OLMA risk. Given that the true data generation process is derived from a linear model, it comes as no surprise that the MMA and JMA methods exhibit the lowest relative risk across the set of comparative approaches, but SVMMA achieve competitive performance when compared with the MMA and JMA. This indicates that SVMMA has exhibits broader adaptability and greater flexibility. In addition, it is easy to see that SVMMA obviously performs better than the single model-based methods (e.g., a single optimal model chosen by AIC or BIC), showing the benefits of model averaging. Meanwhile, SVMMA outperforms SAIC and SBIC under all situations, which effectively verifies the asymptotic optimality of SVMMA given in Theorem \ref{th.2}.

\begin{figure}[H]
	\centering
	\includegraphics[height=11cm]{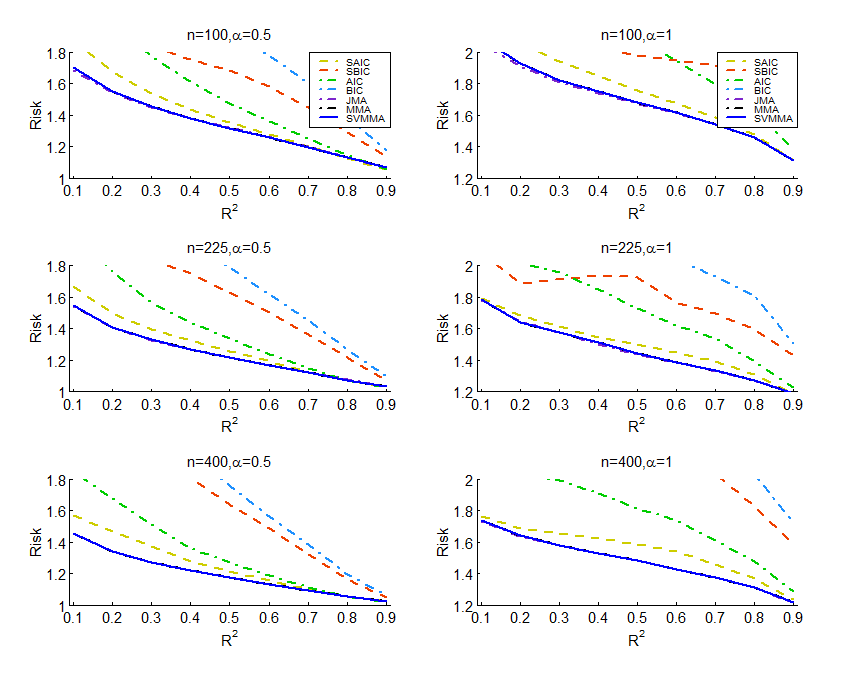}
	\vspace{-5mm}
	\caption{Risk Comparisons under Design 1 for Case i. The relative risk is measured against the infeasible OLMA risk.}
	\label{fig:1}
\end{figure}
\vspace{-5mm}

\begin{figure}[H]
	\centering
	\includegraphics[height=11cm]{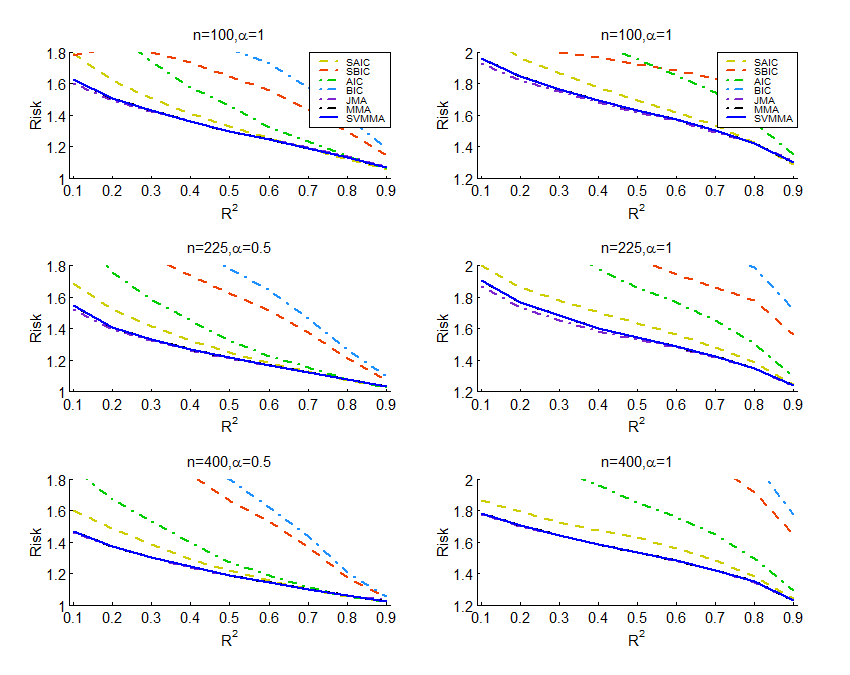}
	\vspace{-5mm}
	\caption{Risk Comparisons under Design 1 for Case ii. The relative risk is measured against the infeasible OLMA risk.}
	\label{fig:2}
\end{figure}

\begin{figure}[H]
	\centering
	\includegraphics[height=11cm]{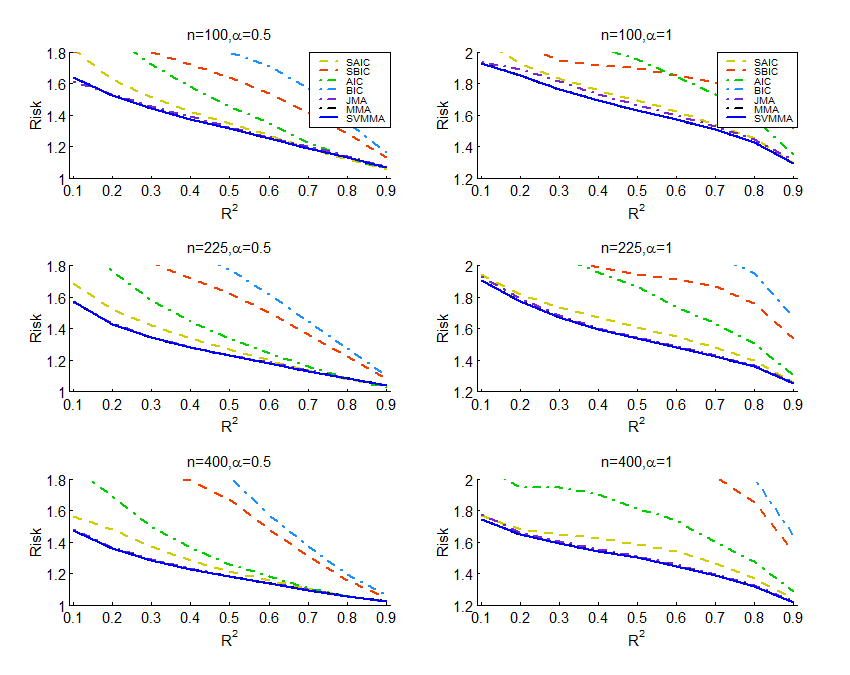}
	\vspace{-5mm}
	\caption{Risk Comparisons under Design 1 for Case iii. The relative risk is measured against the infeasible OLMA risk.}
	\label{fig:3}
\end{figure}

\begin{figure}[H]
	\centering
	\includegraphics[height=11cm]{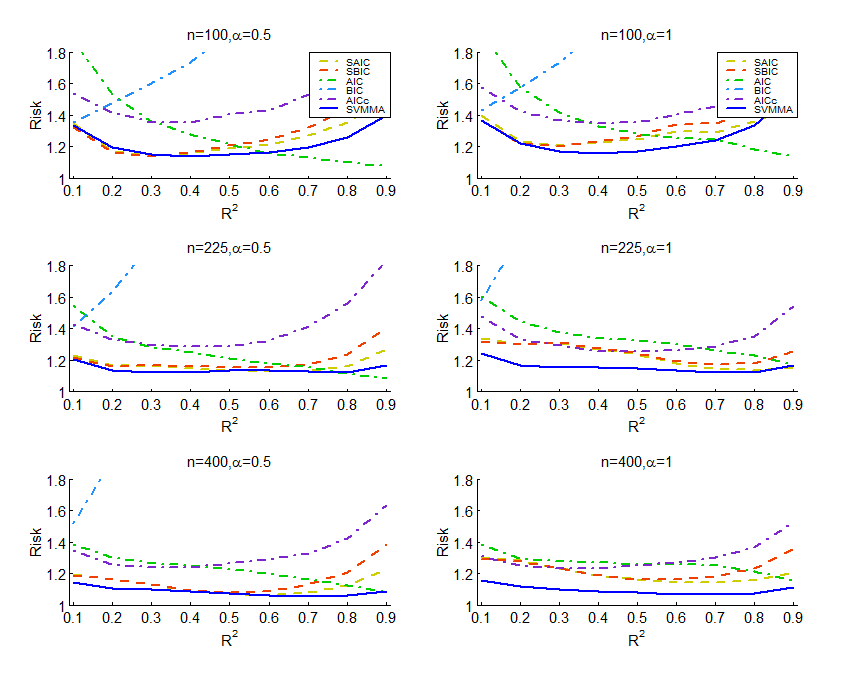}
	\vspace{-5mm}
	\caption{Risk Comparisons under Design 2 for Case i. The relative risk is measured against the infeasible OSVCMA risk.}
	\label{fig:4}
\end{figure}

\begin{figure}[H]
	\centering
	\includegraphics[height=11cm]{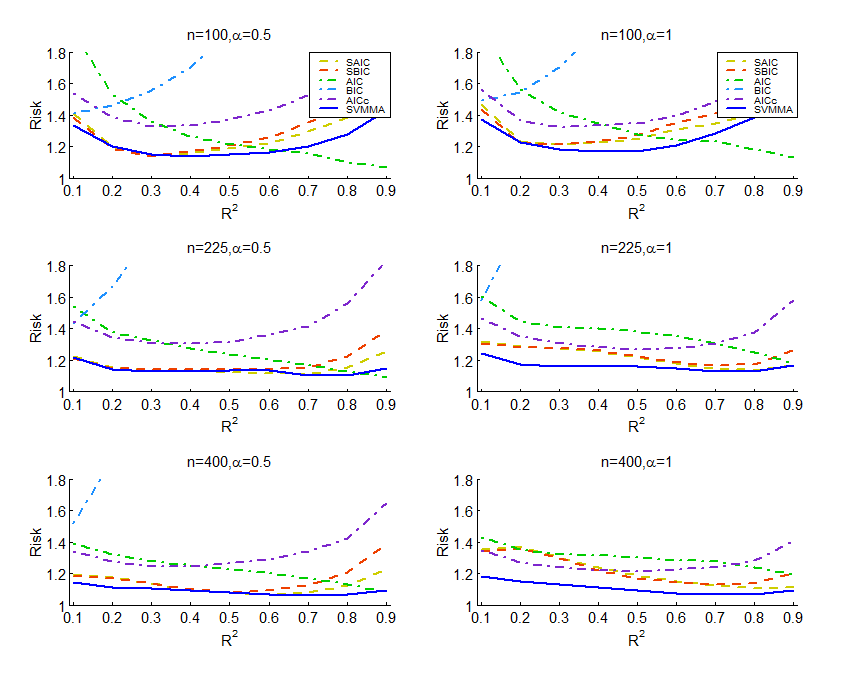}
	\vspace{-5mm}
	\caption{Risk Comparisons under Design 2 for Case ii. The relative risk is measured against the infeasible OSVCMA risk.}
	\label{fig:5}
\end{figure}

\begin{figure}[H]
	\centering
	\includegraphics[height=11cm]{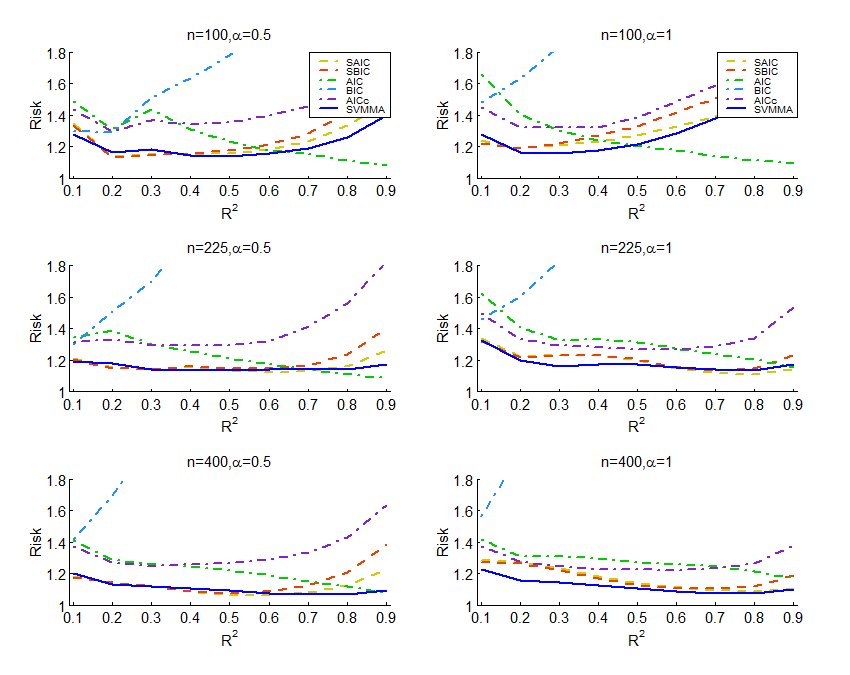}
	\vspace{-5mm}
	\caption{Risk Comparisons under Design 2 for Case iii. The relative risk is measured against the infeasible OSVCMA risk.}
	\label{fig:6}
\end{figure}

\begin{figure}[H]
	\centering
	\includegraphics[height=11cm]{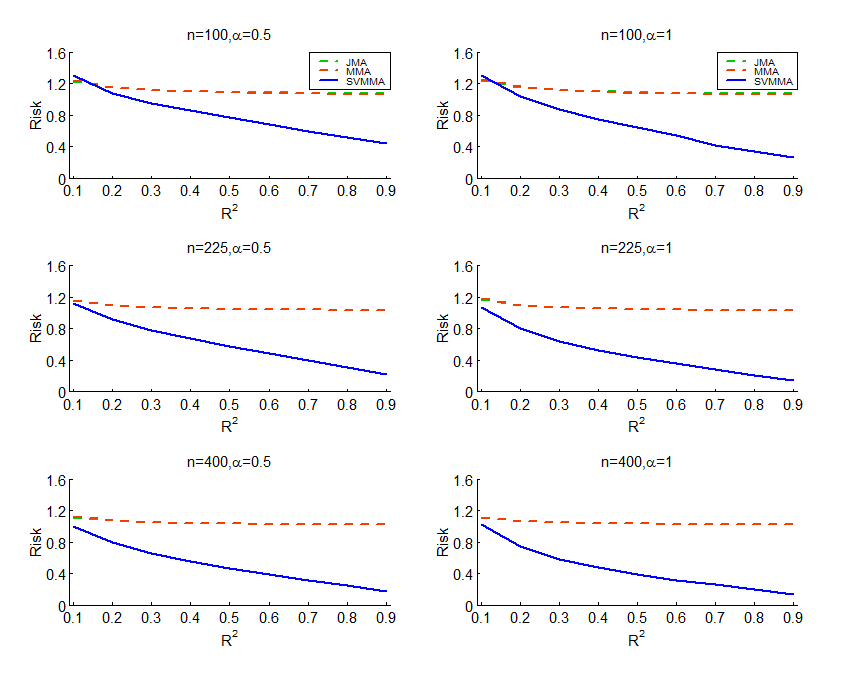}
	\vspace{-5mm}
	\caption{Risk Comparisons under Design 2 for Case i.The relative risk is measured against the infeasible OLMA risk.}
	\label{fig:7}
\end{figure}

\begin{figure}[H]
	\centering
	\includegraphics[height=11cm]{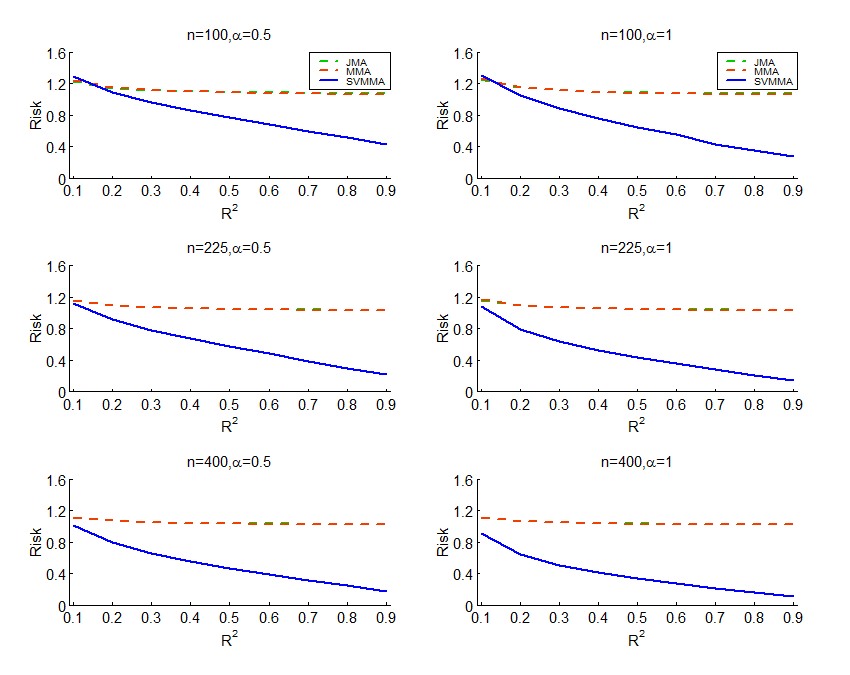}
	\vspace{-5mm}
	\caption{Risk Comparisons under Design 2 for Case ii. The relative risk is measured against the infeasible OLMA risk.}
	\label{fig:8}
\end{figure}

\begin{figure}[H]
	\centering
	\includegraphics[height=11cm]{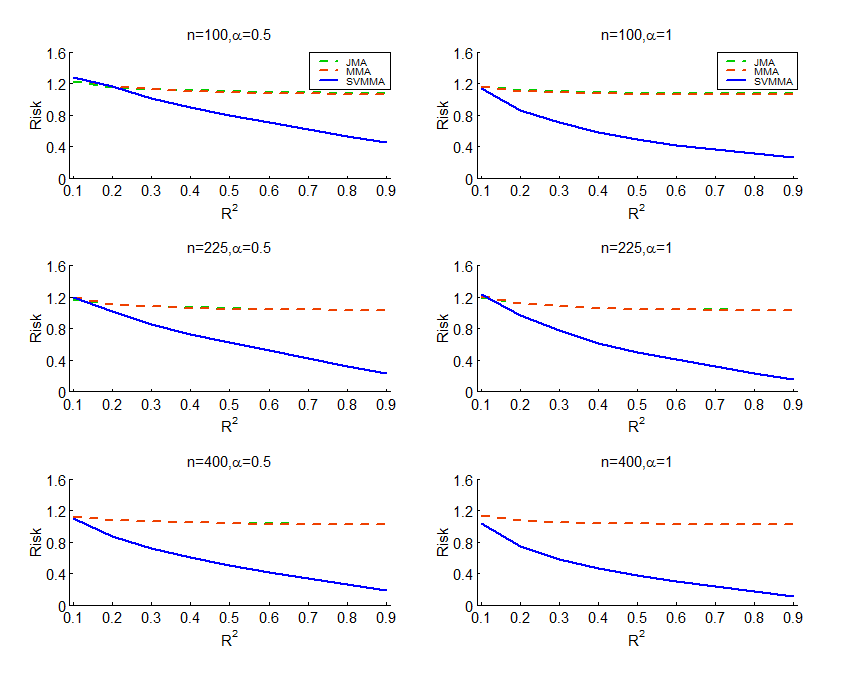}
	\vspace{-5mm}
	\caption{Risk Comparisons under Design 2 for Case iii. The relative risk is measured against the infeasible OLMA risk.}
	\label{fig:9}
\end{figure}

To conduct a comprehensive analysis, the relative risk in Design 2 is measured against the infeasible OSVCMA risk or OLMA risk. The corresponding results are listed in Figures~\ref{fig:4}--\ref{fig:9}.
It is clear from Figures~\ref{fig:4}--\ref{fig:6} that SVMMA generally leads to the lowest risk, especially for $\alpha=1$, highlighting the importance of accounting for spatial heterogeneity. Further, Figures~\ref{fig:7}--\ref{fig:9} illustrate that SVMMA has distinct advantages over MMA and JMA, with its superiority becoming increasingly evident as $R^2$ increases.
Moreover, the relative risk of SVMMA is less than one in most areas of $R^2$, indicating that SVMMA is superior to the infeasible OLMA approach. Actually, such a phenomenon is not surprising because OLMA does not take spatial heterogeneity into account.

Figure~\ref{fig:10} depicts the trends of $\widehat{\tau}_{4}$ and $\widetilde{\tau}_{4}$ against sample size under Design 3. Note that  $\widehat{\tau}_{4}$ and $\widetilde{\tau}_{4}$ are the sum of weights assigned to the quasi-correct models using \eqref{eq9} and \eqref{eq12} respectively.
The results show that the weights assigned to the quasi-correct models gradually approach to one when the sample size increases, which verifies the weight consistency established in Theorems~\ref{th.3}--\ref{th.4}. Figure~\ref{fig:11} illustrates the trend of the sample mean squared error (MSE) of the response variable with increasing sample size, where $\mathrm{MSE} = \frac{1}{nN}\sum_{j=1}^{N}\|\widehat{\bm{\mu}}^{(j)} - \bm{\mu}^{(j)}\|^2$, with $\widehat{\bm{\mu}}^{(j)}$ and $\bm{\mu}^{(j)}$ representing the estimated conditional mean obtained via \eqref{eq9} or \eqref{eq12} and the true conditional mean, respectively, at the $j$th repetition. The results indicate that as the sample size increases, our proposed conditional mean estimator converges to the true conditional mean, thereby validating the consistency of the estimator established in Corollaries~\ref{corollary1}--\ref{corollary2}.

\begin{figure}
	\centering
	\includegraphics[height=11cm]{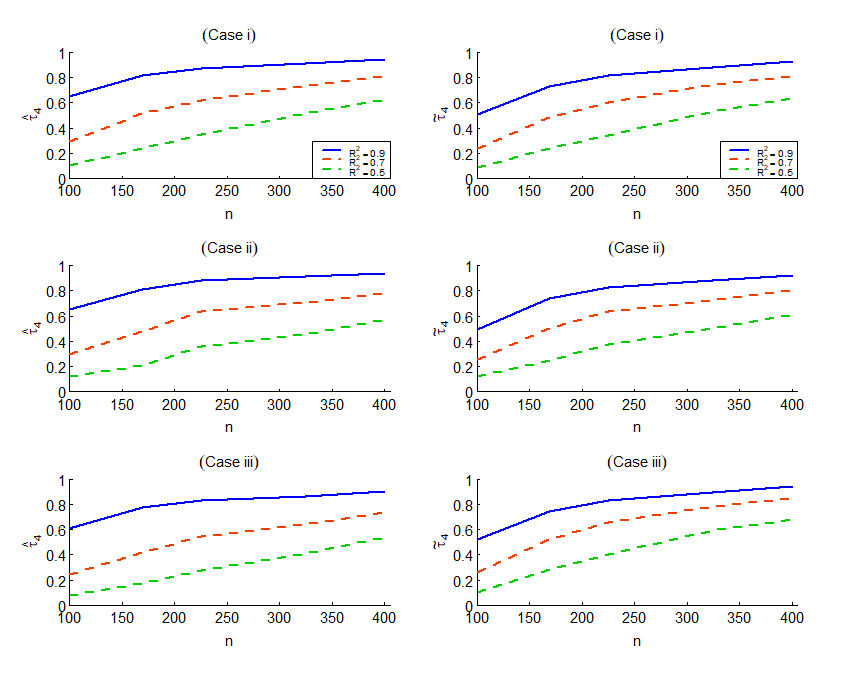}
	\vspace{-5mm}
	\caption{Under Design 3, the sum of quasi-correct model weights is obtained by two methods: \eqref{eq9} for the left column, and \eqref{eq12} for the right column.}
	\label{fig:10}
\end{figure}

\begin{figure}
	\centering
	\includegraphics[height=11cm]{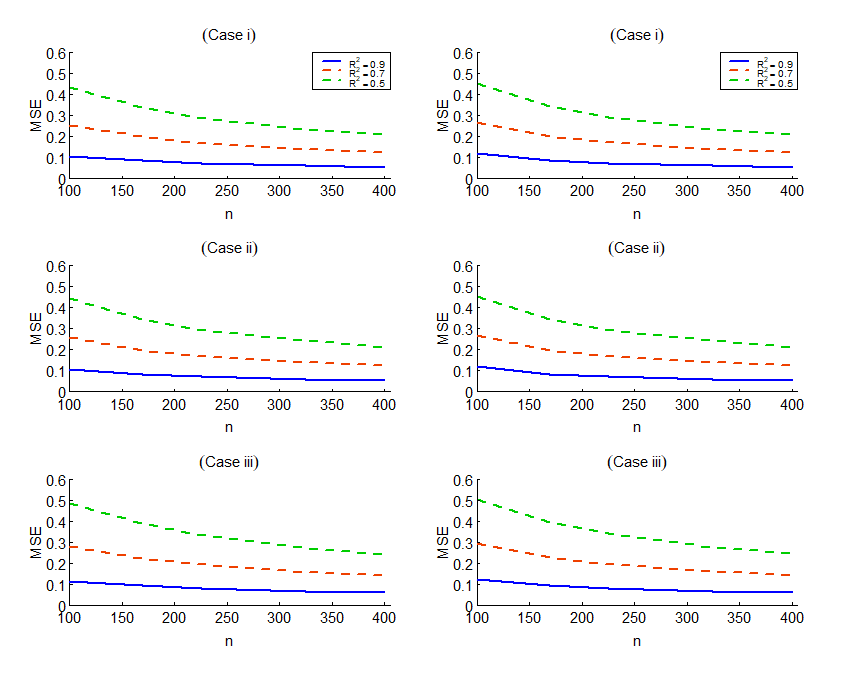}
	\vspace{-5mm}
	\caption{Under Design 3, the MSE results in the left column were obtained via \eqref{eq9}, while those in the right column were obtained via \eqref{eq12}.}
	\label{fig:11}
\end{figure}

\section{Empirical Application}

To validate the proposed SVMMA methodology, we employ a real data set from the Liudaogou watershed (Loess Plateau, Shaanxi Province, China), comprising measurements at 689 locations. The data set is publicly available from \cite{CBCD23} at the associated repository: \url{https://github.com/lexcomber/GWRroutemap}, with the authors explicitly stating that it is provided for reproducibility studies. This data set was initially collected and used in the research by \cite{WZH2009} and has since been adopted in subsequent studies including those by \cite{CWLZH18} and \cite{CBCD23}. The data were collected at approximate $100\mathrm{m} \times 100\mathrm{m}$ grid locations, as shown in Figure~\ref{fig:12}. The dependent variable of interest is soil total nitrogen (STN). The covariates include soil organic carbon (SOCgkg), nitrate-nitrogen (NO3Ngkg), ammonium (NH4Ngkg), and percentage clay (ClayPC), silt (SiltPC), sand (SandPC) content. As in the studies of \cite{WZH2009}, \cite{CWLZH18}, and \cite{CBCD23}, we transform a subset of variables: STN, SOCgkg, NO3Ngkg, and NH4Ngkg using natural logarithms, while applying a square root transformation to ClayPC. Further, we standardize SiltPC and SandPC.

\begin{figure}
	\centering
	\includegraphics[height=10cm]{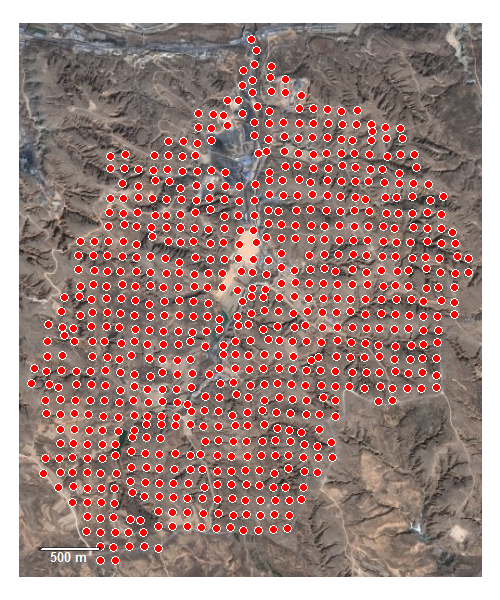}
	\vspace{-5mm}
	\caption{Location map of the 689 sampling sites in the Liudaogou watershed, west of Shenmu City, Shaanxi Province, China.}
	\label{fig:12}
\end{figure}

Since it is typically uncertain which covariates should be included in the model, we consider a set of candidate models with different covariate specifications to improve prediction accuracy. In this study, we require all candidate models to include at least one covariate, resulting in a total of $2^6 - 1 = 63$ candidate models. In each candidate model, we set the spatial distance metric to  q = 2  and adopt the Bisquare kernel function  $K(v) = (1-v^2)^2 I_{\{|v| \leq 1\}}$ (the integral normalization constant required in condition 5 is omitted, which does not affect the estimation results), while bandwidth selection is performed based on the cross-validation criterion. We randomly split the data into a training set and a test set. Let $n_0$ denote the number of observations in the training set, which is set to 400, 500, and 600 in our experiments. Based on the model estimated from the training set, we predict the soil total nitrogen (STN) for the remaining $n_1 = n - n_0$ observations in the corresponding test set. The prediction performance is measured by the mean squared prediction error (MSPE). In addition to the methods considered in the simulation study, we also include the SPMMA method proposed by \cite{LZG2019}. This method captures spatial dependence by incorporating a spatial autocorrelation structure into the error term, but it assumes that the effects of covariates are spatially stationary and thus cannot account for spatial heterogeneity. This procedure is repeated $N = 200$ times, and the mean and median MSPE of each method over all repetitions are reported. Specifically, we define
$$\mathrm{MSPE}_{\mathrm{mean}} = \frac{1}{N} \sum_{j=1}^{N} \mathrm{MSPE}^{(j)}, $$
and
$$\mathrm{MSPE}_{\mathrm{median}} = \mathop{\mathrm{median}}\limits_{j=1,2,\ldots,N} \bigl(\mathrm{MSPE}^{(j)}\bigr),$$
where
$\mathrm{MSPE}^{(j)} = \frac{1}{n_1} \sum_{i=1}^{n_1} \bigl(Y_i^{(j)} - \widehat{\mu}_i^{(j)}\bigr)^2,$ and $\widehat{\mu}_i^{(j)}$ denotes the predicted value of $Y_i^{(j)}$ obtained by a given method in the $j$th trial.

Tables~\ref{table:1} and~\ref{table:2} present the MSPE results of different methods under the candidate models of linear model and spatially varying coefficient model, respectively. The results indicate that the effect of covariates on STN exhibits spatial heterogeneity, and the SVMMA method consistently demonstrates the best predictive performance across different sample size settings. The relatively poorer predictive performance of SAIC and SBIC compared to model selection methods may be attributed to the fact that candidate models with lower predictive ability are still assigned certain weights, thereby compromising overall predictive performance.

\begin{table}
	\centering
	\caption{The MSPE results when the candidate model is the linear model.}\label{table:1}
	\vspace{2mm}
	\begin{tabular}{llccccccc}
		\toprule
		$n_0$ & Method & MMA & JMA & SAIC & SBIC & AIC & BIC & SPMMA \\
		\midrule
		400	& Mean   & 0.3101 & 0.3083 & 0.3091 & 0.3098 & 0.3095 & 0.3104 & 0.3096 \\
		& Median & 0.2765 & 0.2804 & 0.2834 & 0.2818 & 0.2845 & 0.2825 & 0.2789 \\
		500	& Mean   & 0.2972 & 0.2960 & 0.2966 & 0.2972 & 0.2972 & 0.2975 & 0.2966 \\
		& Median & 0.2450 & 0.2460 & 0.2444 & 0.2453 & 0.2465 & 0.2453 & 0.2452 \\
		600	& Mean   & 0.2895 & 0.2885 & 0.2887 & 0.2892 & 0.2896 & 0.2890 & 0.2895 \\
		& Median & 0.2216 & 0.2198 & 0.2189 & 0.2200 & 0.2193 & 0.2202 & 0.2219 \\
		\bottomrule
	\end{tabular}
\end{table}

\begin{table}
	\centering
	\caption{The MSPE results when the candidate model is the SVCM.}\label{table:2}
	\vspace{2mm}
	\begin{tabular}{llcccccc}
		\toprule
		$n_0$ & Method & SVMMA & SAIC & SBIC & AIC & BIC & AICc \\
		\midrule
		400 & Mean   & \textbf{0.3001} & 0.3110 & 0.3103 & 0.3117 & 0.3073 & 0.3082 \\
		& Median & \textbf{0.2763} & 0.2822 & 0.2817 & 0.2959 & 0.2816 & 0.2896 \\
		500 & Mean   & \textbf{0.2853} & 0.2957 & 0.2952 & 0.2916 & 0.2946 & 0.2900 \\
		& Median & \textbf{0.2411} & 0.2509 & 0.2497 & 0.2474 & 0.2445 & 0.2452 \\
		600 & Mean   & \textbf{0.2739} & 0.2852 & 0.2847 & 0.2776 & 0.2820 & 0.2763 \\
		& Median & \textbf{0.2065} & 0.2238 & 0.2234 & 0.2091 & 0.2134 & 0.2091 \\
		\bottomrule
	\end{tabular}
\end{table}

Moreover, we also calculated the mean squared errors for different methods under the full sample, and the corresponding results are: 0.2269 (SVMMA), 0.2467 (SAIC), 0.2464 (SBIC), 0.2325 (AIC), 0.2479 (BIC), and 0.2325 (AICc), indicating that the SVMMA estimator has satisfactory in-sample
performance. Meanwhile, we present the weight allocation results for candidate models obtained using the SVMMA method. For conciseness, only candidate models with weights greater than $10^{-6}$ are listed in Table~\ref{table:3}. The results show that the covariates ClayPC and SandPC are not included in any candidate models presented in Table~\ref{table:3}. Among the selected covariates, the weights are allocated in descending order as follows: SOCgkg, NO3Ngkg, SiltPC, and NH4Ngkg. Compared to single-model selection methods, the SVMMA method integrates information from different models, allowing for more robust handling of model uncertainty. To further elucidate the relationship between each covariate and STN, we present the spatially varying coefficient patterns. These patterns are derived from the estimated coefficient function vector $\widehat{\bm{\beta}}(\mathbf{s}) = \sum_{m=1}^{63} \widetilde{w}_m \widehat{\bm{\beta}}_{(m)}(\mathbf{s})$, obtained using the SVMMA method. Considering that the absolute values of the coefficient functions for covariates ClayPC and SandPC in $\widehat{\bm{\beta}}(\mathbf{s})$ are all less than $10^{-5}$, their effects on the response variable STN are negligible. Therefore, Figure~\ref{fig:13} only presents the spatially varying coefficient distributions of the intercept term and the remaining covariates, intuitively revealing the spatial differentiation patterns of the local regression coefficients for each variable across geographical space.  As shown in the figure, the estimated coefficients of the covariates exhibit distinct spatial heterogeneity in their relationship with STN. Specifically, SOCgkg and SiltPC generally exert a stable positive influence across the study area. In contrast, the effects of NO3Ngkg and NH4Ngkg are spatially variable, shifting between positive and negative values in different subregions. Meanwhile, the intercept term shows a consistently negative influence throughout the entire region.

\begin{table}
	\centering
	\caption{Candidate models with weights exceeding $10^{-6}$ obtained via the SVMMA method}\label{table:3}
	\vspace{2mm}
	\begin{tabular}{cccccccc}
		\toprule
		       & SOCgkg & ClayPC & SiltPC & SandPC & NO3Ngkg & NH4Ngkg & Weight \\
		\midrule
		 Model1& Yes & -- & Yes & -- & Yes & --  & 0.3490 \\
		 Model2& Yes & -- & --  & -- & Yes & --  & 0.2802 \\
		 Model3& Yes & -- & Yes & -- & Yes & Yes & 0.2059 \\
		 Model4& --  & -- & Yes & -- & --  & --  & 0.0835 \\
		 Model5& Yes & -- & --  & -- & --  & --  & 0.0465 \\
		 Model6& --  & -- & Yes & -- & Yes & Yes & 0.0207 \\
		 Model7& --  & -- & --  & -- & Yes & --  & 0.0141 \\
		\bottomrule
	\end{tabular}
\end{table}

\begin{figure}[H]
	\centering
	\includegraphics[height=11cm]{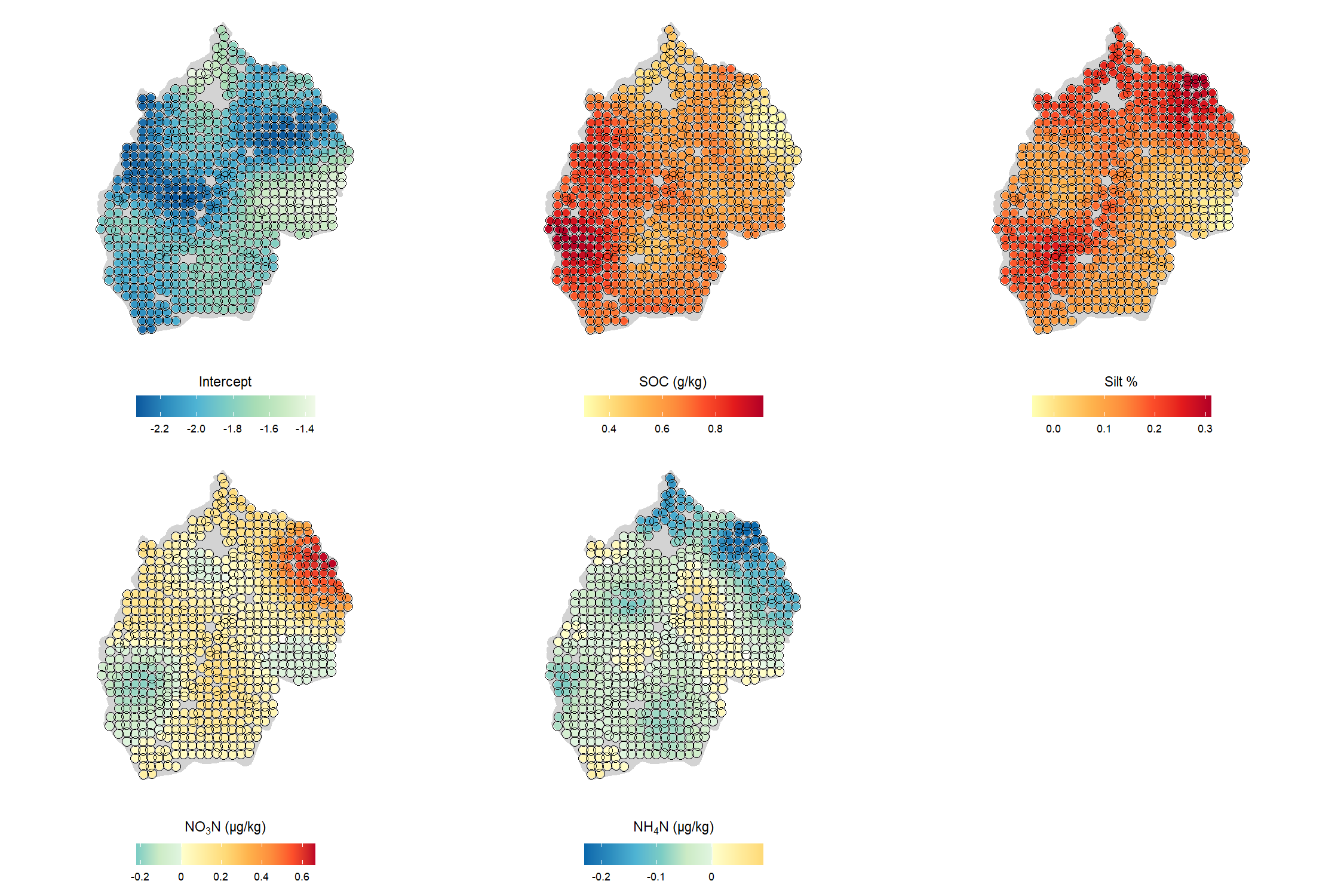}
	\vspace{-5mm}
	\caption{Spatial plots of the coefficient function estimates obtained using the SVMMA method.}
	\label{fig:13}
\end{figure}

\section{Concluding Remarks}

Over the past few decades, model averaging has gained extensive research attention as an effective prediction tool across numerous fields. However, most existing methods have primarily focused on frameworks such as independent and identically distributed samples, single time series, and cross-sectional spatial autoregressive models. To address this gap, this paper introduces a spatially varying coefficient Mallows model averaging method tailored for datasets exhibiting spatial heterogeneity, thereby extending the applicability of model averaging to a broader class of spatial processes. 

Several methodological extensions can be further developed based on our work. First, the proposed framework can be extended by replacing the varying-coefficient model with semi-parametric alternatives, such as partially linear models\citep{BFC1999,MXW2016}, to enhance its flexibility and applicability. Second, the current approach assumes a common spatial smoothing scale (i.e., a single bandwidth parameter) for all covariates. In practice, however, different drivers may operate over distinct spatial scales. Imposing a uniform bandwidth could obscure genuine spatial heterogeneity and introduce estimation bias. Allowing covariate-specific bandwidths\citep{FYK2017} would thus improve estimation accuracy and interpretability. Finally, extending the methodology from a purely spatial to a spatiotemporal setting represents a natural and important frontier. Future work could involve constructing candidate models that incorporate both spatial variation and temporal dynamics\citep{FCY2015,YWWG2022}, thereby enabling the application of the proposed averaging approach to more challenging spatiotemporal datasets.

\section*{Acknowledgments}
This paper is supported by National Natural Science Foundation of China (Grant No. 12571290), Fundamental Research Funds for the Central Universities (Grants No. SWU-KT25018 and No. SWU-KU24002).

\appendix
\section*{Appendix: Proofs of Theorems and Corollaries}
\addcontentsline{toc}{section}{Appendix: Proofs of Theorems 1-3}

\setcounter{equation}{0}
\renewcommand{\theequation}{A.\arabic{equation}}

The proofs of Theorems and Corollaries require the following lemmas.

\begin{lem}\label{lem1}
Suppose that Conditions 3-6 hold. Define, for each $m=1,2,\dots,M$ and $\mathbf{s} = (s_1, s_2)^{\mathrm{T}} \in D$, $\mathbf{J}_n^m=\frac{1}{n}\mathbf{X}_{(m)}^{\mathrm{T}}\mathbf{W}_{(m)\mathbf{s}}\mathbf{X}_{(m)}$. Then the following holds:
\begin{equation}
	\underline{\lambda}(\mathbf{J}_n^m)\geq\underline{c}_\mathbf{X}f(\mathbf{s})-O_{up}\left(\sqrt{\frac{p_m}{nh_m^2}}+h^2_m\sqrt{p_m}\right),
\end{equation}
where the notation $g(\mathbf{s}) = O_{up}(h_m^2)$ means that $g(\mathbf{s})/h_m^2$ is bounded in probability uniformly for any $\mathbf{s}$ within the interior of $D$.
\end{lem}

\noindent\textit{Proof.} Let $\mathbf{u} = (u_1, u_2)^{\mathrm{T}}$ be a random vector with probability density function $f(\mathbf{u})$. Under Conditions 3-6, we have
\begin{align}
	&E\left\{K_{h_m}(\|\mathbf{u}-\mathbf{s}\|_q)\right\}\notag\\
	=&\int_{(u_{1},u_{1}) \in D} \frac{1}{h_{m}^2} K^{\prime}_q \left( \frac{u_{1} - s_{1}}{h_{m}},\frac{u_{2} - s_{2}}{h_{m}} \right) f(u_{1},u_{2}) \, du_{1}du_{2}\notag\\
	=&\int_{(v_1,v_2) \in \operatorname{supp}(K^{\prime}_q)}\frac{1}{h_{m}^2} K^{\prime}_q \left( v_{1},v_{2} \right)f(s_1+h_{m}v_1,s_2+h_{m}v_2) \, h_{m}^2dv_1dv_2\notag\\
	=&\int_{(v_1,v_2) \in \operatorname{supp}(K^{\prime}_q)}K^{\prime}_q \left( v_{1},v_{2} \right)\left\{f(s_1,s_2)+f_x(s_1,s_2)h_{m}v_1+f_y(s_1,s_2)h_{m}v_2\right.\notag\\
	\,&\left.+f_{xx}(s_1^*,s_2^*)h_{m}^2v_1^2+2f_{xy}(s_1^*,s_2^*)h_{m}^2v_1v_2+f_{yy}(s_1^*,s_2^*)h_{m}^2v_2^2\right\}dv_1dv_2\notag\\
	=&f(s_1,s_2)+h_{m}^2\int_{(v_1,v_2) \in \operatorname{supp}(K^{\prime}_q)}K^{\prime}_q \left( v_{1},v_{2} \right)\left(f_{xx}(s_1^*,s_2^*)v_1^2\right.\notag\\
	\,&\left.+2f_{xy}(s_1^*,s_2^*)v_1v_2+f_{yy}(s_1^*,s_2^*)v_2^2\right)dv_1dv_2\vphantom{\int}\notag\\
	=&f(\mathbf{s})+O_{up}(h_{m}^2),
\end{align}
where $(s_1^*, s_2^*)$ lies between $(s_1, s_2)$ and $(s_1 + h_{m} v_1, s_2 + h_{m} v_2)$. This notation persists in what follows, although its precise bounds may differ in different contexts. Similarly,
\begin{align}
	&E\left\{K_{h_m}^2(\|\mathbf{u}-\mathbf{s}\|_q)\right\}\notag\\
	=&\int_{(u_{1},u_{2}) \in D} \frac{1}{h_{m}^4} K^{\prime2}_q \left( \frac{u_{1} - s_{1}}{h_{m}},\frac{u_{2} - s_{2}}{h_{m}} \right) f(u_{1},u_{2}) \, du_{1}du_{2}\notag\\
	=&\int_{(v_1,v_2) \in \operatorname{supp}(K^{\prime}_q)}\frac{1}{h_{m}^2} K^{\prime2}_q \left( v_{1},v_{2} \right)f(s_1+h_{m}v_1,s_2+h_{m}v_2) \, dv_1dv_2\notag\\
	=&\frac{1}{h_{m}^2} \int_{(v_1,v_2) \in \operatorname{supp}(K^{\prime}_q)}K^{\prime2}_q \left( v_{1},v_{2} \right)\left\{f(s_1,s_2)+f_x(s_1,s_2)h_{m}v_1+f_y(s_1,s_2)h_{m}v_2\right.\notag\\
	\,&\left.+f_{xx}(s_1^*,s_2^*)h_{m}^2v_1^2+2f_{xy}(s_1^*,s_2^*)h_{m}^2v_1v_2+f_{yy}(s_1^*,s_2^*)h_{m}^2v_2^2\right\}dv_1dv_2\notag\\
	=&\frac{1}{h_{m}^2} \left\{f(s_1,s_2)R(K^{\prime}_q)+h_{m}^2\int_{(v_1,v_2) \in \operatorname{supp}(K^{\prime}_q)}K^{\prime2}_q\left( v_{1},v_{2} \right)\left(f_{xx}(s_1^*,s_2^*)v_1^2\right.\right.\notag\\
	\,&\left.\left.+2f_{xy}(s_1^*,s_2^*)v_1v_2+f_{yy}(s_1^*,s_2^*)v_2^2\right)dv_1dv_2\vphantom{\int}\right\}\notag\\
	=&\frac{1}{h_{m}^2}f(\mathbf{s})R(K^{\prime}_q)+O_{up}(1),
\end{align}
where $R(K^{\prime}_q) = \int_{(v_1,v_2) \in \operatorname{supp}(K^{\prime}_q)} K^{\prime2}_q(v_1, v_2) \, dv_1 dv_2$. Now, consider $x_{il}^m x_{ik}^m$ and $J_{n,lk}^m$, which are the $(l,k)$th elements of $\mathbf{X}_{(m)i} \mathbf{X}_{(m)i}^{\mathrm{T}}$ and $\mathbf{J}_n^m$, respectively. From Conditions 3-6, (A.3), and Chebyshev's inequality, it follows that
\begin{align}
	&\left|J_{n,lk}^m-E(x_{1l}^mx_{1k}^m)\left\{ \frac{1}{n}\sum_{i=1}^{n} K_{h_m}(\|\mathbf{s}_i-\mathbf{s}\|_q)\right\}\right|\notag\\
	= &\left|\frac{1}{n}\sum_{i=1}^{n} \{x_{il}^mx_{ik}^m-E(x_{1l}^mx_{1k}^m)\} K_{h_m}(\|\mathbf{s}_i-\mathbf{s}\|_q)\right|\notag\\
	=&O_{up}\left(\frac{1}{\sqrt{nh_m^2}}\right).
\end{align}
Combining (A.2), (A.3), (A.4), and Condition 4, we have
\begin{align}
	&|J_{n,lk}^m-E(x_{1l}^mx_{1k}^m)f(\mathbf{s})|\notag\\
	\leq&\left|J_{n,lk}^m-E(x_{1l}^mx_{1k}^m)\left\{ \frac{1}{n}\sum_{i=1}^{n} K_{h_m}(\|\mathbf{s}_i-\mathbf{s}\|_q)\right\}\right|\notag\\
	&+\left|E(x_{1l}^mx_{1k}^m)\left\{ \frac{1}{n}\sum_{i=1}^{n} K_{h_m}(\|\mathbf{s}_i-\mathbf{s}\|_q)\right\}-E(x_{1l}^mx_{1k}^m)E\left\{K_{h_m}(\|\mathbf{u}-\mathbf{s}\|_q)\right\}\right|\notag\\
	&+\left|E(x_{1l}^mx_{1k}^m)E\left\{K_{h_m}(\|\mathbf{u}-\mathbf{s}\|_q)\right\}-E(x_{1l}^mx_{1k}^m)f(\mathbf{s})\right|\notag\\
	=&O_{up}\left(\frac{1}{\sqrt{nh_m^2}}+h^2_m\right).
\end{align}

Let $\|A\|_{F}$ denote the $F$-norm of matrix $A$. According to (A.5), we have
\begin{align}
	&\|\mathbf{J}_n^m-f(\mathbf{s})\bm{\Pi}_{m} \mathbf{C}_\mathbf{X} \bm{\Pi}_{m}^{\mathrm{T}}\|_{F}\notag\\
	=&\left(\sum_{l=1}^{p_m}\sum_{k=1}^{p_m}|J_{n,lk}^m-E(x_{1l}^mx_{1k}^m)f(\mathbf{s})|^2\right)^{1/2}\notag\\
	\leq&O_{up}\left(\frac{p_m}{\sqrt{nh_m^2}}+h^2_mp_m\right).
\end{align}
Then, based on the relationship between the spectral norm and the $F$-norm, we obtain
\begin{align}
	&\underline{\lambda}(\mathbf{J}_n^m)\notag\\
	\geq&\underline{\lambda}(f(\mathbf{s})\bm{\Pi}_{m} \mathbf{C}_\mathbf{X} \bm{\Pi}_{m}^{\mathrm{T}})-\frac{1}{\sqrt{p_m}}\|\mathbf{J}_n^m-f(\mathbf{s})\bm{\Pi}_{m} \mathbf{C}_\mathbf{X} \bm{\Pi}_{m}^{\mathrm{T}}\|_{F}\notag\\
	=&\underline{c}_\mathbf{X}f(\mathbf{s})-O_{up}\left(\sqrt{\frac{p_m}{nh_m^2}}+h^2_m\sqrt{p_m}\right),
\end{align}
which implies that (A.1) holds.

\begin{lem}\label{lem2}
	Suppose that Conditions 3-8 hold. Then the following hold:
	\begin{equation}
		\max_{1 \le m \le M} \left| \operatorname{tr}\left( \mathbf{P}_{(m)} \right) \right| = O_p\left( \underline{h}^{-2} \widetilde{p} \right),
	\end{equation}
	and
	\begin{equation}
		\max_{1 \le m \le M} \bar{\lambda}\left( \mathbf{P}_{(m)} \right) = O_p\left( \widetilde{p}^{1/2} \right).
	\end{equation}
\end{lem}

\noindent\textit{Proof.} Define $\mathbf{e}_i = (0, \dots, 1, \dots, 0)^{\mathrm{T}}$ as an $n \times 1$ vector with 1 at the $i$th position and 0 elsewhere; and $\mathbf{{E}}_i$ as an $n \times n$ matrix with 1 at the $(i,i)$th entry and 0 elsewhere. By Lemma 1, \eqref{eq6}, and Conditions 3-8, we have
\begin{align}
	&\mathop{\max }\limits_{{1 \leq  m \leq  M}}\left| {\operatorname{tr}\left( {\mathbf{P}}_{\left( m\right) }\right) }\right|\notag\\
	= &\mathop{\max }\limits_{{1 \leq  m \leq  M}}\left| {\operatorname{tr}\left( \begin{matrix}  {\mathbf{X}}_{\left( m\right) 1}^{\mathrm{T}} {\left\{  {\mathbf{X}}_{\left( m\right)}^{\mathrm{T}}\mathbf{{W}}_{\left( m\right) \mathbf{s}_{1}}{\mathbf{X}}_{\left( m\right)}\right\}  }^{-1}{\mathbf{X}}_{\left( m\right)}^{\mathrm{T}}\mathbf{{W}}_{\left( m\right) \mathbf{s}_{1}} \\  \vdots \\   {\mathbf{X}}_{\left( m\right) n}^{\mathrm{T}} {\left\{  {\mathbf{X}}_{\left( m\right)}^{\mathrm{T}}\mathbf{{W}}_{\left( m\right) \mathbf{s}_{n}}{\mathbf{X}}_{\left( m\right)}\right\}  }^{-1}{\mathbf{X}}_{\left( m\right) }^{\mathrm{T}}\mathbf{{W}}_{\left( m\right) \mathbf{s}_{n}} \end{matrix}\right) }\right|\notag\\
	= &\mathop{\max }\limits_{{1 \leq  m \leq  M}}\left| {\mathop{\sum }\limits_{{i = 1}}^{n} {\mathbf{X}}_{\left( m\right) i}^{\mathrm{T}} {\left\{  {\mathbf{X}}_{\left( m\right)}^{\mathrm{T}}\mathbf{{W}}_{\left( m\right) \mathbf{s}_{i}}{\mathbf{X}}_{\left( m\right)}\right\}  }^{-1}{\mathbf{X}}_{\left( m\right)}^{\mathrm{T}}\mathbf{{W}}_{\left( m\right) \mathbf{s}_{i}}\mathbf{e}_{i}}\right|\notag\\
	= &\mathop{\max }\limits_{{1 \leq  m \leq  M}}\left| {\mathop{\sum }\limits_{{i = 1}}^{n}\operatorname{tr}\left\{  {{\left( \frac{1}{n}{\mathbf{X}}_{\left( m\right)}^{\mathrm{T}}\mathbf{{W}}_{\left( m\right) \mathbf{s}_{i}}{\mathbf{X}}_{\left( m\right)}\right) }^{-1}\frac{1}{n}{\mathbf{X}}_{\left( m\right)}^{\mathrm{T}}\mathbf{{W}}_{\left( m\right) \mathbf{s}_{i}}\mathbf{{E}}_{i}{\mathbf{X}}_{\left( m\right)}}\right\}  }\right|\notag\\
	=&\mathop{\max }\limits_{{1 \leq  m \leq  M}}\left| {\mathop{\sum }\limits_{{i = 1}}^{n}\operatorname{tr}\left\{ {\left( \frac{1}{n}{\mathbf{X}}_{\left( m\right)}^{\mathrm{T}}\mathbf{{W}}_{\left( m\right) \mathbf{s}_{i}}{\mathbf{X}}_{\left( m\right)}\right) }^{-1} 
		\frac{1}{n}{K}_{{h}_{m}}\left( 0\right) {\mathbf{X}}_{\left( m\right) i}{\mathbf{X}}_{\left( m\right) i}^{\mathrm{T}}\right\} }\right|\notag\\
	\leq & \mathop{\max }\limits_{{1 \leq  m \leq  M}}\frac{1}{n}{K}_{{h}_{m}}\left( 0\right) \mathop{\sum }\limits_{{i = 1}}^{n}
	\left| {{\mathbf{X}}_{\left( m\right) i}^{\mathrm{T}}{\left( \frac{1}{n}{\mathbf{X}}_{\left( m\right)}^{\mathrm{T}}\mathbf{{W}}_{\left( m\right) \mathbf{s}_{i}}{\mathbf{X}}_{\left( m\right)}\right) }^{-1} {\mathbf{X}}_{\left( m\right) i}}\right|\notag\\
	\leq & \mathop{\max }\limits_{{1 \leq  m \leq  M}}\frac{1}{n}{K}_{{h}_{m}}\left( 0\right) \mathop{\sum }\limits_{{i = 1}}^{n}\bar{\lambda }\left\{\left( \frac{1}{n}{\mathbf{X}}_{\left( m\right)}^{\mathrm{T}}\mathbf{{W}}_{\left( m\right) \mathbf{s}_{i}}{\mathbf{X}}_{\left( m\right)}\right)^{-1} \right\} {\mathbf{X}}_{\left( m\right) i}^{\mathrm{T}}{\mathbf{X}}_{\left( m\right) i}\notag\\
	\leq & \mathop{\max }\limits_{{1 \leq  m \leq  M}}\frac{1}{n{h}_{m}^2}K\left( 0\right) \mathop{\sum }\limits_{{i = 1}}^{n} \left(\underline{c}_\mathbf{X}f(\mathbf{s}_i)-O_{up}\left(\sqrt{\frac{p_m}{nh_m^2}}+h^2_m\sqrt{p_m}\right)\right)^{-1} {\mathbf{X}}_{\left( m\right) i}^{\mathrm{T}}{\mathbf{X}}_{\left( m\right) i}\notag\\
	\leq & \frac{1}{n\underline{h}^2}K\left( 0\right) n{O}_{P}(1) {O}_{p}\left( \widetilde{p}\right)\notag\\
	= &{O}_{P}\left( {{\underline{h}}^{-2}\widetilde{p}}\right) ,
\end{align}
which implies that (A.8) holds.

Denote
\begin{equation}
	\mathbf{B} = \operatorname{\mathrm{diag}}{\left( \mathbf{{e}}_{1}^{\mathrm{T}},\mathbf{{e}}_{2}^{\mathrm{T}},\ldots ,\mathbf{{e}}_{n}^{\mathrm{T}}\right) }_{n \times  {n}^{2}},
\end{equation}
\begin{equation}
	\mathbf{{C}}_{\left( m\right) } = \operatorname{\mathrm{diag}}{\left( {\mathbf{X}}_{\left( m\right) },{\mathbf{X}}_{\left( m\right) },\ldots ,{\mathbf{X}}_{\left( m\right)}\right) }_{{n}^{2} \times {p}_{m}n},
\end{equation}
\begin{equation}
	\mathbf{{E}}_{\left( m\right) } = \operatorname{\mathrm{diag}}{\left\{  {\left( \frac{1}{n}{\mathbf{X}}_{\left( m\right) }^{\mathrm{T}}\mathbf{{W}}_{\left( m\right) \mathbf{s}_{1}}{\mathbf{X}}_{\left( m\right)}\right) }^{-1},\ldots ,{\left( \frac{1}{n}{\mathbf{X}}_{\left( m\right) }^{\mathrm{T}}\mathbf{{W}}_{\left( m\right) \mathbf{s}_{n}}{\mathbf{X}}_{\left( m\right) }\right) }^{-1}\right\}  }_{p_mn\times p_mn}
\end{equation}
\begin{equation}
	\mathbf{{F}}_{\left( m\right) } = \operatorname{\mathrm{diag}}{\left( \frac{1}{\sqrt{n}}{\mathbf{X}}_{\left( m\right) }^{\mathrm{T}}\mathbf{{W}}_{\left( m\right) \mathbf{s}_{1}}^{1/2},\frac{1}{\sqrt{n}}{\mathbf{X}}_{\left( m\right) }^{\mathrm{T}}\mathbf{{W}}_{\left( m\right) \mathbf{s}_{2}}^{1/2},\ldots ,\frac{1}{\sqrt{n}}{\mathbf{X}}_{\left( m\right) }^{\mathrm{T}}\mathbf{{W}}_{\left( m\right) \mathbf{s}_{n}}^{1/2}\right) }_{{p}_{m}n \times  {n}^{2}}, 
\end{equation}
and
\begin{equation}
	\mathbf{{H}}_{\left( m\right) } = {\left( \begin{matrix} \frac{1}{\sqrt{n}}\mathbf{{W}}_{\left( m\right) \mathbf{s}_{1}}^{1/2} \\  \vdots \\  \frac{1}{\sqrt{n}}\mathbf{{W}}_{\left( m\right) \mathbf{s}_{n}}^{1/2} \end{matrix}\right) }_{{n}^{2} \times  n}. 
\end{equation}
It is easily seen that
\begin{align}  
	& {\mathbf{P}}_{\left( m\right) } \notag\\   
	= & \left( \begin{matrix}  {\mathbf{X}}_{\left( m\right) 1}^{\mathrm{T}} {\left\{  {\mathbf{X}}_{\left( m\right)}^{\mathrm{T}}\mathbf{{W}}_{\left( m\right) \mathbf{s}_{1}}{\mathbf{X}}_{\left( m\right)}\right\}  }^{-1}{\mathbf{X}}_{\left( m\right)}^{\mathrm{T}}\mathbf{{W}}_{\left( m\right) \mathbf{s}_{1}} \\  \vdots \\   {\mathbf{X}}_{\left( m\right) n}^{\mathrm{T}} {\left\{  {\mathbf{X}}_{\left( m\right)}^{\mathrm{T}}\mathbf{{W}}_{\left( m\right) \mathbf{s}_{n}}{\mathbf{X}}_{\left( m\right)}\right\}  }^{-1}{\mathbf{X}}_{\left( m\right) }^{\mathrm{T}}\mathbf{{W}}_{\left( m\right) \mathbf{s}_{n}} \end{matrix}\right)_{n \times  n}\notag\\
	= &\mathbf{B}\mathbf{{C}}_{\left( m\right) }\mathbf{{E}}_{\left( m\right) }\mathbf{{F}}_{\left( m\right) }\mathbf{{H}}_{\left( m\right) },
\end{align}
and
\begin{align}
	&\mathop{\max }\limits_{{1 \leq  m \leq  M}}\bar{\lambda }\left( {{\mathbf{P}}_{\left( m\right) }{\mathbf{P}}_{\left( m\right) }^{\mathrm{T}}}\right)\notag\\
	= &\mathop{\max }\limits_{{1 \leq  m \leq  M}}\bar{\lambda }\left( {\mathbf{B}\mathbf{{C}}_{\left( m\right) }\mathbf{{E}}_{\left( m\right) }\mathbf{{F}}_{\left( m\right) }\mathbf{{H}}_{\left( m\right) }\mathbf{{H}}_{\left( m\right) }^{\mathrm{T}}\mathbf{{F}}_{\left( m\right) }^{\mathrm{T}}\mathbf{{E}}_{\left( m\right) }^{\mathrm{T}}\mathbf{{C}}_{\left( m\right) }^{\mathrm{T}}{\mathbf{B}}^{\mathrm{T}}}\right)\notag\\
	\leq & \mathop{\max }\limits_{{1 \leq  m \leq  M}}\bar{\lambda }\left( {\mathbf{{H}}_{\left( m\right) }\mathbf{{H}}_{\left( m\right) }^{\mathrm{T}}}\right) \bar{\lambda }\left( {\mathbf{{E}}_{\left( m\right) }\mathbf{{F}}_{\left( m\right) }\mathbf{{F}}_{\left( m\right) }^{\mathrm{T}}\mathbf{{E}}_{\left( m\right) }^{\mathrm{T}}}\right) \bar{\lambda }\left( {\mathbf{B}\mathbf{{C}}_{\left( m\right) }\mathbf{{C}}_{\left( m\right) }^{\mathrm{T}}{\mathbf{B}}^{\mathrm{T}}}\right) .
\end{align}

To prove (A.9), we only need to verify that
\begin{equation}
	\mathop{\max }\limits_{{1 \leq  m \leq  M}}\bar{\lambda }\left( {\mathbf{{H}}_{\left( m\right) }\mathbf{{H}}_{\left( m\right) }^{\mathrm{T}}}\right)  = {O}_{p}\left( 1\right) ,
\end{equation}
\begin{equation}
	\mathop{\max }\limits_{{1 \leq  m \leq  M}}\bar{\lambda }\left( {\mathbf{{E}}_{\left( m\right) }\mathbf{{F}}_{\left( m\right) }\mathbf{{F}}_{\left( m\right) }^{\mathrm{T}}\mathbf{{E}}_{\left( m\right) }^{\mathrm{T}}}\right)  = {O}_{p}\left( 1\right) ,
\end{equation}
and
\begin{equation}
	\mathop{\max }\limits_{{1 \leq  m \leq  M}}\bar{\lambda }\left( {\mathbf{B}\mathbf{{C}}_{\left( m\right) }\mathbf{{C}}_{\left( m\right) }^{\mathrm{T}}{\mathbf{B}}^{\mathrm{T}}}\right)  = {O}_{p}\left( \widetilde{p}\right) .
\end{equation}

Let us first consider (A.18). By (A.15) and Conditions 3, 5-6, we have
\begin{align}
	&\mathop{\max }\limits_{{1 \leq  m \leq  M}}\bar{\lambda }\left( {\mathbf{{H}}_{\left( m\right) }\mathbf{{H}}_{\left( m\right) }^{\mathrm{T}}}\right)\notag\\
	=& \mathop{\max }\limits_{{1 \leq  m \leq  M}}\bar{\lambda }\left( {\mathbf{{H}}_{\left( m\right) }^{\mathrm{T}}\mathbf{{H}}_{\left( m\right) }}\right)\notag\\
	=& \mathop{\max }\limits_{{1 \leq  m \leq  M}}\bar{\lambda }\left( {\frac{1}{n}\mathop{\sum }\limits_{{i = 1}}^{n}\mathbf{{W}}_{\left( m\right) \mathbf{s}_{i}}}\right)\notag\\
	=& \mathop{\max }\limits_{{1 \leq  m \leq  M}}\bar{\lambda }\left( {\frac{1}{n}\mathop{\sum }\limits_{{i = 1}}^{n}\operatorname{\mathrm{diag}}\left\{  {K_{h_m}(\|\mathbf{s}_1-\mathbf{s}_i\|) ,K_{h_m}(\|\mathbf{s}_2-\mathbf{s}_i\|) ,\ldots ,K_{h_m}(\|\mathbf{s}_n-\mathbf{s}_i\|) }\right\}  }\right)\notag\\
	=& \mathop{\max }\limits_{{1 \leq  m \leq  M}}\mathop{\max }\limits_{{1 \leq  j \leq  n}}\frac{1}{n}\mathop{\sum }\limits_{{i = 1}}^{n}K_{h_m}(\|\mathbf{s}_j-\mathbf{s}_i\|)\notag\\
	=&\mathop{\max }\limits_{{1 \leq  m \leq  M}}\mathop{\max }\limits_{1 \leq  j \leq  n}\left(f(\mathbf{s}_j)+O_{up}\left(\frac{1}{\sqrt{nh_m^2}}+ h_{m}^2\right)\right)\notag\\
	=& O_{p}\left( 1\right).
\end{align}
Next, consider (A.18). By (A.13), (A.14), Lemma 1 and Conditions 7-8, we can show that
\begin{align}
	&\mathop{\max }\limits_{{1 \leq  m \leq  M}}\bar{\lambda }\left( {\mathbf{{E}}_{\left( m\right) }\mathbf{{F}}_{\left( m\right) }\mathbf{{F}}_{\left( m\right) }^{\mathrm{T}}\mathbf{{E}}_{\left( m\right) }^{\mathrm{T}}}\right)\notag\\
	= &\mathop{\max }\limits_{{1 \leq  m \leq  M}}\bar{\lambda }\left( \mathbf{{E}}_{\left( m\right) }\right)\notag\\
	= &\mathop{\max }\limits_{{1 \leq  m \leq  M}}\bar{\lambda }\left(\operatorname{\mathrm{diag}}{\left\{  {\left( \frac{1}{n}{\mathbf{X}}_{\left( m\right) }^{\mathrm{T}}\mathbf{{W}}_{\left( m\right) \mathbf{s}_{1}}{\mathbf{X}}_{\left( m\right)}\right) }^{-1},\ldots ,{\left( \frac{1}{n}{\mathbf{X}}_{\left( m\right) }^{\mathrm{T}}\mathbf{{W}}_{\left( m\right) \mathbf{s}_{n}}{\mathbf{X}}_{\left( m\right) }\right) }^{-1}\right\}  } \right)\notag\\
	= &\mathop{\max }\limits_{{1 \leq  m \leq  M}}\mathop{\max }\limits_{{1 \leq  i \leq  n}}\bar{\lambda }\left\{ {\left( \frac{1}{n}{\mathbf{X}}_{\left( m\right) }^{\mathrm{T}}\mathbf{{W}}_{\left( m\right) \mathbf{s}_{i}}{\mathbf{X}}_{\left( m\right)}\right) }^{-1} \right\}\notag\\
	= & \mathop{\max }\limits_{{1 \leq  m \leq  M}}\mathop{\max }\limits_{{1 \leq  i \leq  n}} \left(\underline{c}_\mathbf{X}f(\mathbf{s}_i)-O_{up}\left(\sqrt{\frac{p_m}{nh_m^2}}+h^2_m\sqrt{p_m}\right)\right)^{-1} \notag\\
	= &{O}_{p}\left( 1\right) .
\end{align}
Hence (A.19) is correct. We now consider (A.20). By Condition 4, (A.11) and (A.12), we have
\begin{align}
	&\mathop{\max }\limits_{{1 \leq  m \leq  M}}\bar{\lambda }\left( {\mathbf{B}\mathbf{{C}}_{\left( m\right) }\mathbf{{C}}_{\left( m\right) }^{\mathrm{T}}{\mathbf{B}}^{\mathrm{T}}}\right)\notag\\
	= &\mathop{\max }\limits_{{1 \leq  m \leq  M}}\bar{\lambda }\left\{  {\operatorname{\mathrm{diag}}\left( {\mathbf{{e}}_{1}^{\mathrm{T}},\mathbf{{e}}_{2}^{\mathrm{T}},\ldots ,\mathbf{{e}}_{n}^{\mathrm{T}}}\right) \operatorname{\mathrm{diag}}\left( {{\mathbf{X}}_{\left( m\right) },{\mathbf{X}}_{\left( m\right) },\ldots ,{\mathbf{X}}_{\left( m\right) }}\right) }\right.\notag\\
	&\left. {\operatorname{\mathrm{diag}}\left( {{\mathbf{X}}_{\left( m\right) }^{\mathrm{T}},{\mathbf{X}}_{\left( m\right) }^{\mathrm{T}},\ldots ,{\mathbf{X}}_{\left( m\right) }^{\mathrm{T}}}\right) \operatorname{\mathrm{diag}}\left( {\mathbf{{e}}_{1},\mathbf{{e}}_{2},\ldots ,\mathbf{{e}}_{n}}\right) }\right\}\notag\\
	= &\mathop{\max }\limits_{{1 \leq  m \leq  M}}\bar{\lambda }\left\{  {\operatorname{\mathrm{diag}}\left( {\mathbf{{e}}_{1}^{\mathrm{T}}{\mathbf{X}}_{\left( m\right) }{\mathbf{X}}_{\left( m\right) }^{\mathrm{T}}\mathbf{{e}}_{1},\mathbf{{e}}_{2}^{\mathrm{T}}{\mathbf{X}}_{\left( m\right) }{\mathbf{X}}_{\left( m\right) 2}^{\mathrm{T}}\mathbf{{e}}_{2},\ldots ,\mathbf{{e}}_{n}^{\mathrm{T}}{\mathbf{X}}_{\left( m\right) }{\mathbf{X}}_{\left( m\right) }^{\mathrm{T}}\mathbf{{e}}_{n}}\right) }\right\}\notag\\
	= &\mathop{\max }\limits_{{1 \leq  m \leq  M}}\bar{\lambda }\left\{  {\operatorname{\mathrm{diag}}\left( {{\mathbf{X}}_{\left( m\right) 1}^{\mathrm{T}}{\mathbf{X}}_{\left( m\right) 1},{\mathbf{X}}_{\left( m\right) 2}^{\mathrm{T}}{\mathbf{X}}_{\left( m\right) 2},\ldots ,{\mathbf{X}}_{\left( m\right) n}^{\mathrm{T}}{\mathbf{X}}_{\left( m\right) n}}\right) }\right\}\notag\\
	= &\mathop{\max }\limits_{{1 \leq  m \leq  M}}\mathop{\max }\limits_{{1 \leq  i \leq  n}}{\mathbf{X}}_{\left( m\right) i}^{\mathrm{T}}{\mathbf{X}}_{\left( m\right) i}\notag\\
	= &{O}_{p}\left( \widetilde{p}\right).
\end{align}

By (A.17)-(A.20), we obtain
\begin{align}
	&\mathop{\max }\limits_{{1 \leq  m \leq  M}}\bar{\lambda }\left( {\mathbf{P}}_{\left( m\right) }\right)\notag\\
	= &\mathop{\max }\limits_{{1 \leq  m \leq  M}}{\bar{\lambda }}^{1/2}\left( {{\mathbf{P}}_{\left( m\right) }{\mathbf{P}}_{\left( m\right) }^{\mathrm{T}}}\right)\notag\\
	= &\mathop{\max }\limits_{{1 \leq  m \leq  M}}{\bar{\lambda }}^{1/2}\left( {\mathbf{B}\mathbf{{C}}_{\left( m\right) }\mathbf{{E}}_{\left( m\right) }\mathbf{{F}}_{\left( m\right) }\mathbf{{H}}_{\left( m\right) }\mathbf{{H}}_{\left( m\right) }^{\mathrm{T}}\mathbf{{F}}_{\left( m\right) }^{\mathrm{T}}\mathbf{{E}}_{\left( m\right) }^{\mathrm{T}}\mathbf{{C}}_{\left( m\right) }^{\mathrm{T}}{\mathbf{B}}^{\mathrm{T}}}\right)\notag\\
	\leq & \mathop{\max }\limits_{{1 \leq  m \leq  M}}{\bar{\lambda }}^{1/2}\left( {\mathbf{{H}}_{\left( m\right) }\mathbf{{H}}_{\left( m\right) }^{\mathrm{T}}}\right) {\bar{\lambda }}^{1/2}\left( {\mathbf{{E}}_{\left( m\right) }\mathbf{{F}}_{\left( m\right) }\mathbf{{F}}_{\left( m\right) }^{\mathrm{T}}\mathbf{{E}}_{\left( m\right) }^{\mathrm{T}}}\right) {\bar{\lambda }}^{1/2}\left( {\mathbf{B}\mathbf{{C}}_{\left( m\right) }\mathbf{{C}}_{\left( m\right) }^{\mathrm{T}}{\mathbf{B}}^{\mathrm{T}}}\right)\notag\\
	\leq & \mathop{\max }\limits_{{1 \leq  m \leq  M}}{\bar{\lambda }}^{1/2}\left( {\mathbf{{H}}_{\left( m\right) }\mathbf{{H}}_{\left( m\right) }^{\mathrm{T}}}\right) \mathop{\max }\limits_{{1 \leq  m \leq  M}}{\bar{\lambda }}^{1/2}\left( {\mathbf{{E}}_{\left( m\right) }\mathbf{{F}}_{\left( m\right) }\mathbf{{F}}_{\left( m\right) }^{\mathrm{T}}\mathbf{{E}}_{\left( m\right) }^{\mathrm{T}}}\right) \mathop{\max }\limits_{{1 \leq  m \leq  M}}{\bar{\lambda }}^{1/2}\left( {\mathbf{B}\mathbf{{C}}_{\left( m\right) }\mathbf{{C}}_{\left( m\right) }^{\mathrm{T}}{\mathbf{B}}^{\mathrm{T}}}\right)\notag\\
	= &{O}_{p}\left( {\widetilde{p}}^{1/2}\right) ,
\end{align}
which implies that (A.9) is true.

\begin{lem}\label{lem3}
	Under Conditions 3-6 and 9-11, for the $m$th $(1 \leq m \leq M)$ candidate model, we have
	\begin{equation}
		\|\widehat{\bm{\beta}}_{(m)}(\mathbf{s}) - \bm{\beta}^{*}_{(m)}(\mathbf{s})\|^2 = O_{up}\left( \frac{p_m^3}{n}  + \frac{p_m}{n h_m^2}+ p_m h^4_m \right),
	\end{equation}
	where the notation $g(\mathbf{s}) = O_{up}(h_m^2)$ means that $g(\mathbf{s})/h_m^2$ is bounded in probability uniformly for any $\mathbf{s}$ within the interior of $D$.
\end{lem}

\noindent\textit{Proof.} Let $\mathbf{u} = (u_1, u_2)^{\mathrm{T}}$ be a random vector with probability density function $f(\mathbf{u})$. Then, under Conditions 3-6 and 9, it follows that
\begin{align}
	&E\left\{K_{h_m}(\|\mathbf{u}-\mathbf{s}\|_q)(\beta^{*}_{(m)k}(\mathbf{u})-\beta^{*}_{(m)k}(\mathbf{s}))\right\}\notag\\
	=&\int_{(v_1,v_2) \in \operatorname{supp}(K^{\prime}_q)}K^{\prime}_q \left( v_{1},v_{2} \right)\left\{f(s_1,s_2)+f_x(s_1,s_2)h_{m}v_1+f_y(s_1,s_2)h_{m}v_2\right.\notag\\
	\,&\left.+f_{xx}(s_1^*,s_2^*)h_{m}^2v_1^2+2f_{xy}(s_1^*,s_2^*)h_{m}^2v_1v_2+f_{yy}(s_1^*,s_2^*)h_{m}^2v_2^2\right\}\notag\\
	\,&\left\{\beta^{*}_{(m)k,_x}(s_1,s_2)h_{m}v_1+\beta^{*}_{(m)k,_y}(s_1,s_2)h_{m}v_2+\beta^{*}_{(m)k,_{xx}}(s_3^*,s_4^*)h_{m}^2v_1^2\right.\notag\\
	\,&\left.+2\beta^{*}_{(m)k,_{xy}}(s_3^*,s_4^*)h_{m}^2v_1v_2+\beta^{*}_{(m)k,_{yy}}(s_3^*,s_4^*)h_{m}^2v_2^2\right\}dv_1dv_2\notag\\
	=&O_{up}(h_{m}^2),
\end{align}
where $(s_3^*, s_4^*)$ is an intermediate point between $(s_1, s_2)$ and $(s_1 + h_m v_1, s_2 + h_m v_2)$. The notation $(s_3^*, s_4^*)$ is retained in the subsequent formulae, even though its specific value may not be uniform throughout. Similarly,
\begin{align}
	&E\left\{K_{h_m}^2(\|\mathbf{u}-\mathbf{s}\|_q)(\beta^{*}_{(m)k}(\mathbf{u})-\beta^{*}_{(m)k}(\mathbf{s}))^2\right\}\notag\\
	=&\frac{1}{h_m^2}\int_{(v_1,v_2) \in \operatorname{supp}(K^{\prime}_q)}K^{\prime2}_q \left( v_{1},v_{2} \right)\left\{f(s_1,s_2)+f_x(s_1,s_2)h_{m}v_1+f_y(s_1,s_2)h_{m}v_2\right.\notag\\
	&\left.+f_{xx}(s_1^*,s_2^*)h_{m}^2v_1^2+2f_{xy}(s_1^*,s_2^*)h_{m}^2v_1v_2+f_{yy}(s_1^*,s_2^*)h_{m}^2v_2^2\right\}\notag\\
	&\left\{\beta^{*}_{(m)k,_{x}}(s_1,s_2)h_{m}v_1+\beta^{*}_{(m)k,_{y}}(s_1,s_2)h_{m}v_2+\beta^{*}_{(m)k,_{xx}}(s_3^*,s_4^*)h_{m}^2v_1^2\right.\notag\\
	\,&\left.+2\beta^{*}_{(m)k,_{xy}}(s_3^*,s_4^*)h_{m}^2v_1v_2+\beta^{*}_{(m)k,_{yy}}(s_3^*,s_4^*)h_{m}^2v_2^2\right\}^2dv_1dv_2\notag\\
	=&O_{up}(1).
\end{align}

Let $\mathbf{b}^{*}_{(m)}=(b^{*}_{(m)1},b^{*}_{(m)2},\ldots,b^{*}_{(m)n})^{\mathrm{T}}=\bm{\mu}-\bm{\mu}^{*}_{(m)}$. Note that
\begin{align}
	\widehat{\bm{\beta}}_{(m)}(\mathbf{s})&=(\mathbf{X}_{(m)}^{\mathrm{T}}\mathbf{W}_{(m)\mathbf{s}}\mathbf{X}_{(m)})^{-1}\mathbf{X}_{(m)}^{\mathrm{T}}\mathbf{W}_{(m)\mathbf{s}}\mathbf{Y}\notag\\
	&=\bm{\beta}^{*}_{(m)}(\mathbf{s})+(\mathbf{X}_{(m)}^{\mathrm{T}}\mathbf{W}_{(m)\mathbf{s}}\mathbf{X}_{(m)})^{-1}\mathbf{X}_{(m)}^{\mathrm{T}}\mathbf{W}_{(m)\mathbf{s}}(\bm{\mu}^{*}_{(m)}+\mathbf{b}^{*}_{(m)}-\mathbf{X}_{(m)}\bm{\beta}^{*}_{(m)}(\mathbf{s}))\notag\\
	&\quad+(\mathbf{X}_{(m)}^{\mathrm{T}}\mathbf{W}_{(m)\mathbf{s}}\mathbf{X}_{(m)})^{-1}\mathbf{X}_{(m)}^{\mathrm{T}}\mathbf{W}_{(m)\mathbf{s}}\bm{\epsilon}\notag\\
	&=\bm{\beta}^{*}_{(m)}(\mathbf{s})+\mathbf{B}_{m1}(\mathbf{s})+\mathbf{B}_{m2}(\mathbf{s})+\mathbf{B}_{m3}(\mathbf{s}),
\end{align}
where
\begin{equation}
	\mathbf{B}_{m1}(\mathbf{s})=(\mathbf{X}_{(m)}^{\mathrm{T}}\mathbf{W}_{(m)\mathbf{s}}\mathbf{X}_{(m)})^{-1}\mathbf{X}_{(m)}^{\mathrm{T}}\mathbf{W}_{(m)\mathbf{s}}(\bm{\mu}^{*}_{(m)}-\mathbf{X}_{(m)}\bm{\beta}^{*}_{(m)}(\mathbf{s})),
\end{equation}
\begin{equation}
	\mathbf{B}_{m2}(\mathbf{s})=(\mathbf{X}_{(m)}^{\mathrm{T}}\mathbf{W}_{(m)\mathbf{s}}\mathbf{X}_{(m)})^{-1}\mathbf{X}_{(m)}^{\mathrm{T}}\mathbf{W}_{(m)\mathbf{s}}\mathbf{b}^{*}_{(m)},
\end{equation}
and
\begin{equation}
	\mathbf{B}_{m3}(\mathbf{s})=(\mathbf{X}_{(m)}^{\mathrm{T}}\mathbf{W}_{(m)\mathbf{s}}\mathbf{X}_{(m)})^{-1}\mathbf{X}_{(m)}^{\mathrm{T}}\mathbf{W}_{(m)\mathbf{s}}\bm{\epsilon}.
\end{equation}

For $\mathbf{B}_{m1}(\mathbf{s})$, note that
\begin{align}
	&\frac{1}{n}\mathbf{X}_{(m)}^{\mathrm{T}}\mathbf{W}_{(m)\mathbf{s}}(\bm{\mu}^{*}_{(m)}-\mathbf{X}_{(m)}\bm{\beta}^{*}_{(m)}(\mathbf{s}))\notag\\
	=&\frac{1}{n}\sum_{i=1}^{n}K_{h_m}(\|\mathbf{s}_i-\mathbf{s}\|_q){\mathbf{X}}_{\left( m\right) i}{\mathbf{X}}_{\left( m\right) i}^{\mathrm{T}}(\bm{\beta}^{*}_{(m)}(\mathbf{s}_i)-\bm{\beta}^{*}_{(m)}(\mathbf{s}))\notag\\
	=&\frac{1}{n}\sum_{i=1}^{n}\left\{{\mathbf{X}}_{\left( m\right) i}{\mathbf{X}}_{\left( m\right) i}^{\mathrm{T}}-E({\mathbf{X}}_{\left( m\right) 1}{\mathbf{X}}_{\left( m\right) 1}^{\mathrm{T}})\right\}\mathbf{Z}_{(m)i}\notag\\
	&+\frac{1}{n}\sum_{i=1}^{n}E({\mathbf{X}}_{\left( m\right) 1}{\mathbf{X}}_{\left( m\right) 1}^{\mathrm{T}})\{\mathbf{Z}_{(m)i}-E(\mathbf{Z}_{(m)1})\}+E({\mathbf{X}}_{\left( m\right) 1}{\mathbf{X}}_{\left( m\right) 1}^{\mathrm{T}})E(\mathbf{Z}_{(m)1})\notag\\
	=&\mathbf{B}_{m4}(\mathbf{s})+\mathbf{B}_{m5}(\mathbf{s})+\mathbf{B}_{m6}(\mathbf{s}).
\end{align}
where 
\begin{equation}
	\mathbf{Z}_{(m)i}=K_{h_m}(\|\mathbf{s}_i-\mathbf{s}\|_q)(\bm{\beta}^{*}_{(m)}(\mathbf{s}_i)-\bm{\beta}^{*}_{(m)}(\mathbf{s})),
\end{equation} 
\begin{equation}
	\mathbf{B}_{m4}(\mathbf{s})=\frac{1}{n}\sum_{i=1}^{n}\left\{{\mathbf{X}}_{\left( m\right) i}{\mathbf{X}}_{\left( m\right) i}^{\mathrm{T}}-E({\mathbf{X}}_{\left( m\right) 1}{\mathbf{X}}_{\left( m\right) 1}^{\mathrm{T}})\right\}\mathbf{Z}_{(m)i},
\end{equation}
\begin{equation}
	\mathbf{B}_{m5}(\mathbf{s})=\frac{1}{n}\sum_{i=1}^{n}E({\mathbf{X}}_{\left( m\right) 1}{\mathbf{X}}_{\left( m\right) 1}^{\mathrm{T}})\{\mathbf{Z}_{(m)i}-E(\mathbf{Z}_{(m)1})\},
\end{equation}
and
\begin{equation}
	\mathbf{B}_{m6}(\mathbf{s})=E({\mathbf{X}}_{\left( m\right) 1}{\mathbf{X}}_{\left( m\right) 1}^{\mathrm{T}})E(\mathbf{Z}_{(m)1}).
\end{equation}
Let $\|A\|_{F}$ denote the $F$-norm of matrix $A$. For $\mathbf{B}_{m4}(\mathbf{s})$, according to Condition 4, we have
\begin{align}
	E\|{\mathbf{X}}_{\left( m\right) 1}{\mathbf{X}}_{\left( m\right) 1}^{\mathrm{T}}-E({\mathbf{X}}_{\left( m\right) 1}{\mathbf{X}}_{\left( m\right) 1}^{\mathrm{T}})\|_F^2
	=&E\left(\sum_{l=1}^{p_m}\sum_{k=1}^{p_m}(x^m_{1l}x^m_{1k}-E(x^m_{1l}x^m_{1k}))^2\right)\notag\\
	=&O(p^2_m).
\end{align}
Furthermore, we can obtain
\begin{align}
	&E\|\mathbf{B}_{m4}(\mathbf{s})\|^2\notag\\
	=&\frac{1}{n}E\{\mathbf{Z}_{(m)1}^{\mathrm{T}}\left\{{\mathbf{X}}_{\left( m\right) 1}{\mathbf{X}}_{\left( m\right) 1}^{\mathrm{T}}-E({\mathbf{X}}_{\left( m\right) 1}{\mathbf{X}}_{\left( m\right) 1}^{\mathrm{T}})\right\}^{\mathrm{T}}\left\{{\mathbf{X}}_{\left( m\right) 1}{\mathbf{X}}_{\left( m\right) 1}^{\mathrm{T}}-E({\mathbf{X}}_{\left( m\right) 1}{\mathbf{X}}_{\left( m\right) 1}^{\mathrm{T}})\right\}\mathbf{Z}_{(m)1}\}\notag\\
	\leq&\frac{1}{n}E\|{\mathbf{X}}_{\left( m\right) 1}{\mathbf{X}}_{\left( m\right) 1}^{\mathrm{T}}-E({\mathbf{X}}_{\left( m\right) 1}{\mathbf{X}}_{\left( m\right) 1}^{\mathrm{T}})\|_F^2E\{\mathbf{Z}_{(m)1}^{\mathrm{T}}\mathbf{Z}_{(m)1}\}\notag\\
	=&O\left(\frac{p_m^3}{n}\right).
\end{align}
For $\mathbf{B}_{m5}(\mathbf{s})$, combining Condition 4 and (A.27), we have
\begin{align}
	&E\|\mathbf{B}_{m5}(\mathbf{s})\|^2\notag\\
	=&\frac{1}{n}E\{(\mathbf{Z}_{(m)1}-E(\mathbf{Z}_{(m)1}))^{\mathrm{T}}E({\mathbf{X}}_{\left( m\right) 1}{\mathbf{X}}_{\left( m\right) 1}^{\mathrm{T}})E({\mathbf{X}}_{\left( m\right) 1}{\mathbf{X}}_{\left( m\right) 1}^{\mathrm{T}})(\mathbf{Z}_{(m)1}-E(\mathbf{Z}_{(m)1}))\}\notag\\
	\leq&\frac{\bar{c}_\mathbf{X}^2}{n}E\{(\mathbf{Z}_{(m)1}-E(\mathbf{Z}_{(m)1}))^{\mathrm{T}}(\mathbf{Z}_{(m)1}-E(\mathbf{Z}_{(m)1}))\}\notag\\
	=&O\left(\frac{p_m}{n}\right).
\end{align}
Meanwhile, for $\mathbf{B}_{m6}(\mathbf{s})$, according to Condition 4 and (A.26), we obtain
\begin{equation}
	\|\mathbf{B}_{m6}(\mathbf{s})\|^2\leq\bar{c}_\mathbf{X}^2\|E(\mathbf{Z}_{(m)1})\|^2=O(p_mh_m^4).
\end{equation}
Combining Lemma 1 with (A.38)-(A.40), we have
\begin{equation}
	\|\mathbf{B}_{m1}(\mathbf{s})\|^2 = O_{up}\left(\frac{p_m^3}{n}+p_mh_m^4\right).
\end{equation}

For $\mathbf{B}_{m2}(\mathbf{s})$, we note that $\mathbf{B}_{m2}(\mathbf{s}) = \bm{0}$ when $1 \leq m \leq M_0$, whereas for $M_0+1 \leq m \leq M$,
\begin{align}
	&\frac{1}{n}\mathbf{X}_{(m)}^{\mathrm{T}}\mathbf{W}_{(m)\mathbf{s}}\mathbf{b}^{*}_{(m)}\notag\\
	=&\frac{1}{n}\sum_{i=1}^{n}K_{h_m}(\|\mathbf{s}_i-\mathbf{s}\|_q)\mathbf{X}_{(m)i}{b}^{*}_{(m)i}.
\end{align}
Taking the $j$th row element in (A.42) and applying the definition of $\mathbf{b}^{*}_{(m)}$, we have
\begin{equation}
	E\left\{\frac{1}{n}\sum_{i=1}^{n}K_{h_m}(\|\mathbf{s}_i-\mathbf{s}\|_q)x_{ij}^m{b}^{*}_{(m)i}\right\}=0,
\end{equation}
Meanwhile, from (A.3), Condition 4, Condition 10, and the Cauchy–Schwarz inequality, it follows that
\begin{align}
	&\mathrm{var}\left\{\frac{1}{n}\sum_{i=1}^{n}K_{h_m}(\|\mathbf{s}_i-\mathbf{s}\|_q)x_{ij}^m{b}^{*}_{(m)i}\right\} \notag\\
	=&\frac{1}{n}E\left\{K_{h_m}^2(\|\mathbf{s}_1-\mathbf{s}\|_q)x_{1j}^{m2}{b}^{*2}_{(m)1}\right\} \notag\\
	\leq&\frac{1}{n}[E\{K_{h_m}^2(\|\mathbf{s}_1-\mathbf{s}\|_q)x_{1j}^{m4}\}]^{1/2}[E\{K_{h_m}^2(\|\mathbf{s}_1-\mathbf{s}\|_q)E({b}^{*4}_{(m)1}|\mathbf{s}_1)\}]^{1/2}\notag\\
	=& O_{up}\left(\frac{1}{nh^2_m}\right).
\end{align}
Based on (A.44) and Lemma 1, we have $\left\|\mathbf{B}_{m2}(\mathbf{s})\right\|^2 = O_{up}\left(\frac{p_m}{nh^2_m}\right)$.

For $\mathbf{B}_{m3}(\mathbf{s})$, note that
\begin{align}
	&\frac{1}{n}\mathbf{X}_{(m)}^{\mathrm{T}}\mathbf{W}_{(m)\mathbf{s}}\bm{\epsilon}\notag\\
	=&\frac{1}{n}\sum_{i=1}^{n}K_{h_m}(\|\mathbf{s}_i-\mathbf{s}\|_q)\mathbf{X}_{(m)i}\epsilon_{i}.
\end{align}
Taking the $j$th row element in (A.45), we have
\begin{equation}
	E\left\{\frac{1}{n}\sum_{i=1}^{n}K_{h_m}(\|\mathbf{s}_i-\mathbf{s}\|_q)x_{ij}^m\epsilon_{i}\right\}=0,
\end{equation}
Meanwhile, from (A.3) and Condition 4, it follows that
\begin{align}
	&\mathrm{var}\left\{\frac{1}{n}\sum_{i=1}^{n}K_{h_m}(\|\mathbf{s}_i-\mathbf{s}\|_q)x_{ij}^m\epsilon_{i}\right\}\notag\\
	=&E\left[\mathrm{var}\left\{\frac{1}{n}\sum_{i=1}^{n}K_{h_m}(\|\mathbf{s}_i-\mathbf{s}\|_q)x_{ij}^m\epsilon_{i}\Big|\mathbf{X},\bm{\Psi}\right\}\right]\notag\\
	&+\mathrm{var}\left[E\left\{\frac{1}{n}\sum_{i=1}^{n}K_{h_m}(\|\mathbf{s}_i-\mathbf{s}\|_q)x_{ij}^m\epsilon_{i}\Big|\mathbf{X},\bm{\Psi} \right\}\right]\notag\\
	=&\frac{\sigma^2}{n}E\left\{\frac{1}{n}\sum_{i=1}^{n}K_{h_m}^2(\|\mathbf{s}_i-\mathbf{s}\|_q)x^{m2}_{ij}\right\}\notag\\
	=&\frac{\sigma^2}{n}E\left\{x^{m2}_{1j}\right\}E\left\{K_{h_m}^2(\|\mathbf{u}-\mathbf{s}\|_q)\right\}\notag\\
	=&O_{up}(\frac{1}{nh_{m}^2}).
\end{align}
From Condition 4, (A.46), (A.47), and Chebyshev's inequality, it follows that
\begin{equation}
	\left|\frac{1}{n}\sum_{i=1}^{n}K_{h_m}(\|\mathbf{s}_i-\mathbf{s}\|_q)x_{ij}^m\epsilon_{i}\right|=O_{up}(\frac{1}{\sqrt{nh^2_m}}),
\end{equation}
Based on (A.48) and Lemma 1, we have $\left\|\mathbf{B}_{m3}(\mathbf{s})\right\|^2 = O_{up}\left( \frac{p_m}{n h_{m}^2} \right)$. By the triangle inequality, for the $m$th $(1 \leq m \leq M)$ candidate model, we have
\begin{equation}
	\|\widehat{\bm{\beta}}_{(m)}(\mathbf{s}) - \bm{\beta}^{*}_{(m)}(\mathbf{s})\|^2 = O_{up}\left( \frac{p_m^3}{n} + \frac{p_m}{n h_m^2} + p_m h^4_m \right).
\end{equation}
Lemma 3 has been proved.

\noindent\textbf{Proof of Theorem 1}

Note that the \({C}_{n}\left( \mathbf{w}\right)\) is
\begin{align}
	{C}_{n}\left( \mathbf{w}\right)  &= \parallel \mathbf{Y} - \widehat{\bm{\mu} }\left( \mathbf{w}\right) {\parallel }^{2} + 2\operatorname{tr}\left( {\mathbf{P}\left( \mathbf{w}\right) \bm{\Omega} }\right)\notag\\
	&= {L}_{n}\left( \mathbf{w}\right)  + \parallel \bm{\epsilon} {\parallel }^{2} - 2{\bm{\epsilon} }^{\mathrm{T}}\left( {\mathbf{P}\left( \mathbf{w}\right)  - \mathbf{{I}}_{n}}\right) \bm{\mu}  - 2\left\{  {{\bm{\epsilon} }^{\mathrm{T}}\mathbf{P}\left( \mathbf{w}\right) \bm{\epsilon}  - \operatorname{tr}\left( {\mathbf{P}\left( \mathbf{w}\right) \bm{\Omega} }\right) }\right\}  , 
\end{align}
where \(\parallel \bm{\epsilon} {\parallel }^{2}\)
is independent of \(\mathbf{w}\). Hence to prove Theorem 1, we need only to verify that
\begin{equation}
	\mathop{\sup }\limits_{{\mathbf{w} \in  {\mathcal{H}}_{n}}}\left| \frac{{\bm{\epsilon} }^{\mathrm{T}}\left( {\mathbf{P}\left( \mathbf{w}\right)  - \mathbf{{I}}_{n}}\right) \bm{\mu} }{{R}_{n}\left( \mathbf{w}\right) }\right| \overset{p}{ \rightarrow  }0, 
\end{equation}
\begin{equation}
	\mathop{\sup }\limits_{{\mathbf{w} \in  {\mathcal{H}}_{n}}}\left| \frac{{\bm{\epsilon} }^{\mathrm{T}}\mathbf{P}\left( \mathbf{w}\right) \bm{\epsilon}  - \operatorname{tr}\left( {\mathbf{P}\left( \mathbf{w}\right) \bm{\Omega} }\right) }{{R}_{n}\left( \mathbf{w}\right) }\right| \overset{p}{ \rightarrow  }0,
\end{equation}
and
\begin{equation}
	\mathop{\sup }\limits_{{\mathbf{w} \in  {\mathcal{H}}_{n}}}\left| {\frac{{L}_{n}\left( \mathbf{w}\right) }{{R}_{n}\left( \mathbf{w}\right) } - 1}\right| \overset{p}{ \rightarrow  }0.
\end{equation}

We observe that for any \(\delta  > 0\),
\begin{align}
	&P\left( {\mathop{\sup }\limits_{{\mathbf{w} \in  {\mathcal{H}}_{n}}}\left| \frac{{\bm{\epsilon} }^{\mathrm{T}}\left( {\mathbf{P}\left( \mathbf{w}\right)  - \mathbf{{I}}_{n}}\right) \bm{\mu} }{{R}_{n}\left( \mathbf{w}\right) }\right|  > \delta |\mathbf{X},\bm{\Psi} }\right)\notag\\
	\leq & P\left( {\mathop{\sup }\limits_{{\mathbf{w} \in  {\mathcal{H}}_{n}}}\mathop{\sum }\limits_{{m = 1}}^{M}{w}_{m}\left| {{\bm{\epsilon} }^{\mathrm{T}}\left( {{\mathbf{P}}_{\left( m\right) } - \mathbf{{I}}_{n}}\right) \bm{\mu} }\right|  > \delta {\xi }_{n} |\mathbf{X},\bm{\Psi} }\right)\notag\\
	=& P\left( {\mathop{\max }\limits_{{1 \leq  m \leq  M}}\left| {{\bm{\epsilon} }^{\mathrm{T}}\left( {{\mathbf{P}}_{\left( m\right) } - \mathbf{{I}}_{n}}\right) \bm{\mu} }\right|  > \delta {\xi }_{n} |\mathbf{X},\bm{\Psi} }\right)\notag\\
	=& P\left( {\left\{  {\left| {{\bm{\epsilon} }^{\mathrm{T}}\left( {\mathbf{P}\left( {\mathbf{w}}_{1}^{0}\right)  - \mathbf{{I}}_{n}}\right) \bm{\mu} }\right|  > \delta {\xi }_{n}}\right\}   \cup  \left\{  {\left| {{\bm{\epsilon} }^{\mathrm{T}}\left( {\mathbf{P}\left( {\mathbf{w}}_{2}^{0}\right)  - \mathbf{{I}}_{n}}\right) \bm{\mu} }\right|  > \delta {\xi }_{n}}\right\}  }\right.\notag\\
	&\cup  \ldots  \cup  \left\{  {\left| {{\bm{\epsilon} }^{\mathrm{T}}\left( {\mathbf{P}\left( {\mathbf{w}}_{M}^{0}\right)  - \mathbf{{I}}_{n}}\right) \bm{\mu} }\right|  > \delta {\xi }_{n}}\right\} |\mathbf{X},\bm{\Psi} )\notag\\
	\leq &  \mathop{\sum }\limits_{{m = 1}}^{M}P\left( {\left\{  {\left| {{\bm{\epsilon} }^{\mathrm{T}}\left( {\mathbf{P}\left( {\mathbf{w}}_{m}^{0}\right)  - \mathbf{{I}}_{n}}\right) \bm{\mu} }\right|  > \delta {\xi }_{n}}\right\} |\mathbf{X},\bm{\Psi}}\right)\notag\\
	\leq & \mathop{\sum }\limits_{{m = 1}}^{M}\frac{E\left( {{\left| {\bm{\epsilon} }^{\mathrm{T}}\left( \mathbf{P}\left( {\mathbf{w}}_{m}^{0}\right)  - \mathbf{{I}}_{n}\right) \bm{\mu} \right| }^{2G} |\mathbf{X},\bm{\Psi} }\right) }{{\delta }^{2G}{\xi }_{n}^{2G}}\notag\\
	\leq & \mathop{\sum }\limits_{{m = 1}}^{M}\frac{{c}_{1}{\begin{Vmatrix}\left( \mathbf{P}\left( {\mathbf{w}}_{m}^{0}\right)  - \mathbf{{I}}_{n}\right) \bm{\mu} \end{Vmatrix}}^{2G}}{{\delta }^{2G}{\xi }_{n}^{2G}}\notag\\
	\leq & \mathop{\sum }\limits_{{m = 1}}^{M}\frac{{c}_{1}{\left\{  {R}_{n}\left( {\mathbf{w}}_{m}^{0}\right) \right\}  }^{G}}{{\delta }^{2G}{\xi }_{n}^{2G}}\notag\\
	= &\frac{{c}_{1}}{{\delta }^{2G}} \cdot  \frac{\mathop{\sum }\limits_{{m = 1}}^{M}{\left\{  {R}_{n}\left( {\mathbf{w}}_{m}^{0}\right) \right\}  }^{G}}{{\xi }_{n}^{2G}}\notag\\
	= &{o}_{p}\left( 1\right) ,
\end{align}
where  \({c}_{1}\) is a positive constant. The second inequality follows from the Bonferroni inequality, the third is obtained by an extension of Chebyshev's inequality, the fourth is guaranteed by Theorem 2 of \cite{Whittle} and Condition 1, the fifth follows from \eqref{eq8}, and the final equality is derived from Condition 2.

Define \({A}_{n1}\left( {\mathbf{X},\bm{\Psi}}\right) = P\left( {\mathop{\sup }\limits_{{\mathbf{w} \in  {\mathcal{H}}_{n}}}\left| \frac{{\bm{\epsilon} }^{\mathrm{T}}\left( {P\left( \mathbf{w}\right)  - \mathbf{{I}}_{n}}\right) \bm{\mu} }{{R}_{n}\left( \mathbf{w}\right) }\right|  > \delta |\mathbf{X},\bm{\Psi} }\right)\). Let \(F\left( {\mathbf{X},\bm{\Psi}}\right)\) be the joint cumulative distribution function of $\mathbf{X}$ and $\bm{\Psi}$, and let \({c}_{2},\cdots ,{c}_{6}\) be a sequence of positive constants. Note that \({A}_{n1}\left( {\mathbf{X},\bm{\Psi}}\right)\) is a conditional probability density given $\mathbf{X}$ and $\bm{\Psi}$, hence \(\left| {{A}_{n1}\left( {\mathbf{X},\bm{\Psi}}\right) }\right|  \leq  1\). From (A.54), for any $\varsigma$ > 0, we obtain \(P\left( {\left| {{A}_{n1}\left( {\mathbf{X},\bm{\Psi}}\right) }\right|  \geq  \varsigma }\right)  \rightarrow  0\). Then we have 
\begin{align}
	&P\left( {\mathop{\sup }\limits_{{\mathbf{w} \in  {\mathcal{H}}_{n}}}\left| \frac{{\bm{\epsilon} }^{\mathrm{T}}\left( {\mathbf{P}\left( \mathbf{w}\right)  - \mathbf{{I}}_{n}}\right) \bm{\mu} }{{R}_{n}\left( \mathbf{w}\right) }\right|  > \delta }\right)\notag\\
	=& E\left\{  {P\left( {\mathop{\sup }\limits_{{\mathbf{w} \in  {\mathcal{H}}_{n}}}\left| \frac{{\bm{\epsilon} }^{\mathrm{T}}\left( {\mathbf{P}\left( \mathbf{w}\right)  - \mathbf{{I}}_{n}}\right) \bm{\mu} }{{R}_{n}\left( \mathbf{w}\right) }\right|  > \delta |\mathbf{X},\bm{\Psi}}\right) }\right\}\notag\\
	= &E\left\{  {{A}_{n1}\left( {\mathbf{X},\bm{\Psi}}\right) }\right\}\notag\\
	= &{\int }_{\left| {{A}_{n1}\left( {\mathbf{X},\bm{\Psi}}\right) }\right|  \geq  \varsigma }{A}_{n1}\left( {\mathbf{X},\bm{\Psi}}\right) {dF}\left( {\mathbf{X},\bm{\Psi}}\right)  + {\int }_{\left| {{A}_{n1}\left( {\mathbf{X},\bm{\Psi}}\right) }\right|  < \varsigma }{A}_{n1}\left( {\mathbf{X},\bm{\Psi}}\right) {dF}\left( {\mathbf{X},\bm{\Psi}}\right)\notag\\
	\leq & P\left( {\left| {{A}_{n1}\left( {\mathbf{X},\bm{\Psi}}\right) }\right|  \geq  \varsigma }\right)  + \varsigma  \rightarrow  0, 
\end{align}
which implies that (A.51) is true.

Similar to the proof of (A.51), we have
\begin{align}
	&P\left( {\mathop{\sup }\limits_{{\mathbf{w} \in  {\mathcal{H}}_{n}}}\left| \frac{{\bm{\epsilon} }^{\mathrm{T}}\mathbf{P}\left( \mathbf{w}\right) \bm{\epsilon}  - \operatorname{tr}\left( {\mathbf{P}\left( \mathbf{w}\right) \bm{\Omega} }\right) }{{R}_{n}\left( \mathbf{w}\right) }\right|  > \delta  |\mathbf{X},\bm{\Psi}}\right)\notag\\
	\leq & P\left( {\mathop{\sup }\limits_{{\mathbf{w} \in  {\mathcal{H}}_{n}}}\mathop{\sum }\limits_{{m = 1}}^{M}{w}_{m}\left| {{\bm{\epsilon} }^{\mathrm{T}}{\mathbf{P}}_{\left( m\right) }\bm{\epsilon}  - \operatorname{tr}\left( {{\mathbf{P}}_{\left( m\right) }\bm{\Omega} }\right) }\right|  > \delta {\xi }_{n} |\mathbf{X},\bm{\Psi}}\right)\notag\\
	=& P\left( {\mathop{\max }\limits_{{1 \leq  m \leq  M}}\left| {{\bm{\epsilon} }^{\mathrm{T}}{\mathbf{P}}_{\left( m\right) }\bm{\epsilon}  - \operatorname{tr}\left( {{\mathbf{P}}_{\left( m\right) }\bm{\Omega} }\right) }\right|  > \delta {\xi }_{n} |\mathbf{X},\bm{\Psi}}\right)\notag\\
	\leq & \mathop{\sum }\limits_{{m = 1}}^{M}P\left( {\left| {{\bm{\epsilon} }^{\mathrm{T}}\mathbf{P}\left( {\mathbf{w}}_{m}^{0}\right) \bm{\epsilon}  - \operatorname{tr}\left( {\mathbf{P}\left( {\mathbf{w}}_{m}^{0}\right) \bm{\Omega} }\right) }\right|  > \delta {\xi }_{n} |\mathbf{X},\bm{\Psi}}\right)\notag\\
	\leq & \mathop{\sum }\limits_{{m = 1}}^{M}\frac{E\left( {{\left| {\bm{\epsilon} }^{\mathrm{T}}\mathbf{P}\left( {\mathbf{w}}_{m}^{0}\right) \bm{\epsilon}  - \operatorname{tr}\left( \mathbf{P}\left( {\mathbf{w}}_{m}^{0}\right) \bm{\Omega} \right) \right| }^{2G} |\mathbf{X},\bm{\Psi}}\right) }{{\left( \delta {\xi }_{n}\right) }^{2G}}\notag\\
	\leq & \mathop{\sum }\limits_{{m = 1}}^{M}\frac{{c}_{2}t{r}^{G}\left\{  {{\mathbf{P}}^{\mathrm{T}}\left( {\mathbf{w}}_{m}^{0}\right) \mathbf{P}\left( {\mathbf{w}}_{m}^{0}\right) \bm{\Omega} }\right\}  }{{\left( \delta {\xi }_{n}\right) }^{2G}}\notag\\
	\leq & \mathop{\sum }\limits_{{m = 1}}^{M}\frac{{c}_{2}{\left\{  {R}_{n}\left( {\mathbf{w}}_{m}^{0}\right) \right\}  }^{G}}{{\left( \delta {\xi }_{n}\right) }^{2G}}\notag\\
	= &\frac{{c}_{2}}{{\delta }^{2G}} \cdot  \frac{\mathop{\sum }\limits_{{m = 1}}^{M}{\left\{  {R}_{n}\left( {\mathbf{w}}_{m}^{0}\right) \right\}  }^{G}}{{\xi }_{n}^{2G}}\notag\\
	= &{o}_{p}\left( 1\right) \text{ . }
\end{align}
Applying the steps in (A.55) to the above equation, we can obtain (A.52).

Now, note that
\begin{align}
	&\mathop{\sup }\limits_{{\mathbf{w} \in  {\mathcal{H}}_{n}}}\left| {\frac{{L}_{n}\left( \mathbf{w}\right) }{{R}_{n}\left( \mathbf{w}\right) } - 1}\right|  \notag\\
	= &\mathop{\sup }\limits_{{\mathbf{w} \in  {\mathcal{H}}_{n}}}\left| \frac{2{\bm{\epsilon} }^{\mathrm{T}}{\mathbf{P}}^{\mathrm{T}}\left( \mathbf{w}\right) \left( {\mathbf{P}\left( \mathbf{w}\right)  - \mathbf{{I}}_{n}}\right) \bm{\mu}  + \parallel \mathbf{P}\left( \mathbf{w}\right) \bm{\epsilon} {\parallel }^{2} - \operatorname{tr}\left( {{\mathbf{P}}^{\mathrm{T}}\left( \mathbf{w}\right) \mathbf{P}\left( \mathbf{w}\right) \bm{\Omega} }\right) }{{R}_{n}\left( \mathbf{w}\right) }\right|. \notag
\end{align}
To prove (A.53), it is sufficient to show that
\begin{equation}
	\mathop{\sup }\limits_{{\mathbf{w} \in  {\mathcal{H}}_{n}}}\left| \frac{{\bm{\epsilon} }^{\mathrm{T}}{\mathbf{P}}^{\mathrm{T}}\left( \mathbf{w}\right) \left( {\mathbf{P}\left( \mathbf{w}\right)  - \mathbf{{I}}_{n}}\right) \bm{\mu} }{{R}_{n}\left( \mathbf{w}\right) }\right| \overset{p}{ \rightarrow  }0, 
\end{equation}
and
\begin{equation}
	\mathop{\sup }\limits_{{\mathbf{w} \in  {\mathcal{H}}_{n}}}\left| \frac{\parallel \mathbf{P}\left( \mathbf{w}\right) \bm{\epsilon} {\parallel }^{2} - \operatorname{tr}\left( {{\mathbf{P}}^{\mathrm{T}}\left( \mathbf{w}\right) \mathbf{P}\left( \mathbf{w}\right) \bm{\Omega} }\right) }{{R}_{n}\left( \mathbf{w}\right) }\right| \overset{p}{ \rightarrow  }0.
\end{equation}

By Lemma 2 and Conditions 1 and 2, we have
\begin{align}
	&P\left( {\mathop{\sup }\limits_{{\mathbf{w} \in  {\mathcal{H}}_{n}}}\left| \frac{{\bm{\epsilon} }^{\mathrm{T}}{\mathbf{P}}^{\mathrm{T}}\left( \mathbf{w}\right) \left( {\mathbf{P}\left( \mathbf{w}\right)  - \mathbf{{I}}_{n}}\right) \bm{\mu} }{{R}_{n}\left( \mathbf{w}\right) }\right|  > \delta  |\mathbf{X},\bm{\Psi}}\right)\notag\\
	\leq & P\left( {\mathop{\sup }\limits_{{\mathbf{w} \in  {\mathcal{H}}_{n}}}\mathop{\sum }\limits_{{t = 1}}^{M}\mathop{\sum }\limits_{{m = 1}}^{M}{w}_{t}{w}_{m}\left| {{\bm{\epsilon} }^{\mathrm{T}}{\mathbf{P}}_{\left( t\right) }^{\mathrm{T}}\left( {{\mathbf{P}}_{\left( m\right) } - \mathbf{{I}}_{n}}\right) \bm{\mu} }\right|  > \delta {\xi }_{n} |\mathbf{X},\bm{\Psi}}\right)\notag\\
	\leq & P\left( {\mathop{\max }\limits_{{1 \leq  t \leq  M}}\mathop{\max }\limits_{{1 \leq  m \leq  M}}\left| {{\bm{\epsilon} }^{\mathrm{T}}{\mathbf{P}}_{\left( t\right) }^{\mathrm{T}}\left( {{\mathbf{P}}_{\left( m\right) } - \mathbf{{I}}_{n}}\right) \bm{\mu} }\right|  > \delta {\xi }_{n} |\mathbf{X},\bm{\Psi}}\right)\notag\\
	\leq & \mathop{\sum }\limits_{{t = 1}}^{M}\mathop{\sum }\limits_{{m = 1}}^{M}P\left( {\left| {{\bm{\epsilon} }^{\mathrm{T}}{\mathbf{P}}^{\mathrm{T}}\left( {\mathbf{w}}_{t}^{0}\right) \left( {\mathbf{P}\left( {\mathbf{w}}_{m}^{0}\right)  - \mathbf{{I}}_{n}}\right) \bm{\mu} }\right|  > \delta {\xi }_{n} |\mathbf{X},\bm{\Psi}}\right)\notag\\
	\leq & \mathop{\sum }\limits_{{t = 1}}^{M}\mathop{\sum }\limits_{{m = 1}}^{M}\frac{E\left( {{\left| {\bm{\epsilon} }^{\mathrm{T}}{\mathbf{P}}^{\mathrm{T}}\left( {\mathbf{w}}_{t}^{0}\right) \left( \mathbf{P}\left( {\mathbf{w}}_{m}^{0}\right)  - \mathbf{{I}}_{n}\right) \bm{\mu} \right| }^{2G} |\mathbf{X},\bm{\Psi}}\right) }{{\left( \delta {\xi }_{n}\right) }^{2G}}\notag\\
	\leq & \mathop{\sum }\limits_{{t = 1}}^{M}\mathop{\sum }\limits_{{m = 1}}^{M}\frac{{c}_{3}{\begin{Vmatrix}{\mathbf{P}}^{\mathrm{T}}\left( {\mathbf{w}}_{t}^{0}\right) \left( \mathbf{P}\left( {\mathbf{w}}_{m}^{0}\right)  - \mathbf{{I}}_{n}\right) \bm{\mu} \end{Vmatrix}}^{2G}}{{\left( \delta {\xi }_{n}\right) }^{2G}}\notag\\
	\leq & \mathop{\sum }\limits_{{t = 1}}^{M}\mathop{\sum }\limits_{{m = 1}}^{M}\frac{{c}_{3}{\bar{\lambda }}^{2G}\left( {\mathbf{P}\left( {\mathbf{w}}_{t}^{0}\right) }\right) {\begin{Vmatrix}\left( \mathbf{P}\left( {\mathbf{w}}_{m}^{0}\right)  - \mathbf{{I}}_{n}\right) \bm{\mu} \end{Vmatrix}}^{2G}}{{\left( \delta {\xi }_{n}\right) }^{2G}}\notag\\
	\leq & \mathop{\sum }\limits_{{t = 1}}^{M}\mathop{\sum }\limits_{{m = 1}}^{M}\frac{{c}_{3}{\bar{\lambda }}^{2G}\left( {\mathbf{P}\left( {\mathbf{w}}_{t}^{0}\right) }\right) {\left\{  {R}_{n}\left( {\mathbf{w}}_{m}^{0}\right) \right\}  }^{G}}{{\left( \delta {\xi }_{n}\right) }^{2G}}\notag\\
	= &\frac{{c}_{4}}{{\delta }^{2G}} \cdot  \frac{M{O}_{p}\left( {\widetilde{p}}^{G}\right) \mathop{\sum }\limits_{{m = 1}}^{M}{\left\{  {R}_{n}\left( {\mathbf{w}}_{m}^{0}\right) \right\}  }^{G}}{{\xi }_{n}^{2G}}\notag\\
	= &{o}_{p}\left( 1\right) \text{.} 
\end{align}
Equation (A.57) can be obtained by applying the steps in (A.55) to (A.59).

Furthermore,
\begin{align}
	&P\left( {\mathop{\sup }\limits_{{\mathbf{w} \in  {\mathcal{H}}_{n}}}\left| \frac{\parallel \mathbf{P}\left( \mathbf{w}\right) \bm{\epsilon} {\parallel }^{2} - \operatorname{tr}\left( {{\mathbf{P}}^{\mathrm{T}}\left( \mathbf{w}\right) \mathbf{P}\left( \mathbf{w}\right) \bm{\Omega} }\right) }{{R}_{n}\left( \mathbf{w}\right) }\right|  > \delta  |\mathbf{X},\bm{\Psi}}\right)\notag\\
	\leq & P\left( {\mathop{\sup }\limits_{{\mathbf{w} \in  {\mathcal{H}}_{n}}}\mathop{\sum }\limits_{{t = 1}}^{M}\mathop{\sum }\limits_{{m = 1}}^{M}{w}_{t}{w}_{m}\left| {{\bm{\epsilon} }^{\mathrm{T}}{\mathbf{P}}_{\left( t\right) }^{\mathrm{T}}{\mathbf{P}}_{\left( m\right) }\bm{\epsilon}  - \operatorname{tr}\left( {{\mathbf{P}}_{\left( t\right) }^{\mathrm{T}}{\mathbf{P}}_{\left( m\right) }\bm{\Omega} }\right) }\right|  > \delta {\xi }_{n} |\mathbf{X},\bm{\Psi}}\right)\notag\\
	\leq & P\left( {\mathop{\max }\limits_{{1 \leq  t \leq  M}}\mathop{\max }\limits_{{1 \leq  m \leq  M}}\left| {{\bm{\epsilon} }^{\mathrm{T}}{\mathbf{P}}_{\left( t\right) }^{\mathrm{T}}{\mathbf{P}}_{\left( m\right) }\bm{\epsilon}  - \operatorname{tr}\left( {{\mathbf{P}}_{\left( t\right) }^{\mathrm{T}}{\mathbf{P}}_{\left( m\right) }\bm{\Omega} }\right) }\right|  > \delta {\xi }_{n} |\mathbf{X},\bm{\Psi}}\right)\notag\\
	\leq & \mathop{\sum }\limits_{{t = 1}}^{M}\mathop{\sum }\limits_{{m = 1}}^{M}P\left( {\left| {{\bm{\epsilon} }^{\mathrm{T}}{\mathbf{P}}^{\mathrm{T}}\left( {\mathbf{w}}_{t}^{0}\right) \mathbf{P}\left( {\mathbf{w}}_{m}^{0}\right) \bm{\epsilon}  - \operatorname{tr}\left( {{\mathbf{P}}^{\mathrm{T}}\left( {\mathbf{w}}_{t}^{0}\right) \mathbf{P}\left( {\mathbf{w}}_{m}^{0}\right) \bm{\Omega} }\right) }\right|  > \delta {\xi }_{n} |\mathbf{X},\bm{\Psi}}\right)\notag\\
	\leq & \mathop{\sum }\limits_{{t = 1}}^{M}\mathop{\sum }\limits_{{m = 1}}^{M}\frac{E\left( {{\left| {\bm{\epsilon} }^{\mathrm{T}}{\mathbf{P}}^{\mathrm{T}}\left( {\mathbf{w}}_{t}^{0}\right) \mathbf{P}\left( {\mathbf{w}}_{m}^{0}\right) \bm{\epsilon}  - \operatorname{tr}\left( {\mathbf{P}}^{\mathrm{T}}\left( {\mathbf{w}}_{t}^{0}\right) \mathbf{P}\left( {\mathbf{w}}_{m}^{0}\right) \bm{\Omega} \right) \right| }^{2G} |\mathbf{X},\bm{\Psi}}\right) }{{\left( \delta {\xi }_{n}\right) }^{2G}}\notag\\
	\leq & \mathop{\sum }\limits_{{t = 1}}^{M}\mathop{\sum }\limits_{{m = 1}}^{M}\frac{{c}_{5}t{r}^{G}\left\{  {{\bm{\Omega} }^{1/2}{\mathbf{P}}^{\mathrm{T}}\left( {\mathbf{w}}_{t}^{0}\right) \mathbf{P}\left( {\mathbf{w}}_{m}^{0}\right) \bm{\Omega} {\mathbf{P}}^{\mathrm{T}}\left( {\mathbf{w}}_{m}^{0}\right) \mathbf{P}\left( {\mathbf{w}}_{t}^{0}\right) {\bm{\Omega} }^{1/2}}\right\}  }{{\left( \delta {\xi }_{n}\right) }^{2G}}\notag\\
	\leq & \mathop{\sum }\limits_{{t = 1}}^{M}\mathop{\sum }\limits_{{m = 1}}^{M}\frac{{c}_{5}{\bar{\lambda }}^{2G}\left( {\mathbf{P}\left( {\mathbf{w}}_{m}^{0}\right) }\right) {\bar{\lambda }}^{G}\left( \bm{\Omega} \right) t{r}^{G}\left\{  {{\mathbf{P}}^{\mathrm{T}}\left( {\mathbf{w}}_{t}^{0}\right) \mathbf{P}\left( {\mathbf{w}}_{t}^{0}\right) \bm{\Omega} }\right\}  }{{\left( \delta {\xi }_{n}\right) }^{2G}}\notag\\
	\leq & \mathop{\sum }\limits_{{t = 1}}^{M}\mathop{\sum }\limits_{{m = 1}}^{M}\frac{{c}_{5}{\bar{\lambda }}^{2G}\left( {\mathbf{P}\left( {\mathbf{w}}_{m}^{0}\right) }\right) {\bar{\lambda }}^{G}\left( \bm{\Omega} \right) {R}_{n}^{G}\left( {\mathbf{w}}_{t}^{0}\right) }{{\left( \delta {\xi }_{n}\right) }^{2G}}\notag\\
	= &\frac{{c}_{6}}{{\delta }^{2G}} \cdot  \frac{M{O}_{p}\left( {\widetilde{p}}^{G}\right) \mathop{\sum }\limits_{{t = 1}}^{M}{\left\{  {R}_{n}\left( {\mathbf{w}}_{t}^{0}\right) \right\}  }^{G}}{{\xi }_{n}^{2G}}\notag\\
	= &{o}_{p}\left( 1\right) \text{ . }
\end{align}
Equation (A.58) is obtained by applying the derivation steps in (A.55) to (A.60).

\noindent\textbf{Proof of Theorem 2}

When $\bm{\Omega}$ is replaced by $\widehat{\bm{\Omega}}$, ${C}_{n}(\mathbf{w})$ under the known $\bm{\Omega}$ case is correspondingly changed to
\begin{align}
	{\widehat{C}}_{n}\left( \mathbf{w}\right)  &= {C}_{n}\left( \mathbf{w}\right)  + 2\{ \operatorname{tr}\left( {\mathbf{P}\left( \mathbf{w}\right) \widehat{\bm{\Omega} }}\right)  - \operatorname{tr}\left( {\mathbf{P}\left( \mathbf{w}\right) \bm{\Omega} }\right) \} \notag\\
	&= {C}_{n}\left( \mathbf{w}\right)  + 2(\widehat{\sigma}^2_{M^*}-\sigma^2)\operatorname{tr}\left( {\mathbf{P}\left( \mathbf{w}\right)}\right).\notag
\end{align}
From the result of Theorem 1, it suffices to prove that,
\begin{equation}
	\mathop{\sup }\limits_{{\mathbf{w} \in  {\mathcal{H}}_{n}}}\left| (\widehat{\sigma}^2_{M^*}-\sigma^2)\operatorname{tr}\left( {\mathbf{P}\left( \mathbf{w}\right)}\right)\right| /{R}_{n}\left( \mathbf{w}\right)  = {o}_{p}\left( 1\right) . 
\end{equation}

Under the scenario where all candidate models are incorrect, Condition 7 implies that  $\widehat{\sigma}^2_{M^*}=O_p(1)$. According to Lemma 2 and Condition 8, (A.61) holds.

\noindent\textbf{Proof of Theorem 3}

Define $\mathcal{H}_T = \{\mathbf{w} \in \mathcal{H}_n: w_m = 0, m = M_0+1, \dots, M\}$ as the set of weight vectors that assign zero weight to in-quasi-correct models. Note that, we rewrite our criterion $C_n(\mathbf{w})$ as
\begin{align*}
	C_{n}(\mathbf{w}) =&\|\widehat{\bm{\mu}}(\mathbf{w})-\mathbf{Y}\|^2+2\operatorname{tr}(\mathbf{P}(\mathbf{w})\bm{\Omega}) \\
	=&\|\bm{\mu}^{*}(\mathbf{w})-\bm{\mu}+\widehat{\bm{\mu}}(\mathbf{w})-\bm{\mu}^{*}(\mathbf{w})-\bm{\epsilon}\|^2+2\operatorname{tr}(\mathbf{P}(\mathbf{w})\bm{\Omega}) \\
	=&\|\bm{\mu}^{*}(\mathbf{w})-\bm{\mu}\|^2+\|\widehat{\bm{\mu}}(\mathbf{w})-\bm{\mu}^{*}(\mathbf{w})\|^2+\|\bm{\epsilon}\|^2\\
	&+2(\bm{\mu}^{*}(\mathbf{w})-\bm{\mu})^{\mathrm{T}}(\widehat{\bm{\mu}}(\mathbf{w})-\bm{\mu}^{*}(\mathbf{w}))-2(\bm{\mu}^{*}(\mathbf{w})-\bm{\mu})^{\mathrm{T}}\bm{\epsilon}\\
	&-2(\widehat{\bm{\mu}}(\mathbf{w})-\bm{\mu}^{*}(\mathbf{w}))^{\mathrm{T}}\bm{\epsilon}+2\operatorname{tr}(\mathbf{P}(\mathbf{w})\bm{\Omega}).
\end{align*}
Hence, we can calculate the value of $C_n(\mathbf{w})$ at $\mathbf{w}_{0}=(w_{1}^{0},w_{2}^{0},\cdots,w_{M_{0}}^{0},0,\cdots,0) \in \mathcal{H}_{T}$. It is show that
\begin{align}
	C_{n}(\mathbf{w}_0) &=\|\widehat{\bm{\mu}}(\mathbf{w}_0)-\mathbf{Y}\|^2+2\operatorname{tr}(\mathbf{P}(\mathbf{w}_0)\bm{\Omega}) \notag\\
	&=\|\widehat{\bm{\mu}}(\mathbf{w}_0)-\bm{\mu}^{*}(\mathbf{w}_0)\|^2-2(\widehat{\bm{\mu}}(\mathbf{w}_0)-\bm{\mu}^{*}(\mathbf{w}_0))^{\mathrm{T}}\bm{\epsilon}+ \|\bm{\epsilon}\|^2+2\operatorname{tr}(\mathbf{P}(w_0)\bm{\Omega}) \notag\\
	&=C_{n1}(\mathbf{w}_0)-2C_{n2}(\mathbf{w}_0)+2C_{n3}(\mathbf{w}_0)+\|\bm{\epsilon}\|^2,
\end{align}
where
\begin{align}
	C_{n1}(\mathbf{w}_0) &= \|\widehat{\bm{\mu}}(\mathbf{w}_0)-\bm{\mu}^{*}(\mathbf{w}_0)\|^2,  \\
	C_{n2}(\mathbf{w}_0) &= (\widehat{\bm{\mu}}(\mathbf{w}_0)-\bm{\mu}^{*}(\mathbf{w}_0))^{\mathrm{T}}\bm{\epsilon}, 
\end{align}
and
\begin{equation}
	C_{n3}(\mathbf{w}_0)=\operatorname{tr}(\mathbf{P}(\mathbf{w}_0)\bm{\Omega}).
\end{equation}
Next, we examine each term in (A.62). For (A.63), by Condition 4 and Lemma 3, we observe that
\begin{align}
	C_{n1}(\mathbf{w}_0) &= \|\widehat{\bm{\mu}}(\mathbf{w}_0)-\bm{\mu}^{*}(\mathbf{w}_0)\|^2\notag\\
	&\leq \mathop{\sum }\limits_{{i = 1}}^{n}\mathop{\sum }\limits_{{m = 1}}^{M_0}|\mathbf{X}_{(m)i}^{\mathrm{T}}(\widehat{\bm{\beta}}_{(m)}(\mathbf{s}_i)-\bm{\beta}_{(m)}^{*}(\mathbf{s}_i))|^2\notag\\
	&\leq\mathop{\sum }\limits_{{i = 1}}^{n}\mathop{\sum }\limits_{{m = 1}}^{M_0}\|\mathbf{X}_{(m)i}\|^2\|\widehat{\bm{\beta}}_{(m)}(\mathbf{s}_i)-\bm{\beta}_{(m)}^{*}(\mathbf{s}_i)\|^2\notag\\
	&=O_p(M_0\widetilde{p}^4+\underline{h}^{-2}M_0\widetilde{p}^2+nM_0\widetilde{p}^2\bar{h}^4).
\end{align}
For (A.64), we have 
\begin{equation}
	E\{(\widehat{\bm{\mu}}(\mathbf{w}_0)-\bm{\mu}^{*}(\mathbf{w}_0))^{\mathrm{T}}\bm{\epsilon}\}=0.
\end{equation}
By Condition 11 and (A.66), we obtain
\begin{align}
	&\|\widehat{\bm{\mu}}(\mathbf{w}_0)-\bm{\mu}^{*}(\mathbf{w}_0)\|^2/n\notag\\
	=&O_p(\xi_F/n)O_p(\xi_F^{-1}M_0\widetilde{p}^4+\xi_F^{-1}\underline{h}^{-2}M_0\widetilde{p}^2+\xi_F^{-1}nM_0\widetilde{p}^2\bar{h}^4)\notag\\
	=&o_p(1).
\end{align}
Consequently, by the law of large numbers,
\begin{equation}
	\mathrm{var}\{(\widehat{\bm{\mu}}(\mathbf{w}_0)-\bm{\mu}^{*}(\mathbf{w}_0))^{\mathrm{T}}\bm{\epsilon}\}=O(n),
\end{equation}
which implies
\begin{align}
	|C_{n2}(\mathbf{w}_0)|&=|(\widehat{\bm{\mu}}(\mathbf{w}_0)-\bm{\mu}^{*}(\mathbf{w}_0))^{\mathrm{T}}\bm{\epsilon}|\notag\\
	&=O_p(n^{1/2}).
\end{align}
For (A.65), applying Lemma 2 yields
\begin{align}
	C_{n3}(\mathbf{w}_0)&=\operatorname{tr}(\mathbf{P}(\mathbf{w}_0)\bm{\Omega})\notag\\
	&\leq\mathop{\max }\limits_{{1 \leq  m \leq  M_0}}\operatorname{tr}(\mathbf{P}_{(m)}\bm{\Omega})\notag\\
	&=O_p({\underline{h}}^{-2}\widetilde{p}).
\end{align}
Combining (A.63)-(A.65), we obtain
\begin{equation}
	C_{n}(\mathbf{w}_0)-\|\bm{\epsilon}\|^2=O_p(M_0\widetilde{p}^4+\underline{h}^{-2}M_0\widetilde{p}^2+nM_0\widetilde{p}^2\bar{h}^4+n^{1/2}).
\end{equation}

Next, we calculate the value of $C_{n}(\mathbf{w})$ at $\widehat{\mathbf{w}}$. It is obvious seen that
\begin{align}
	C_{n}(\widehat{\mathbf{w}})=&\|\widehat{\bm{\mu}}(\widehat{\mathbf{w}})-\mathbf{Y} \|^{2}+2\operatorname{tr}(\mathbf{P}(\widehat{\mathbf{w}})\bm{\Omega}) \notag\\
	=&L_n^{*}(\widehat{\mathbf{w}})+\|\widehat{\bm{\mu}}(\widehat{\mathbf{w}})-\bm{\mu}^{*}(\widehat{\mathbf{w}})\|^2+\|\bm{\epsilon}\|^2+2(\bm{\mu}^{*}(\widehat{\mathbf{w}})-\bm{\mu})^{\mathrm{T}}(\widehat{\bm{\mu}}(\widehat{\mathbf{w}})-\bm{\mu}^{*}(\widehat{\mathbf{w}}))\notag\\
	&-2(\bm{\mu}^{*}(\widehat{\mathbf{w}})-\bm{\mu})^{\mathrm{T}}\bm{\epsilon}-2(\widehat{\bm{\mu}}(\widehat{\mathbf{w}})-\bm{\mu}^{*}(\widehat{\mathbf{w}}))^{\mathrm{T}}\bm{\epsilon}+2\operatorname{tr}(\mathbf{P}(\widehat{\mathbf{w}})\bm{\Omega}).
\end{align}
Similar to (A.66), by Condition 4 and Lemma 3,
\begin{align}
	&\|\widehat{\bm{\mu}}(\widehat{\mathbf{w}})-\bm{\mu}^{*}(\widehat{\mathbf{w}})\|^2\notag\\
	\leq &\mathop{\sum }\limits_{{i = 1}}^{n}\mathop{\sum }\limits_{{m = 1}}^{M}|\mathbf{X}_{(m)i}^{\mathrm{T}}(\widehat{\bm{\beta}}_{(m)}(\mathbf{s}_i)-\bm{\beta}_{(m)}^{*}(\mathbf{s}_i))|^2\notag\\
	\leq&\mathop{\sum }\limits_{{i = 1}}^{n}\mathop{\sum }\limits_{{m = 1}}^{M}\|\mathbf{X}_{(m)i}\|^2\|\widehat{\bm{\beta}}_{(m)}(\mathbf{s}_i)-\bm{\beta}_{(m)}^{*}(\mathbf{s}_i)\|^2\notag\\
	=&O_p(M\widetilde{p}^4+\underline{h}^{-2}M\widetilde{p}^2+nM\widetilde{p}^2\bar{h}^4).
\end{align}
Combining Condition 11 and(A.74), and following a derivation similar to (A.66)-(A.70), we obtain
\begin{equation}
	|(\widehat{\bm{\mu}}(\widehat{\mathbf{w}})-\bm{\mu}^{*}(\widehat{\mathbf{w}}))^{\mathrm{T}}\bm{\epsilon}|=O_p(n^{1/2}).
\end{equation}
By Condition 10, we have
\begin{align}
	&\|\bm{\mu}^{*}(\widehat{\mathbf{w}})-\bm{\mu}\|^2\notag\\
	\leq&\sum_{t=1}^{M}\sum_{m=1}^{M}\widehat{w}_t\widehat{w}_m(\bm{\mu}^{*}_{(t)}-\bm{\mu})^{\mathrm{T}}(\bm{\mu}^{*}_{(m)}-\bm{\mu})\notag\\
	\leq&\sum_{t=1}^{M}\sum_{m=1}^{M}\widehat{w}_t\widehat{w}_m\|\bm{\mu}^{*}_{(t)}-\bm{\mu}\|\|\bm{\mu}^{*}_{(m)}-\bm{\mu}\|\notag\\
	=&O_p(n).
\end{align}
Furthermore, from Condition 11 and (A.76), and using an argument analogous to (A.66)-(A.70), we have
\begin{equation}
	|(\bm{\mu}^{*}(\widehat{\mathbf{w}})-\bm{\mu})^{\mathrm{T}}\bm{\epsilon}|=O_p(n^{1/2}).
\end{equation}
By the Cauchy–Schwarz inequality, we obtain
\begin{equation}
	|(\bm{\mu}^{*}(\widehat{\mathbf{w}})-\bm{\mu})^{\mathrm{T}}(\widehat{\bm{\mu}}(\widehat{\mathbf{w}})-\bm{\mu}^{*}(\widehat{\mathbf{w}}))|=O_p(L_n^{*1/2}(\widehat{\mathbf{w}}))O_p(\|\widehat{\bm{\mu}}(\widehat{\mathbf{w}})-\bm{\mu}^{*}(\widehat{\mathbf{w}})\|).
\end{equation}
Similarly to (A.65), it follows from Lemma 2 that
\begin{align}
	\operatorname{tr}(\mathbf{P}(\widehat{\mathbf{w}})\bm{\Omega})
	&\leq\mathop{\max }\limits_{{1 \leq  m \leq  M}}\operatorname{tr}(\mathbf{P}_{(m)}\bm{\Omega})\notag\\
	&=O_p({\underline{h}}^{-2}\widetilde{p}).
\end{align}
On the other hand, we observe that
\begin{align}
	L_n^{*}(\widehat{\mathbf{w}})&=\|\bm{\mu}^{*}(\widehat{\mathbf{w}})-\bm{\mu}\|^2\notag\\
	&=\left\|\left(\sum_{m=1}^{M}\widehat{w}_m(\bm{\mu}^{*}_{(m)}-\bm{\mu})\right)\right\|^2\notag\\
	&=\left\|\left(\sum_{m=1}^{M_0}\widehat{w}_m(\bm{\mu}^{*}_{(m)}-\bm{\mu})+\sum_{m=M_0+1}^{M}\widehat{w}_m(\bm{\mu}^{*}_{(m)}-\bm{\mu})\right)\right\|^2\notag\\
	&=\left\|\left((1-\widehat{\tau}_{M_0})\sum_{m=M_0+1}^{M}\frac{\widehat{w}_m}{1-\widehat{\tau}_{M_0}}(\bm{\mu}^{*}_{(m)}-\bm{\mu})\right)\right\|^2\notag\\
	&=(1-\widehat{\tau}_{M_0})^2\left\|\left(\sum_{m=M_0+1}^{M}\frac{\widehat{w}_m}{1-\widehat{\tau}_{M_0}}(\bm{\mu}^{*}_{(m)}-\bm{\mu})\right)\right\|^2\notag\\
	&=(1-\widehat{\tau}_{M_0})^2\left\|(\bm{\mu}^{*}(\widehat{\mathbf{w}}_F)-\bm{\mu})\right\|^2\notag\\
	&=(1-\widehat{\tau}_{M_0})^2L^{*}_n(\widehat{\mathbf{w}}_F),
\end{align}
where $\widehat{\mathbf{w}}_F = \left( 0, \dots, 0, \frac{\widehat{w}_{M_0+1}}{1 - \widehat{\tau}_{M_0}}, \dots, \frac{\widehat{w}_{M}}{1 - \widehat{\tau}_{M_0}} \right)^{\mathrm{T}}$. Thus by (A.73), (A.75),  and (A.77)-(A.79), we have 
\begin{align}
	C_{n}(\widehat{\mathbf{w}})-\|\bm{\epsilon}\|^2=&L_n^{*}(\widehat{\mathbf{w}})+\|\widehat{\bm{\mu}}(\widehat{\mathbf{w}})-\bm{\mu}^{*}(\widehat{\mathbf{w}})\|^2+O_p(L_n^{*1/2}(\widehat{\mathbf{w}}))O_p(\|\widehat{\bm{\mu}}(\widehat{\mathbf{w}})-\bm{\mu}^{*}(\widehat{\mathbf{w}}))\notag\\
	&+O_p(n^{1/2}+{\underline{h}}^{-2}\widetilde{p}).
\end{align}
From the definitions of $\widehat{\mathbf{w}}$ and $\mathbf{w}_0$, it follows that
$$C(\widehat{\mathbf{w}})-\|\bm{\epsilon}\|^2\leq C(\mathbf{w}_{0})-\|\bm{\epsilon}\|^2.$$
By (A.72), (A.74), (A.80), (A.81) and Condition 11, we have $(1-\widehat{\tau}_{M_0})^2=o_p(1)$, hence the Theorem 3 has been proved.

\noindent\textbf{Proof of Corollary 1}

Notice that
\begin{equation}
	\widehat{\mu}_i(\widehat{\mathbf{w}})-\mu_i=\sum_{m=1}^{M_0}\widehat{w}_m(\widehat{\mu}_{(m)i}-\mu_i)+\sum_{m=M_0+1}^{M}\widehat{w}_m(\widehat{\mu}_{(m)i}-\mu_i).
\end{equation}
By Cauchy-Schwarz inequality, we have that
\begin{equation}
	(\widehat{\mu}_i(\widehat{\mathbf{w}})-\mu_i)^2\leq2\left(|\sum_{m=1}^{M_0}\widehat{w}_m(\widehat{\mu}_{(m)i}-\mu_i)|^2+|\sum_{m=M_0+1}^{M}\widehat{w}_m(\widehat{\mu}_{(m)i}-\mu_i)|^2\right).
\end{equation}
Next, we need to prove that the two terms of right hand side of (A.83) are both $o_p(1)$. For
the quasi-correct models among the candidates, according to Lemma 3, condition 4, and condition 11, we have
\begin{align}
	&|\sum_{m=1}^{M_0}\widehat{w}_m(\widehat{\mu}_{(m)i}-\mu_i)|^2\notag\\
	\leq&\sum_{m=1}^{M_0}|\mathbf{X}_{(m)i}^{\mathrm{T}}(\widehat{\bm{\beta}}_{(m)}(\mathbf{s}_i)-\bm{\beta}_{(m)}(\mathbf{s}_i))|^2\notag\\
	\leq&\sum_{m=1}^{M_0}\|\mathbf{X}_{(m)i}\|^2\|\widehat{\bm{\beta}}_{(m)}(\mathbf{s}_i)-\bm{\beta}_{(m)}(\mathbf{s}_i)\|^2\notag\\
	=&O_p(\xi_F/n)O_p(\xi_F^{-1}M_0\widetilde{p}^4+\xi_F^{-1}\underline{h}^{-2}M_0\widetilde{p}^2+\xi_F^{-1}nM_0\widetilde{p}^2\bar{h}^4)\notag\\
	=&o_p(1).
\end{align}
By Cauchy-Schwarz inequality, we have that
\begin{equation}
	|\sum_{m=M_0+1}^{M}\widehat{w}_m(\widehat{\mu}_{(m)i}-\mu_i)|^2\leq2|\sum_{m=M_0+1}^{M}\widehat{w}_m(\widehat{\mu}_{(m)i}-\mu^*_{(m)i})|^2+2|\sum_{m=M_0+1}^{M}\widehat{w}_m(\mu^*_{(m)i}-\mu_i)|^2.
\end{equation}
Following an argument similar to that in (A.84), and applying Lemma 3, Condition 4, and Condition 11, we have that
\begin{align}
	&|\sum_{m=M_0+1}^{M}\widehat{w}_m(\widehat{\mu}_{(m)i}-\mu^*_{(m)i})|^2\notag\\
	\leq&(1-\widehat{\tau}_{M_0})^2\sum_{m=M_0+1}^{M}|\mathbf{X}_{(m)i}^{\mathrm{T}}(\widehat{\bm{\beta}}_{(m)}(\mathbf{s}_i)-\bm{\beta}_{(m)}^{*}(\mathbf{s}_i))|^2\notag\\
	\leq&(1-\widehat{\tau}_{M_0})^2\sum_{m=M_0+1}^{M}\|\mathbf{X}_{(m)i}\|^2\|\widehat{\bm{\beta}}_{(m)}(\mathbf{s}_i)-\bm{\beta}_{(m)}^{*}(\mathbf{s}_i)\|^2\notag\\
	=&(1-\widehat{\tau}_{M_0})^2O_p(\xi_F/n)O_p(\xi_F^{-1}(M-M_0)\widetilde{p}^4+\xi_F^{-1}\underline{h}^{-2}(M-M_0)\widetilde{p}^2+\xi_F^{-1}n(M-M_0)\widetilde{p}^2\bar{h}^4)\notag\\
	=&o_p(1).
\end{align}
By Theorem 3 and Condition 10, we can obtain that
\begin{align}
	&|\sum_{m=M_0+1}^{M}\widehat{w}_m(\mu^*_{(m)i}-\mu_i)|^2\notag\\
	=&(1-\widehat{\tau}_{M_0})^2\left|\sum_{m=M_0+1}^{M}\frac{\widehat{w}_m}{1-\widehat{\tau}_{M_0}} (\mu^*_{(m)i}-\mu_i)\right|^2\notag\\
	=&o_p(1)O_p(1)\notag\\
	=&o_p(1).
\end{align}
By (A.86) and (A.87), we have that $|\sum_{m=M_0+1}^{M}\widehat{w}_m(\widehat{\mu}_{(m)i}-\mu_i)|^2=o_p(1)$. Along with (A.84), $(\widehat{\mu}_i(\widehat{\mathbf{w}})-\mu_i)^2$ is $o_p(1)$. Corollary 1 has been proved.

\noindent\textbf{Proof of Theorem 4}

According to the definition of $\widehat{\sigma}^2_{M^*}$, together with Condition 10, and Condition 11, we have
\begin{equation}
	\widehat{\sigma}^2_{M^*} = O_p(1).
\end{equation}
By replacing $\bm{\Omega}$ with $\bm{\widehat{\Omega}}$ in the proof of Theorem 3 and following the same steps, Theorem 4 is established.

\noindent\textbf{Proof of Corollary 2}

Based on Theorem 4 and following the same steps as in the proof of Corollary 1, Corollary 2 is established.


\begin{thebibliography}{}



\bibitem[Anselin, 1988]{Anselin1988}
\newblock Anselin, L. (1988). 
\newblock Spatial Econometrics: Methods and Models. \newblock {\em Springer Netherlands}.
	
\bibitem[Buckland et~al., 1997]{BBA97} 
Buckland S T, Burnham K P, Augustin N H. (1997).
\newblock Model selection: an integral part of inference. 
\newblock {\em Biometrics}, 603--618.

\bibitem[Brunsdon et~al., 1998]{BFC98} 
Brunsdon C, Fotheringham S, Charlton M. (1998).
\newblock Geographically weighted regression. 
\newblock {\em Journal of the Royal Statistical Society: Series D (The Statistician)}, 47: 431--443.

\bibitem[Brunsdon et~al., 1999]{BFC1999}
Brunsdon, C., Fotheringham, A. S., Charlton, M. (1999).
\newblock  Some notes on parametric significance tests for geographically weighted regression.
\newblock {\em Journal of regional science}, 39: 497--524.

\bibitem[Comber et~al., 2023]{CBCD23} 
Comber, A., Brunsdon, C., Charlton, M., Dong, G., Harris, R., Lu, B., ... \& Harris, P. (2023).
\newblock A route map for successful applications of geographically weighted regression.
\newblock {\em Geographical Analysis}, 55, 155--178.

\bibitem[Comber et~al., 2018]{CWLZH18} 
Comber, A., Wang, Y., Lü, Y., Zhang, X., Harris, P. (2018).
\newblock Hyper-local geographically weighted regression: extending GWR through local model selection and local bandwidth optimization. 
\newblock {\em Journal of Spatial Information Science}, 17: 63--84.

\bibitem[Fotheringham et~al., 2015]{FCY2015}
Fotheringham, A. S., Crespo, R., Yao, J. (2015).
\newblock  Geographical and temporal weighted regression (GTWR).
\newblock {\em Geographical Analysis}, 47: 431--452.

\bibitem[Fotheringham et~al., 2017]{FYK2017}
Fotheringham, A. S., Yang, W., Kang, W. (2017).
\newblock  Multiscale geographically weighted regression (MGWR).
\newblock {\em Annals of the American Association of Geographers}, 107: 1247--1265.

\bibitem[Guo and Zhang, 2024]{GZ2024}
Guo, C., Zhang, W. (2024).
\newblock  Model-averaging-based semiparametric modeling for conditional quantile prediction.
\newblock {\em Science China Mathematics}, 67: 2843--2872.

\bibitem[Hansen, 2007]{H2007} 
Hansen B E. (2007).
\newblock Least squares model averaging. 
\newblock {\em Econometrica}, 75: 1175--1189.

\bibitem[Hansen and Racine, 2012]{HR2012}
Hansen, B.E., Racine, J.S. (2012).
\newblock  Jackknife model averaging.
\newblock {\em Journal of Econometrics}, 167: 38--46.

\bibitem[Hoeting et~al., 2006]{HDM06}
Hoeting J A, Davis R A, Merton A A, et al. (2006).
\newblock  Model selection for geostatistical models.
\newblock {\em Ecological Applications}, 16: 87--98.

\bibitem[Ji et~al., 2025]{JLL2025}
Ji A., Li J., Li Q. (2025).
\newblock Model averaging for spatial autoregressive panel data models.
\newblock {\em Spatial Statistics}, 70: 100931.

\bibitem[Jiang et~al., 2025]{JLLC2025}
Jiang, B., Lv, J., Li, J., Cheng, M.Y. (2025).
\newblock  Robust model averaging prediction of longitudinal response with ultrahigh-dimensional covariates.
\newblock {\em Journal of the Royal Statistical Society, Series B}, 87: 337--361.

\bibitem[Kim and Wang, 2021]{KW21}
Kim, M., Wang, L. (2021).
\newblock  Generalized spatially varying coefficient models.
\newblock {\em Journal of Computational and Graphical Statistics}, 30: 1--10.

\bibitem[Kim et~al., 2025]{KWW25}
Kim, M., Wang, L., Wang, H.J. (2025).
\newblock  Estimation and inference of quantile spatially
varying coefficient models over complicated domains.
\newblock {\em Journal of the American Statistical Association}, 120: 1853--1867.

\bibitem[Liu and Li, 2024]{LL2024}
Liu, P., Li, J. (2024).
\newblock  Segment regression model average with multiple threshold variables
and multiple structural breaks.
\newblock {\em The Canadian Journal of Statistics}, 52: 131--161.

\bibitem[Li et~al., 2022]{LLWL22}
Li, J., Lv, J., Wan, A., Liao, J. (2022).
\newblock  Adaboost semiparametric model averaging prediction for multiple
categories.
\newblock {\em Journal of the American Statistical Association}, 117: 495--509.

\bibitem[Liao et~al., 2019]{LZG2019}
Liao, J., Zou, G., Gao, Y. (2019).
\newblock  Spatial Mallows model averaging for geostatistical models.
\newblock {\em The Canadian Journal of Statistics }, 47: 336--351.

\bibitem[Liao et~al., 2019]{LZZZ19}
Liao, J., Zong, X., Zhang, X., Zou, G. (2019).
\newblock  Model averaging based on leave-subject-out cross-validation for vector autoregressions.
\newblock {\em Journal of Econometrics}, 209: 35--60.

\bibitem[Miao et~al., 2025]{MFZW2025}
Miao, X., Fang, F., Zhu, X., Wang, H. (2025).
\newblock  Spatial weights matrix selection and model
averaging for multivariate spatial autoregressive
models .
\newblock {\em Econometric Reviews}, in press, https://doi.org/10.1080/07474938.2025.2560637.

\bibitem[Mu et~al., 2018]{MWW18}
Mu, J., Wang, J., Wang, L. (2018).
\newblock  Estimation and inference in spatially varying coefficient models.
\newblock {\em Environmetrics}, 29: e2485.

\bibitem[Mei et~al., 2016]{MXW2016}
Mei, C. L., Xu, M., Wang, N. (2016).
\newblock  A bootstrap test for constant coefficients in geographically weighted regression models.
\newblock {\em International Journal of Geographical Information Science}, 30: 1622--1643.

\bibitem[Seng and Li, 2022]{seng2022}
Seng, L., Li, J. (2022).
\newblock  Structural equation model averaging: methodology and application.
\newblock {\em Journal of Business and Economic Statistics}, 40: 815-828.

\bibitem[Shi et~al., 2024]{SZZ2024}
Shi, P., Zhang, X., Zhong, W. (2024).
\newblock  Estimating conditional average treatment effects with heteroscedasticity by model averaging and matching.
\newblock {\em Economics Letters}, 238, 111679.

\bibitem[Tobler, 1970]{Tobler}
Tobler, W.R. (1970).
\newblock  A computer movie simulating urban growth in the detroit region.
\newblock {\em Economic geography}, 46: 234--240.

\bibitem[Whittle, 1960]{Whittle} 
Whittle P. (1960).
\newblock Bounds for the moments of linear and quadratic forms in independent variables. 
\newblock {\em Theory of Probability \& Its Applications}, 5: 302--305.

\bibitem[Wang et~al., 2014]{WFMWQ2014}
Wang, S., Fang, C., Ma, H., Wang, Y., Qin, J. (2014).
\newblock  Spatial differences and multi-mechanism of carbon footprint based on GWR model in provincial China.
\newblock {\em Journal of Geographical Sciences}, 24: 612--630.

\bibitem[Wang et~al., 2009]{WZH2009}
Wang, Y., Zhang, X., Huang, C. (2009).
\newblock  Spatial variability of soil total nitrogen and soil total phosphorus under different land uses in a small watershed on the Loess Plateau, China.
\newblock {\em Geoderma}, 150, 141--149.

\bibitem[Wan et~al., 2010]{WZZ2010}
Wan, A. T., Zhang, X., Zou, G. (2010).
\newblock  Least squares model averaging by Mallows
criterion.
\newblock {\em Journal of Econometrics}, 156: 277--283.

\bibitem[Yang et~al., 2022]{YDT2022}
Yang, Y., Dogan, O., Taspinar, S. (2022).
\newblock  Model selection and model averaging for matrix exponential spatial models.
\newblock {\em Econometric Reviews}, 41: 827--858.

\bibitem[Yu et~al., 2025]{YWW25}
Yu, S., Wang, G., Wang, L. (2025).
\newblock  Distributed heterogeneity learning for generalized partially linear models with spatially varying coefficients.
\newblock {\em Journal of the American Statistical Association}, 120: 779--793.

\bibitem[Yu et~al., 2022]{YWWG2022}
Yu, S., Wang, Y., Wang, L., Gao, L. (2022).
\newblock  Spatiotemporal autoregressive partially linear varying coefficient models.
\newblock {\em Statistica Sinica}, 32: 2119--2146.

\bibitem[Zhu et~al., 2019]{ZWZZ19}
Zhu, R., Wan, A. T. K., Zhang, X., Zou, G. (2019).
\newblock  A Mallows-type model averaging estimator for the varying-coefficient partially linear model.
\newblock {\em Journal of the American Statistical Association}, 114: 882--892.

\bibitem[Zhang and Yu, 2018]{ZY2018}
Zhang, X., Yu, J. (2018).
\newblock  Spatial weights matrix selection and model averaging for spatial autoregressive models.
\newblock {\em Journal of Econometrics}, 203: 1--18.	

\bibitem[Zhang et~al., 2020]{ZZLC20}
Zhang, X., Zou, G., Liang, H., Carroll, R.J. (2020).
\newblock  Parsimonious model averaging with a diverging number of parameters.
\newblock {\em Journal of the American Statistical Association}, 115: 972--984.

\end{thebibliography}

\end{document}